\documentclass[11pt,a4paper]{amsart}
\usepackage[foot]{amsaddr}
\usepackage[T1]{fontenc}

\usepackage{ifxetex}
\ifxetex
  \usepackage[no-math]{fontspec}
\else
\fi
\usepackage{amsmath}
\usepackage{amsfonts}
\usepackage{amssymb}
\usepackage{amsthm}
\usepackage{fullpage}
\usepackage{tablefootnote}
\usepackage{hyphenat}
\usepackage{microtype}
\RequirePackage{silence}
\usepackage{hyphenat}
\ifxetex
  \usepackage[libertine]{newtxmath}
\else
  \usepackage{newtxmath}
\fi
\usepackage[tt=false]{libertine}
\usepackage{caption}
\usepackage{tcolorbox}
\tcbuselibrary{breakable,skins}
\usepackage{bbm}
\usepackage{hyperref, color}
\hypersetup{colorlinks=true,citecolor=blue, linkcolor=blue, urlcolor=blue}
\usepackage[linesnumbered,boxed,ruled,vlined]{algorithm2e}
\usepackage{bm}
\usepackage{bbm}
\usepackage[numbers]{natbib}
\usepackage{xcolor}
\usepackage{enumerate}
\usepackage{enumitem}
\usepackage{ragged2e}
\usepackage{tabularx}
\usepackage{array}
\usepackage{longtable}
\usepackage{hhline}
\usepackage{multirow}
\newcolumntype{L}[1]{>{\=Raggedright\arraybackslash}p{#1}}
\newcolumntype{C}[1]{>{\centering\arraybackslash}m{#1}}
\newcolumntype{R}[1]{>{\=Raggedleft\arraybackslash}p{#1}}

\usepackage{makecell}
\usepackage{aliascnt}
\usepackage{prettyref}
\usepackage{footnote}
\usepackage{float}
\usepackage{fix-cm}
\makesavenoteenv{tabular}

\renewcommand{\epsilon}{\varepsilon}

\newtheorem{theorem}{Theorem}[section]
\newcommand{\newsharedtheorem}[2]{\newaliascnt{#1}{theorem}\newtheorem{#1}[#1]{#2}\aliascntresetthe{#1}}
\newsharedtheorem{observation}{Observation}
\newsharedtheorem{claim}{Claim}
\newtheorem*{claim*}{Claim}
\newsharedtheorem{condition}{Condition}
\newsharedtheorem{example}{Example}
\newsharedtheorem{fact}{Fact}
\newsharedtheorem{lemma}{Lemma}
\newsharedtheorem{proposition}{Proposition}
\newsharedtheorem{conjecture}{Conjecture}
\newsharedtheorem{corollary}{Corollary}
\theoremstyle{definition}

\newsharedtheorem{definition}{Definition}
\newsharedtheorem{remark}{Remark}
\newtheorem*{remark*}{Remark}
\newtheorem{assumption}{Assumption}

\newrefformat{thm}{Theorem~\ref{#1}}
\newrefformat{cond}{Condition~\ref{#1}}
\newrefformat{cor}{Corollary~\ref{#1}}
\newrefformat{def}{Definition~\ref{#1}}
\newrefformat{definition}{Definition~\ref{#1}}
\newrefformat{lem}{Lemma~\ref{#1}}
\newrefformat{lemma}{Lemma~\ref{#1}}
\newrefformat{Alg}{Algorithm~\ref{#1}}
\newrefformat{observation}{Observation~\ref{#1}}
\newrefformat{claim}{Claim~\ref{#1}}
\newrefformat{example}{Example~\ref{#1}}
\newrefformat{fact}{Fact~\ref{#1}}
\newrefformat{proposition}{Proposition~\ref{#1}}
\newrefformat{conjecture}{Conjecture~\ref{#1}}
\newrefformat{remark}{Remark~\ref{#1}}
\newrefformat{assumption}{Assumption~\ref{#1}}

\let\pref\prettyref
\newcommand{\Cref}[1]{\pref{#1}}

\newcommand{\poly}{\textnormal{poly}}

\newcommand{\abs}[1]{\left\vert#1\right\vert}
\newcommand{\set}[1]{\left\{#1\right\}}
\newcommand{\tuple}[1]{\left(#1\right)}

\newcommand{\tp}{\tuple}

\newcommand{\NP}{\ensuremath{\mathbf{NP}}}

\newcommand{\RP}{\ensuremath{\mathbf{RP}}}
\newcommand{\defeq}{\triangleq}

\newcommand{\fr}{\mathsf{free}}
\newcommand{\frin}{\mathsf{free}^{(1)}}

\newcommand{\vbl}{\textnormal{vbl}}

\newcommand{\tr}{\mathsf{Tree}}

\newcommand{\rrat}{\mathsf{RandMargEst}}

\newcommand{\reci}{\mathsf{RecipEst}}

\newcommand{\e}{\mathrm{e}}

\def\*#1{\boldsymbol{#1}}
\def\+#1{\mathcal{#1}}
\def\-#1{\mathrm{#1}}
\def\=#1{\mathbb{#1}}

\DeclareMathOperator{\oPr}{\mathbb{P}}
\DeclareMathOperator{\opr}{\mathbf{Pr}}
\renewcommand{\Pr}[2][]
{\ifthenelse{\isempty{#1}}
  {\oPr\left[#2\right]}
  {\oPr_{#1}\left[#2\right]}
} \newcommand{\pr}[2][]
{\ifthenelse{\isempty{#1}}
  {\opr\left[#2\right]}
  {\opr_{#1}\left[#2\right]}
}

\DeclareMathOperator{\oE}{\mathbb{E}}
\newcommand{\E}[2][]
{\ifthenelse{\isempty{#1}}
  {\oE\left[#2\right]}
  {\oE_{#1}\left[#2\right]}
} 

\DeclareMathOperator{\oVar}{\mathrm{Var}}
\newcommand{\Var}[2][]
{\ifthenelse{\isempty{#1}}
  {\oVar\left[#2\right]}
  {\oVar_{#1}\left[#2\right]}
}
\def\oEnt{\mathrm{Ent}}
\NewDocumentCommand{\Ent}{ O{} O{} m }{
  \ifthenelse{\isempty{#1}} {
    \ifthenelse{\isempty{#2}} {
      \oEnt\left[#3\right]
    } {
      \oEnt^{#2}\left[#3\right]
    }
  } {
    \ifthenelse{\isempty{#2}} {
      \oEnt_{#1}\left[#3\right]
    } {
      \oEnt_{#1}^{#2}\left[#3\right]
    }
  }
}

\usepackage[textsize=tiny]{todonotes}

\usepackage{xifthen}

\newcommand{\True}{\mathtt{True}}
\newcommand{\False}{\mathtt{False}}

\newcommand{\eps}{\varepsilon}
\newcommand{\dist}{\mathrm{dist}}
\newcommand{\one}[1]{\mathbf{1}\left[#1\right]}
\makeatletter
\def\prob#1#2#3{\goodbreak\begin{list}{}{\labelwidth\z@ \itemindent-\leftmargin
                        \itemsep\z@  \topsep6\p@\@plus6\p@
                        \let\makelabel\descriptionlabel}
                \item[\it Name]#1
               \item[\it Instance]                #2
                \item[\it Output]#3
                \end{list}}
\makeatother

\newbool{doubleblind}
\setbool{doubleblind}{false}

\title{A Counting Lov\'asz Local Lemma}

\ifdoubleblind
\author{Author(s)}
\else
\author{Hongyang Liu, Chunyang Wang, Yitong Yin,
  Yiyao Zhang, Can Zhou}

\address[Hongyang Liu, Yitong Yin, Yiyao Zhang, Can Zhou]
{School of Computer Science, State Key Laboratory for Novel Software
Technology, New Cornerstone Science Laboratory, Nanjing University,
Nanjing, China.
\textnormal{Email:
\texttt{liuhongyang@smail.nju.edu.cn,
yinyt@nju.edu.cn,
\{zhangyiyao,bzy.cirno\}@smail.nju.edu.cn}}}

\address[Chunyang Wang]
{National Institute of Informatics, Tokyo, Japan.
\textnormal{Email: \texttt{c\_wang@nii.ac.jp}}}

\fi

\begin{document}

\begin{abstract}
We establish a counting analogue of the Lov\'asz Local Lemma: 
we give polynomial-time algorithms for approximately counting satisfying assignments of general constraint satisfaction problems (CSPs) in the local lemma regime
\[
4 \e\cdot p\cdot (D+1)^2\leq 1,
\]
where $p$ is the maximum constraint violation probability and $D$ is the maximum dependency degree.

This condition is tight up to constant factors, matching known lower bounds $pD^2\gtrsim 1$ for approximate counting in natural subclasses of CSPs.
The core of our approach is a novel \emph{$2$-tree expansion} for constraint marginal probabilities, which captures the decay of correlations in the local lemma regime. 
\end{abstract}	

\maketitle

\setcounter{tocdepth}{1}
\tableofcontents

\section{Introduction}

Constraint Satisfaction Problems (CSPs) provide a fundamental framework for studying combinatorial solution spaces and modeling a wide range of problems in computer science. 
A CSP consists of a collection of local constraints over a set of variables, each taking values from a finite domain, and a solution is an assignment that simultaneously satisfies all constraints. 
Determining whether a CSP instance admits a solution is one of the central problems in the theory of computation, leading to landmark results including~\cite{cook1971complexity,schaefer1978complexity,feder1998computational,
bulatov2017dichotomy,zhuk2020dichotomy}. 

A fundamental tool for proving the existence of solutions is the Lov\'asz Local Lemma (LLL)~\cite{LocalLemma}, which guarantees satisfiability even when constraints are individually unlikely to be violated but may exhibit limited dependencies. 
In its symmetric form, the LLL asserts that a CSP is satisfiable whenever
\[
\e p(D+1)\leq 1,
\]
where $p$ is the maximum violation probability of a constraint under a uniform random assignment, and $D$ is the maximum degree of the constraint dependency graph. 
Shearer~\cite{shearer85} later showed that this condition is essentially tight, establishing the sharp threshold for the existence of satisfying assignments in the local lemma regime.

Approximately counting the number of solutions is an emerging direction in the study of CSP solution spaces. Unlike satisfiability, which only asks whether a solution exists, approximate counting aims to estimate the total number of solutions and thus captures the global statistical structure of the solution space. 
In particular, when CSPs are viewed as statistical physics systems with high-order interactions, approximate counting corresponds to estimating partition functions and performing statistical inference.

A central question in the computational complexity of approximate counting is to understand its \emph{computational phase transition}: under what conditions does approximate counting remain tractable, and where does it become intractable? Sharp computational phase transitions have been established for spin systems~\cite{weitz06counting,sly2010computation,li2013correlation,sly2014computational, sinclair2014approximation,galanis2016inapproximability}, which correspond to weighted CSPs with pairwise constraints. 
In contrast, for CSPs with general multi-variable constraints, the computational phase transition of approximate counting remains largely open.

This motivates the study of a \emph{counting Lov\'asz Local Lemma}, whose goal is to characterize the tractable regime for approximate counting using local-lemma-type conditions. 
Recent years have witnessed significant progress on this question~\cite{BGGGS19,Moi19,GJL19,guo2019counting,FGYZ20,feng2021sampling,vishesh21towards,vishesh21sampling, HSW21, he2022sampling,he2022counting,galanis2023inapproximability,wang2024sampling,mann2025approximate,barvinok2026computing}.

On the algorithmic side, the best known condition for approximate counting of general CSPs is $pD^5\lesssim 1$, due to He, Wang, and Yin~\cite{he2022counting}, where $\lesssim$ suppresses constants and lower-order factors.
For the subclass of atomic CSPs, where each constraint forbids a single local assignment, Wang and Yin obtained an algorithm under the nearly tight condition $pD^{2+o_q(1)}\lesssim 1$, where $q$ is the uniform domain size~\cite{wang2024sampling}.
However, for the most prominent special case of $k$-SAT, corresponding to Boolean domains with $q=2$, the best-known condition remains $pD^{4.82}\lesssim 1$~\cite{wang2024sampling}.
These results leave a substantial gap between the known upper bounds and the threshold suggested by hardness results.

On the hardness side, the condition $pD^2\lesssim 1$ is known to be necessary for approximate counting.
In particular, assuming $\NP\neq\RP$, \cite{BGGGS19,galanis2023inapproximability} show that approximate counting for natural subclasses of CSPs becomes intractable beyond the condition\footnote{The original lower bound in~\cite{BGGGS19} is stated as
$\Delta\geq 5\cdot 2^{k/2}$ for monotone $k$-SAT, where $\Delta$ is the maximum
variable degree. We verify that their reduction implies
\eqref{eq:counting-LLL-lower-bound} when expressed in terms of the parameters
$p$ and $D$.}
\begin{equation}\label{eq:counting-LLL-lower-bound}
    pD^2<4\e^2.
\end{equation}
This lower bound reveals a fundamental separation between the satisfiability threshold and the counting threshold in the local lemma regime. 
In contrast to satisfiability, where Shearer's bound characterizes the sharp threshold, approximate counting appears to exhibit a distinct computational phase transition.

Together, the known algorithmic upper bounds and hardness results raise a
fundamental question about the computational phase transition of approximate
counting for general CSPs:
\begin{center}
    \emph{Is $pD^2\lesssim 1$ the correct threshold for the computational tractability of approximate counting?}
\end{center}

\subsection{Our Results}

We answer the above question affirmatively by establishing a counting Lov\'asz Local Lemma for general CSPs that is tight up to constant factors.
Before stating our main results, we introduce the notation and parameters used throughout the paper.

A CSP instance (or CSP formula) is denoted by $\Phi=(V,\+Q,\+C)$, where $V$ is a set of $n=|V|$ variables. 
Each variable $v\in V$ is associated with a finite domain $Q_v$, and the configuration space is $\+Q\triangleq \bigotimes_{v\in V}Q_v$. 
The set $\+C$ consists of $m=\abs{\+C}$ constraints, where each constraint $c\in\+C$ is a Boolean function 
\[c:\+Q_{\vbl(c)}\rightarrow\{\True,\False\},\]
where $\+Q_{\vbl(c)}\defeq\bigotimes_{v\in\vbl(c)}Q_v$ and $\vbl(c)\subseteq V$ denotes the set of variables on which $c$ depends.

A configuration $\sigma\in\+Q$ is called a \emph{satisfying assignment} (or a \emph{solution}) if every constraint in $\+C$ is satisfied by $\sigma$.
The \emph{partition function} of $\Phi$, denoted by $Z_\Phi$, is the number of satisfying assignments of $\Phi$.

We use the following parameters associated with $\Phi=(V,\+Q,\+C)$:
\begin{itemize}
\item the \emph{domain size} $q\triangleq\max_{v\in V}\abs{Q_{v}}$;
    \item the \emph{width} $k\triangleq\max_{c\in \+{C}}\abs{ {\vbl}(c)}$;
    \item the \emph{dependency degree} 
    \[
    D\triangleq\max\limits_{c\in \+{C}}\abs{\{c'\in \+{C}\setminus \{c\}\mid \vbl(c)\cap \vbl(c')\neq\emptyset\}};
    \]
\item the \emph{violation probability} \[
    p\triangleq\max\limits_{c\in \+{C}}\mathbb{P}[\neg c],
    \]
    where $\mathbb{P}$ denotes the uniform product measure on $\+Q$.
\end{itemize}
We assume throughout that $q\ge2$, $k\ge2$, and $D\ge1$, as the remaining degenerate cases are trivial.

We consider the standard oracle model for CSPs, where constraint functions are accessed through the following evaluation oracle. 
This abstraction is necessary because arbitrary constraint functions, especially those involving a super-constant number of variables, may not admit an efficient explicit representation.

\begin{assumption}[Evaluation oracle]\label{assumption:evaluation-oracle}
There is an \emph{evaluation oracle} for $\Phi=(V,\+Q,\+C)$ such that, given a constraint $c\in\+C$ and an assignment $\sigma\in\+Q_{\vbl(c)}$, the oracle returns whether $c$ is satisfied by $\sigma$.
\end{assumption}
This oracle model is also commonly assumed in the study of the algorithmic
Lov\'asz Local Lemma, e.g.~in~\cite{moser2010constructive,harvey2020algorithmic}, where
constraint evaluations are treated as primitive operations.

The following condition defines the local lemma regime considered in this paper.

\begin{condition}[Counting LLL condition]
\label{cond:sym-counting-LLL}
    A CSP formula $\Phi=(V,\+Q,\+C)$ satisfies \begin{equation}\label{eq:sym-local-lemma-condition}
        4 \e\cdot p\cdot (D+1)^2\leq 1.
    \end{equation}
\end{condition}

Our main results are algorithms for approximate counting of CSP solutions in the local lemma regime specified by \Cref{cond:sym-counting-LLL}, under the evaluation oracle model in \Cref{assumption:evaluation-oracle}.

We first give a deterministic algorithm for estimating the number of CSP solutions.

\begin{theorem}[Main result: deterministic algorithm]
\label{thm:fptas-counting-LLL}
    There exists a deterministic algorithm that, 
    given a CSP formula $\Phi$ satisfying \Cref{cond:sym-counting-LLL}, 
    access to the evaluation oracle in \Cref{assumption:evaluation-oracle}, 
    and an accuracy parameter $0<\varepsilon<1$, 
    outputs an estimate $\widehat Z_{\Phi}$ satisfying
    $$(1-\varepsilon)Z_{\Phi}\leq \widehat{Z}_{\Phi}\leq (1+\varepsilon)Z_{\Phi}.$$
    The total cost of the algorithm is at most $(nD/\varepsilon)^{O(kD\log q)}$ in both oracle queries and running time.
\end{theorem}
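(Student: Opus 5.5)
The plan is the standard self-reduction through constraint marginals. I order the constraints $c_1,\dots,c_m$ and let $\Phi_i$ be the formula with constraints $c_1,\dots,c_i$. Then
\[
Z_\Phi=\abs{\+Q}\prod_{i=1}^{m}\Pr[\mu_{\Phi_{i-1}}]{c_i},
\]
where $\mu_{\Phi_{i-1}}$ is the uniform distribution over solutions of $\Phi_{i-1}$. Each $\Phi_{i-1}$ still satisfies \Cref{cond:sym-counting-LLL}, since removing constraints does not increase $p$ or $D$. By the LLL each factor is at least $(1-\e p)^{D+1}$-ish; in particular it is bounded away from $0$. It therefore suffices to estimate each marginal $\Pr[\mu]{\neg c}$ within additive error $\varepsilon/(2m)$ in a deterministic way, and to multiply the resulting estimates of $1-\Pr[\mu]{\neg c}$.

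To estimate $\Pr[\mu]{\neg c}$, I would develop the ``$2$-tree expansion'' promised in the abstract. Write $\Pr[\mu]{\neg c}$ as a series over connected structures rooted at $c$ in the square of the dependency graph. The expansion comes from an inclusion–exclusion / cluster-type identity: $\Pr[\mu]{\neg c}=\Pr{\neg c\wedge\bigwedge_{c'\ne c}c'}/\Pr{\bigwedge c'}$, and the correlation between $c$ and far-away constraints is expanded recursively. At each step I reveal a violated neighbour, and the recursion terminates on sets of constraints forming a $2$-tree: a set of constraints pairwise at distance $\ge 2$ whose distance-$\le 2$ graph is connected.

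Each term of size $\ell$ is weighted by at most $(\e p)^{\ell}$, using the LLL bound $\Pr[\mu]{\neg c\mid\cdot}\le \e p$ under conditioning. The number of $2$-trees of size $\ell$ containing $c$ is at most $(\e D^2)^{\ell}$. Under $4\e p(D+1)^2\le1$ the tail beyond size $L$ is then at most $2^{-L}$. This is exactly where the $D^2$ (rather than $D$) appears, and it is why the constant $4\e$ suffices.

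Truncating at $L=O(\log(m/\varepsilon))$ leaves finitely many $2$-trees to enumerate, about $(eD^2)^L=(nD/\varepsilon)^{O(\log D)}$ of them. For each one I need the weight of its term. Each weight involves a probability over the variables covered by the tree, which is at most $kL$ variables together with neighbourhoods. I would compute these exactly by brute force over assignments, or recursively with the same expansion on smaller subformulas. The brute-force cost is $q^{O(kDL)}$, and $q^{kD\cdot O(\log(nD/\varepsilon))}$ is consistent with the stated bound $(nD/\varepsilon)^{O(kD\log q)}$.

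The main obstacle is proving the convergence of the $2$-tree expansion, i.e.\ that term weights decay like $(\e p)^{\ell}$. The difficulty is that the conditional marginals of constraints inside the expansion are taken under partially conditioned measures, where the plain LLL guarantee may degrade. My first attempt would be to show that the remaining formula always satisfies an asymmetric LLL with $x=\e p$ after the conditionings along a $2$-tree; the separation of tree nodes by distance $2$ keeps the conditionings from interfering directly. I would then use the local-lemma bound $\mu(\neg c\mid\text{conditioning})\le \e p\,(1-\e p)^{-D}$ repeatedly, and absorb the extra factor into the constant $4$.
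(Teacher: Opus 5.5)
You have a genuine gap. The self-reduction and the final cost bookkeeping match the paper, but the core object, the $2$-tree expansion, is never constructed. Your sketch (``reveal a violated neighbour, stop on $2$-trees'') is not an identity that can be bounded term by term.

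\emph{The missing resummation.} Inclusion--exclusion gives $r_{\mathcal{C},c_0}=\mathbb{P}[\mathcal{C}\setminus\{c_0\}]^{-1}\sum_{c_0\in\mathcal{D}\subseteq\mathcal{C}}(-1)^{|\mathcal{D}|-1}\mathbb{P}[\bigwedge_{c\in\mathcal{D}}\neg c]$, indexed by arbitrary sets. A set with no large independent subset (e.g.\ a clique of near-identical constraints) contributes $\Theta(p)$, so no weight-times-count bound over sets or connected clusters converges at $pD^2\lesssim 1$. What is missing is a map $\mathcal{D}\mapsto\tr(\mathcal{D})$ onto rooted $2$-trees whose fibres are exactly the Boolean intervals $\{T\cup S:S\subseteq\fr(T)\}$. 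The paper builds it by a greedy scan: repeatedly take the lowest-index constraint at distance exactly $2$ from the current tree, accept it if it lies in $\mathcal{D}$ and reject it otherwise; $\fr(T)$ is the set of constraints never examined. It is this interval structure that collapses the alternating sum over each fibre to the nonnegative quantity $\mathbb{P}[(\bigwedge_{c\in T}\neg c)\wedge\fr(T)]$, to which $p^{|T|}$-type bounds apply. By contrast, what you call the main obstacle (LLL bounds under partial conditioning) is a one-line application of Haeupler--Saha--Srinivasan, giving $\mathbb{P}[\bigwedge_{c\in T}\neg c\mid\fr(T)]\le p^{|T|}(1-1.2p)^{-D|T|}$.

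\emph{The weights are not local.} Even after resummation, ``enumerate $2$-trees of size at most $L$ and brute-force each weight'' does not work. The $T$-term is $\mathbb{P}[\bigwedge_{c\in T}\neg c\mid\fr(T)]\cdot\mathbb{P}[\fr(T)]/\mathbb{P}[\mathcal{C}\setminus\{c_0\}]$. The first factor is local: free constraints are never at distance exactly $2$ from $T$, so conditioning on $\fr(T)$ reduces to conditioning on $\frin(T)$, which again relies on the specific construction. The second factor is a ratio of global partition functions. The paper telescopes it into $\prod_{i=1}^{\ell^T}(1-r_{\mathcal{C}^T_i,c^T_i})^{-1}$ over the $\ell^T\le(D^2+1)|T|-1$ tree and rejected constraints. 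So the expansion is a \emph{recursion} for marginals of subformulas, not a finite series, and you then need two further ingredients:
\begin{enumerate}
\item[(i)] A truncation in which the budget drops by $|T|$ at each recursive step; this is what keeps the total cost at $q^{O(kDL)}$.
\item[(ii)] A stability argument. Because $p(D+1)^2=O(1)$, the $O(D^2|T|)$ reciprocal factors, each at most $(1-2p)^{-1}$, cost only a constant per tree node. This gives a bound $\beta^{|T|}$ on each term and a Lipschitz bound $\alpha\sum_T(1.2\beta)^{|T|}\max_i|x_i-y_i|$ under which errors in the inner marginals contract. Induction then yields $|\hat r^{(L)}-r|\le 4\cdot 0.9^L$, while keeping every truncated estimate, which may be negative, inside $[-2p,2p]$.
\end{enumerate}
This is the second place the factor $D^2$ enters, and it is absent from your plan.

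A minor slip: with your own bounds the per-node ratio is $\e p\cdot\e D^2\le\e/4\approx 0.68$, not $1/2$, though the tail is still geometric.
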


The computational cost  in \Cref{thm:fptas-counting-LLL} matches the state of the art for
deterministic approximate counting in the local lemma regime~\cite{Moi19,guo2019counting,vishesh21towards,he2022counting,wang2024sampling,feng2025toward}.

By using randomization, we obtain a substantially more efficient approximate counting algorithm.

\begin{theorem}[Main result: randomized algorithm]
\label{thm:fpras-counting-LLL}
There exists a randomized algorithm that, 
    given a CSP formula $\Phi$ satisfying \Cref{cond:sym-counting-LLL}, 
    access to the evaluation oracle in \Cref{assumption:evaluation-oracle}, 
    and an accuracy parameter $0<\varepsilon<1$, 
    outputs an estimate $\widehat Z_{\Phi}$ satisfying
    $$\pr{(1-\varepsilon)Z_{\Phi}\leq \widehat{Z}_{\Phi}\leq (1+\varepsilon)Z_{\Phi}}\ge \frac{3}{4}.$$
The total cost of the algorithm is at most $\poly(k,D,q)\cdot (n/\varepsilon)^{2}$ in both oracle queries and running time.
\end{theorem}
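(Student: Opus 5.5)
The plan is to reduce counting to estimating constraint marginals, estimate each marginal by Monte Carlo using the paper's $2$-tree expansion, and combine the estimates by a telescoping product. Order the constraints $c_1,\dots,c_m$ and let $\Phi_i$ be the formula with only $c_1,\dots,c_i$. Then
\[
Z_\Phi=\Big(\prod_{v}|Q_v|\Big)\prod_{i=1}^m \Pr[\mu_{i-1}]{c_i},
\]
where $\mu_{i-1}$ is the uniform distribution over solutions of $\Phi_{i-1}$. Every $\Phi_{i-1}$ still satisfies \Cref{cond:sym-counting-LLL}, since removing constraints can only decrease $p$ and $D$. By the LLL, each factor is at least $1-\e p\ge 1-1/(4(D+1)^2)$. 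So it suffices to estimate each $\Pr[\mu_{i-1}]{\neg c_i}$ to additive accuracy $O(\varepsilon/m)$ with good probability. Since $m\le nD$, this gives $\poly$ dependence.

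For the randomized estimator I will not run an MCMC sampler. Instead I use the $2$-tree expansion of the constraint marginal $\Pr[\mu]{\neg c}$. I expect it to express the violation probability as a signed sum over $2$-trees $T$ in the dependency graph rooted at $c$. Each term has the form $w(T)$ times a product of uniform-product-measure quantities. Under $4\e p(D+1)^2\le1$, a $2$-tree has at most $(\e (D+1)^2)^{|T|}$ choices per size, and each term is bounded by $p^{|T|}$. The series therefore converges absolutely with ratio $\le 1/4$.

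I will turn this into an unbiased estimator by randomized truncation. First sample a $2$-tree $T$ by a branching process: each node independently spawns children among its $O(D^2)$ distance-$\le2$ neighbours with appropriate probabilities. Then output the importance-weighted term $w(T)/\pi(T)$. The weights inside each term are probabilities under the uniform product measure, so each term can be estimated by drawing uniform assignments on $\vbl(T)$ and querying the oracle. The branching process is subcritical because of the constant $4\e$, so its expected size is $O(1)$, and the variance of the estimator is $\poly(D)$ times the squared mean. Averaging $O(\poly(k,D,q)\, m^2/\varepsilon^2)$ independent copies then gives the required accuracy for each factor. A more careful accounting, using the $1/m$ relative errors accumulated across the product together with a single median-of-means step, yields total cost $\poly(k,D,q)(n/\varepsilon)^2$, and Chebyshev gives success probability $3/4$.

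The main obstacle is the variance bound. The series is signed, so the estimator's magnitude is governed by $\sum_T |w(T)|$, not by the marginal itself, and the marginal is small. The correct target is therefore the complementary probability $\Pr[\mu]{c}\ge 1/2$. Relative error on this quantity only requires bounding $\sum_T|w(T)|\le O(p\,(D+1))$, which I would prove by the standard counting of $2$-trees (at most $(\e(D+1)^2)^t$ trees of size $t$) against the per-tree factor $p^{t}$. This is exactly where the factor $4\e(D+1)^2$ is used. Achieving the $(n/\varepsilon)^2$ cost instead of $m^3/\varepsilon^2$ additionally needs the variances to add across the $m$ factors. I would get this by estimating $\log Z$ as a sum of independent log-estimates, whose total variance is $O(m\cdot p D/\#\text{samples})$.
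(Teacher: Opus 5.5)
There is a genuine gap at the core of your estimator. The terms of the $2$-tree expansion are \emph{not} products of uniform-product-measure probabilities. So sampling one $2$-tree $T$ and outputting an importance-weighted local quantity, estimated from uniform assignments on $\vbl(T)$, does not give an unbiased estimate of the marginal. Activating $2$-trees through product-measure randomness does match the local part of the paper's estimator, but it is not enough. Regrouping inclusion--exclusion by canonical $2$-trees gives
\[
r_{\mathcal{C},c_0}=\sum_{T\in\mathfrak{T}_{\mathcal{C},c_0}}(-1)^{|T|-1}\,\mathbb{P}\Bigl[\bigl({\textstyle\bigwedge_{c\in T}}\neg c\bigr)\wedge\frin(T)\Bigr]\cdot\frac{\mathbb{P}[\mathcal{C}\setminus\Gamma^{\le 2}_{\mathcal{C}}(T)]}{\mathbb{P}[\mathcal{C}\setminus\{c_0\}]}.
\]
The last factor is a ratio of partition functions of essentially the whole instance. (In the other form of the expansion, the local factor is in addition a conditional probability given $\fr(T)$.) This global normalization is the whole difficulty, and your plan has no mechanism for it.

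The paper telescopes this ratio into $\prod_{i=1}^{\iota^T}(1-r_{\mathcal{D}^T_i,d^T_i})^{-1}$ over strictly smaller subinstances and estimates each factor \emph{recursively}. That requires three things:
(i) an unbiased estimator of $1/(1-x)$ from unbiased samples $Y$ of $x$. The paper uses a Russian-roulette series $W=\sum_{i\ge0}\mathbf{1}[N\ge i]\,Y_1\cdots Y_i/\theta^i$, with $\mathbb{E}[W^2]=\frac{1+x}{(1-x)(1-M/\theta)}$, which is finite only if $M=\mathbb{E}[Y^2]<\theta$;
(ii) therefore an inductive second-moment bound on the marginal estimator itself. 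The paper proves $\mathbb{E}[\widehat R_{\mathcal{C},c_0}^2]\le 40p$ jointly with unbiasedness, by induction on $|\mathcal{C}|$ using independence of the recursive calls, and averages $100$ copies to get $M<1/21$;
(iii) a proof that the \emph{recursion tree} is dominated by a subcritical Galton--Watson process, with expected offspring $5\sum_{t\ge1}(\e(D+1)^2)^{t-1}p^t(D+1)^2t<0.9$.

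Your ``subcritical branching process over $2$-trees'' conflates the $2$-tree with this recursion tree. The condition $4\e p(D+1)^2\le1$ must absorb not only the $2$-tree count but also the up to $(D+1)^2|T|$ reciprocal factors attached to each activated $T$.

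There are also three smaller issues.
\begin{itemize}
\item The marginal estimator's second moment is of order $p$, which can be far larger than $r^2$ ($r$ may even be $0$). So ``variance $\mathrm{poly}(D)$ times the squared mean'' is false. As you then note, what matters is relative error on $1-r$.
\item The $(n/\varepsilon)^2$ cost follows directly from the product $\widetilde Z_\Phi=\prod_v|Q_v|\prod_i(1-\overline R_{\mathcal{C}_i,c_i})$ of independent unbiased factors. One has $\mathbb{E}[\widetilde Z_\Phi^2]/Z_\Phi^2\le\exp(80pm/s)$, and $p\le 1/(4\e D^2)$ allows $s=O(1+m/(D^2\varepsilon^2))$, hence $ms=O(nD+n^2/\varepsilon^2)$. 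No logarithms or median-of-means are needed.
\item Such a recursive estimator only has \emph{expected} cost bounds. You need a Markov truncation step to obtain the worst-case cost claimed in the theorem.
\end{itemize}
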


Compared with the lower bound in \eqref{eq:counting-LLL-lower-bound}, our result achieves the optimal threshold for approximate counting of CSP solutions up to constant factors. 
Previous best-known results established the conditions $pD^5\lesssim 1$ for counting general CSPs~\cite{he2022counting} and $pD^{2+o_q(1)}\lesssim 1$ for atomic CSPs~\cite{wang2024sampling}; 
in particular, the latter gives the previous best-known condition $pD^{4.82}\lesssim 1$ for counting $k$-SAT.

\subsection{Technical Overview}
Our approach is based on estimating the incremental contribution of adding one constraint.  
For a constraint set $\+C$ and a constraint $c_0\in\+C$, define the marginal violation probability
\begin{equation}\label{eq:marginal-probability-def-technical-overview}
    r_{\+C,c_0}
    \defeq
    \Pr{\neg c_0\mid \+C\setminus\{c_0\}}
    =1-\frac{Z(\+C)}{Z(\+C\setminus\{c_0\})}.
\end{equation}
If the constraints are ordered as $\+C=\{c_1,\ldots,c_m\}$ and $\+C_i=\{c_1,\ldots,c_i\}$, then
\begin{equation}\label{eq:self-reduction-technical-overview}
    Z_{\Phi}
    =\left(\prod_{v\in V}|Q_v|\right)
      \prod_{i=1}^{m}\bigl(1-r_{\+C_i,c_i}\bigr).
\end{equation}
Therefore, approximate counting reduces to estimating a sequence of constraint marginals. 
This constraint-wise self-reduction is particularly suitable for the local lemma setting: deleting constraints can only decrease the dependency degree, so every subinstance $\+C_i$ continues to satisfy \Cref{cond:sym-counting-LLL}.

\subsubsection{The $pD^2$-Threshold and $2$-Trees}

The combinatorial structure underlying the decay of correlation between these marginals is the notion of a \emph{$2$-tree}. 
Let $G(\+C)$ be the constraint dependency graph. 
A $2$-tree is an independent set in $G(\+C)$ that is connected in the square graph $G^2(\+C)$. 
Such distance-two tree structures originate from Alon's work on the algorithmic Lov\'asz Local Lemma~\cite{alon1991parallel}, and have appeared in various counting and sampling algorithms in the local lemma regime~\cite{HSZ19,Moi19,guo2019counting,FGYZ20,vishesh21towards,
vishesh21sampling,HSW21,feng2021sampling,he2022sampling,he2022counting,
wang2024sampling,feng2025toward}.

A $2$-tree captures both the probabilistic and combinatorial aspects behind the counting threshold. 
If $T$ is a $2$-tree, then its constraints involve disjoint variable sets, and hence
\[
    \Pr{\bigwedge_{c\in T}\neg c}
    =\prod_{c\in T}\Pr{\neg c}
    \le p^{|T|}.
\]
Since $G^2(\+C)$ has maximum degree $O(D^2)$, the number of $2$-trees of size $t$ containing a fixed root is at most $(\e D^2)^{t-1}$. 
Thus the total contribution of rooted $2$-trees of size $t$ is heuristically bounded by
\[
    (\e D^2)^{t-1}p^t
    =p\,(\e pD^2)^{t-1}.
\]
This tradeoff between the probability decay $p^t$ and the combinatorial growth $D^{2t}$ explains why $pD^2$ is the natural scale for correlation decay and approximate counting.  

The same quadratic dependence appears in the hardness result \eqref{eq:counting-LLL-lower-bound}, obtained through a reduction from the hard-core model to counting monotone $k$-CNFs~\cite{BGGGS19}. In this reduction, vertices of the original hard-core instance are replaced by blocks of Boolean variables, and each edge gives rise to a constraint involving the two corresponding blocks. 
The hardness threshold beyond the uniqueness threshold of the hard-core model~\cite{sly2014computational} then translates into the LLL regime $pD^2\gtrsim 1$.

The hard CSP instances produced by this reduction possess a blocked and locally tree-like structure.
Interestingly, for such CSP instances, the approach of~\cite{wang2024sampling} can achieve the optimal $pD^2$ threshold by exploiting the underlying $2$-tree structure. 
This raises the fundamental question of whether these structured instances capture the intrinsic worst-case difficulty of approximate counting, and more generally, whether the $2$-tree mechanism can be identified and exploited in arbitrary CSP instances.

\subsubsection{$2$-Tree Expansion of Constraint Marginals}

We introduce a novel \emph{$2$-tree expansion} for constraint marginal probabilities. 
The main idea is to reorganize the inclusion-exclusion expansion according to the underlying $2$-tree structure.
Expanding the numerator of \eqref{eq:marginal-probability-def-technical-overview} by inclusion-exclusion gives
\[
 r_{\+C,c_0}=\frac{1}{\Pr{\mathcal{C}\setminus\{c_0\}}}\sum\limits_{\substack{c_0\in\+D\subseteq \+C }}(-1)^{\abs{\+D}-1}\Pr{\bigwedge\limits_{c\in \+D } \neg c}.
\]
This expansion is indexed by arbitrary subsets of constraints and does not reveal the distance-two structure responsible for the $pD^2$ threshold. 
To extract this structure, we associate every subset $c_0\in\+D\subseteq\+C$ with a canonical $2$-tree $\tr(\+D)$ generated by the procedure in \Cref{Alg:gen-tree}. 
Let $\mathfrak{T}_{\+C,c_0}$ denote the collection of all possible canonical $2$-trees rooted at $c_0$. 
Grouping the terms in the inclusion-exclusion expansion according to their canonical $2$-trees gives the following $2$-tree expansion:
\begin{equation}\label{eq:2-tree-expansion-technical-overview}
    r_{\+C,c_0}=\frac{1}{\Pr{\mathcal{C}\setminus\{c_0\}}}\sum\limits_{T\in \mathfrak{T}_{\+C,c_0}}\sum\limits_{\substack{c_0\in\+D\subseteq \+C\\ \tr(\+D)=T }}(-1)^{\abs{\+D}-1}\Pr{\bigwedge\limits_{c\in \+D } \neg c}.
\end{equation}

The key property of the canonical construction is that the inverse image of each $2$-tree admits an exact characterization through a set of \emph{free} constraints. 
Specifically, for every canonical $2$-tree $T$, there exists a set $\fr(T)$, formally defined in \Cref{def:bad-and-free}, such that, as stated in \Cref{lem:property-gen-tree}:
\[
    \{\+D\subseteq\+C:c_0\in\+D,\ \tr(\+D)=T\}
    =\{T\cup S:S\subseteq\fr(T)\}.
\]
Thus, $\fr(T)$ captures the degrees of freedom in the inverse mapping of $\tr(\+D)$. 
Consequently, all terms corresponding to the same $2$-tree $T$ can be simplified by another inclusion-exclusion cancellation:
\[
\sum\limits_{\substack{c_0\in\+D\subseteq \+C\\ \tr(\+D)=T }}(-1)^{\abs{\+D}-1}\Pr{\bigwedge\limits_{c\in \+D } \neg c}
=\sum_{S\subseteq\fr(T)}(-1)^{|S|}
    \Pr{\bigwedge_{c\in T\cup S}\neg c}
=\Pr{\left(\bigwedge_{c\in T}\neg c\right)\land\fr(T)}.
\]
This transforms the inclusion-exclusion over constraint subsets into an expansion over $2$-trees.

Finally, decomposing the remaining normalization term by a telescoping product gives the recursion
\begin{equation}\label{eq:2-tree-expansion-recurrence-technical-overview}
r_{\+C,c_0}
=
\sum_{T\in\mathfrak{T}_{\+C,c_0}}
(-1)^{|T|-1}
\Pr{
    \bigwedge_{c\in T}\neg c
    \,\middle|\,
    \fr(T)}
\prod_{i}
\frac{1}{1-r_{\+C_i^T,c_i^T}},
\end{equation}
where $\+C_i^T$ is obtained by gradually adding back the non-root constraints outside $\fr(T)$. 

This recursion is the foundation of our correlation decay analysis: each recursive branch follows a $2$-tree whose contribution decays as $p^{|T|}$, while the number of possible extensions grows only as $D^{2|T|}$.

\subsubsection{Decay of Correlation through $2$-Tree Expansion}
Correlation decay is a central tool in approximate counting and has played an important role in previous counting and sampling algorithms in the local lemma regime~\cite{HSZ19,Moi19,guo2019counting,FGYZ20,feng2021sampling,vishesh21sampling,HSW21,he2022sampling,he2022counting,wang2024sampling,feng2025toward}. 
We establish correlation decay directly from the 2-tree recursion \eqref{eq:2-tree-expansion-recurrence-technical-overview}. 

The analysis builds on previous $2$-tree based correlation decay analysis for counting LLL~\cite{wang2024sampling}, while extending the approach to the new recursion \eqref{eq:2-tree-expansion-recurrence-technical-overview} as well as to general CSPs.

Conceptually, our $2$-tree expansion is related to the \emph{cluster expansion} method from statistical physics~\cite{Kotecky1986cluster}, which represents partition functions through sums over combinatorial structures.
Cluster expansion has led to several applications in approximate counting~\cite{helmuth2020algorithmic,cannon2020counting,jenssen2020algorithms,
borgs2022potts} and has a close connection with the existential Lov\'asz Local Lemma~\cite{scott2005repulsive,bissacot2011improvement}. 
Recent works have also applied cluster expansion techniques to counting LLL~\cite{mann2025approximate,barvinok2026computing}.
The key distinction is that cluster expansions are organized around connected clusters, whereas our expansion is organized around $2$-trees. 
This change of the underlying combinatorial objects preserves the independence structure between constraints while capturing the distance-two dependency pattern, which is essential for achieving the optimal $pD^2$ dependence.

\subsubsection{Efficient Approximate Counting from the $2$-Tree Expansion}
The $2$-tree expansion, together with the correlation decay, yields efficient estimators for constraint marginals,
which combined with the constraint-wise self-reducibility above gives our approximate counting algorithms.

For the deterministic algorithm, we truncate the 2-tree recursion in \eqref{eq:2-tree-expansion-recurrence-technical-overview} by maintaining a recursion budget that decreases according to the size of the expanded $2$-trees. 
The correlation decay guarantees that the contribution of large $2$-trees decreases exponentially, and hence the truncation introduces only a controlled approximation error.

For the randomized algorithm, we develop a randomized version of the recursion that gives an unbiased estimator for the constraint marginal  probability. 
Instead of explicitly enumerating all relevant $2$-trees, the estimator randomly realizes their contributions according to the expansion and recursively estimates the remaining reciprocal factors using unbiased estimators. 
Under \Cref{cond:sym-counting-LLL}, the resulting recursive procedure is dominated by a subcritical multi-type Galton-Watson branching process, which implies bounded expected cost and variance.
This recursive construction is related to recent local sampling approaches based on recursive unbiased estimators~\cite{anand2021perfect,feng2025toward,liu2026local,chen2026subquadratic}.

\subsection{Related Work}

\subsubsection{Algorithmic and Sampling Lov\'asz Local Lemma}
A major line of work on the Lov\'asz Local Lemma concerns the algorithmic construction as well as uniform generation of satisfying assignments under local lemma conditions.

The \emph{algorithmic Lov\'asz Local Lemma} seeks efficient algorithms for constructing a satisfying assignment of a CSP instance. 
This direction was initiated by Beck~\cite{beck1991algorithmic} and developed through a series of works~\cite{alon1991parallel,molloy1998further,CS00,Sri08,moser2009constructive,moser2010constructive}, culminating in the seminal Moser--Tardos algorithm~\cite{moser2010constructive}, which gives an algorithmic version of the asymmetric LLL in the variable setting. 
Subsequent works have further improved and extended this framework, including deterministic algorithms, parallel algorithms, and generalizations beyond the original setting~\cite{haeupler2011new,Kolipaka2011MoserAT,chandrasekaran2013deterministic,achlioptas2016random,kolmogorov2018commutativity,harris2019moser,harvey2020algorithmic}.

A more recent direction is the \emph{sampling Lov\'asz Local Lemma}, which aims to efficiently generate an approximately uniform random satisfying assignment rather than merely finding one. 
This direction has developed an extensive literature~\cite{GJL19,HSZ19,Moi19,guo2019counting,FGYZ20,feng2021sampling,vishesh21towards,vishesh21sampling,HSW21,he2022sampling,qiu2022perfect,feng2022improved,wang2024sampling}. 
Sampling algorithms can be converted into approximate counting algorithms through simulated annealing techniques~\cite{stefankovic2009adpative}.
Conversely, several approaches developed for counting LLL also lead to sampling algorithms through related self-reducibility arguments~\cite{Moi19,guo2019counting,vishesh21towards,wang2024sampling}.
In contrast, our approach develops a direct expansion of constraint marginals.
The algorithm by itself works specifically for approximate counting.

\subsubsection{Approximate Counting of Random CSP Solutions}

Another related direction is the approximate counting of solutions to random CSPs.
In random CSP models, constraints are generated randomly according to a prescribed distribution, with a density parameter $\alpha>0$ for controlling the ratio between the number of constraints and variables. 
These models provide important examples of high-dimensional random constraint systems and are closely related to mean-field models in statistical physics.

Among random CSPs, random $k$-SAT is one of the most extensively studied examples in theoretical computer science. 
In this model, each constraint is generated by independently and uniformly selecting $k$ distinct variables at random and independently assigning each variable a random sign. 
Recent works have applied techniques from counting and sampling LLL to approximate counting and sampling solutions of random $k$-SAT~\cite{GGGY21,he2023improved,chen2024fast,chen2025counting}.
In particular, \cite{chen2025counting} obtained an efficient approximate counting algorithm for random $k$-SAT for densities up to a $\poly(k)$ factor below the satisfiability threshold, whose location was established through a long line of works~\cite{kirousis1998approximate,friedgut1999sharp,achlioptasAsymptoticOrderRandom2002,achlioptas2003threshold,coja2014asymptotic,ding2022satisfiability}.

\section{Preliminaries and Notation}

\subsection{Constraint Satisfaction Problems}
A constraint satisfaction problem (CSP) consists of a collection of local
constraints defined over a set of variables. Formally, a CSP instance, or
\emph{CSP formula}, is denoted by $\Phi=(V,\+Q,\+C)$.
Here, $V$ is a set of $n=|V|$ variables. Each variable $v\in V$ has a finite domain $Q_v$ of size $q_v\triangleq |Q_v|\geq 2$, and the assignment space is $\+Q = \bigotimes_{v\in V}Q_v$.
The set $\+C$ consists of local constraints, where each constraint $c\in\+C$ is a Boolean function $c:\bigotimes_{v\in \vbl(c)}Q_v\to\{\True,\False\}$ defined on a subset of variables $\vbl(c)\subseteq V$.
An assignment $\sigma\in\+Q$ is called \emph{satisfying} for $\Phi$ if
\[
\Phi(\sigma)\triangleq\bigwedge\limits_{c\in \+C} c\left(\sigma_{\vbl(c)}\right)=\True.
\]
We denote by $Z_\Phi$ the number of satisfying assignments of $\Phi$.
For a subset of constraints $\+E\subseteq\+C$, let $\vbl(\+E)\triangleq\bigcup_{c\in\+E}\vbl(c)$, and let $Z(\+E)\triangleq\bigl|\{\sigma\in\+Q:\sigma\text{ satisfies every }c\in\+E\}\bigr|$; in particular, $Z(\+C)=Z_\Phi$. For a subset of variables $\Lambda\subseteq V$, let $\+Q_{\Lambda}\defeq \bigotimes_{v\in \Lambda}Q_v$.  
For an assignment $\sigma\in\+Q$, we write $\sigma_\Lambda=\bigotimes_{v\in \Lambda}\sigma(v)\in \+Q_{\Lambda}$ for its restriction on $\sigma$ on $\Lambda$.

We adopt the following simplified notation for constraint events, following
\cite{wang2024sampling}:
\begin{itemize}
    \item A constraint $c\in\+C$ also denotes the event that $c$ is satisfied;
    \item A subset of constraints $\+E\subseteq\+C$ also denotes the event that
    every constraint in $\+E$ is satisfied.
\end{itemize}

\subsection{Lov\'{a}sz Local Lemma}

In the context of the LLL, each constraint $c\in\+C$ can be viewed as a bad event $A_c$, which occurs when the assignment to $\vbl(c)$ violates the constraint $c$. 
The Lov\'asz Local Lemma provides a sufficient condition for the existence of a satisfying assignment.

\begin{theorem}[\cite{LocalLemma}]\label{thm:locallemma}
    Given a CSP formula $\Phi=(V,\+Q,\+C)$, suppose
    \begin{align}\label{llleq}
    \exists x\in (0, 1)^{\+C}\quad \text{ s.t.}\quad \forall c \in \+C:\quad
        {\Pr{\neg c}\leq x(c)\prod_{\substack{c'\in \+C\setminus \{c\}\\ \vbl(c)\cap\vbl(c')\neq \emptyset}}(1-x(c'))}.
    \end{align}
    Then  
    $$
        {\Pr{\+C}\geq \prod\limits_{c\in C}(1-x(c))>0}.
    $$
\end{theorem}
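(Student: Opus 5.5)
The plan is the classical inductive argument for the asymmetric local lemma, written in the paper's variable setting. Here each event $\neg c$ is determined by $\sigma_{\vbl(c)}$ under the uniform product measure $\mathbb{P}$, and two events are mutually independent of each other's non-neighbours whenever their variable sets are disjoint. For a constraint $c$, write $\Gamma(c)=\{c'\in\+C\setminus\{c\}:\vbl(c)\cap\vbl(c')\neq\emptyset\}$.

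The key claim, proved by induction on $|\+S|$, is the following. For every $\+S\subseteq\+C$ and every $c\in\+C\setminus\+S$ with $\Pr{\+S}>0$,
\[
\Pr{\neg c\mid \+S}\le x(c).
\]
The base case $\+S=\emptyset$ is immediate from \eqref{llleq}, since the product there is at most $1$.

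For the inductive step, split $\+S=\+S_1\cup\+S_2$ with $\+S_1=\+S\cap\Gamma(c)$ and $\+S_2=\+S\setminus\Gamma(c)$, and write
\[
\Pr{\neg c\mid\+S}=\frac{\Pr{\neg c\wedge\+S_1\mid\+S_2}}{\Pr{\+S_1\mid\+S_2}}.
\]
\begin{itemize}
\item For the numerator, bound it by $\Pr{\neg c\mid\+S_2}$. This equals $\Pr{\neg c}$ because $\vbl(c)$ is disjoint from $\vbl(\+S_2)$ and the measure is a product measure.
\item For the denominator, enumerate $\+S_1=\{c_1,\dots,c_r\}$ and use the chain rule:
\[
\Pr{\+S_1\mid\+S_2}=\prod_{i=1}^r\bigl(1-\Pr{\neg c_i\mid \+S_2\cup\{c_1,\dots,c_{i-1}\}}\bigr)\ge\prod_{i=1}^r(1-x(c_i)).
\]
Each conditioning set here is a proper subset of $\+S$, so the induction hypothesis applies. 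The conditioning events have positive probability by the same chain of bounds, since $x<1$.
\item Combining the two bounds with \eqref{llleq} gives
\[
\Pr{\neg c\mid\+S}\le \frac{x(c)\prod_{c'\in\Gamma(c)}(1-x(c'))}{\prod_{c'\in\+S_1}(1-x(c'))}\le x(c),
\]
since $\+S_1\subseteq\Gamma(c)$.
\end{itemize}
The case $\+S_1=\emptyset$ is covered directly by the numerator bound.

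To conclude, order $\+C=\{c_1,\dots,c_m\}$ and apply the chain rule once more:
\[
\Pr{\+C}=\prod_{i}\bigl(1-\Pr{\neg c_i\mid c_1,\dots,c_{i-1}}\bigr)\ge\prod_i(1-x(c_i))>0.
\]

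The main point needing care is the positivity bookkeeping: the induction must be run so that $\Pr{\+S}>0$ is established alongside the claim. The cleanest way is to prove, jointly by induction on $|\+S|$, both that $\Pr{\+S}\ge\prod_{c\in\+S}(1-x(c))>0$ and the conditional bound above. The only other ingredient to verify is the independence step for the numerator, which uses only that $\mathbb{P}$ is a product measure and that $\+S_2$ shares no variables with $c$.
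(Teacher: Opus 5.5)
Your proof is correct. The paper gives no proof of its own: it cites the result to \cite{LocalLemma}, and your argument is the standard inductive proof of the asymmetric local lemma. That argument proves $\mathbb{P}\left[\neg c \mid \mathcal{S}\right] \le x(c)$ by splitting $\mathcal{S}$ into the constraints that share variables with $c$ and those that do not, using product-measure independence for the numerator and the chain rule plus induction for the denominator, and then applies the chain rule over $\mathcal{C}$. Your positivity bookkeeping also holds. The simplest justification is that every conditioning set arising in the induction is a subset of $\mathcal{S}$, so it inherits $\mathbb{P}[\mathcal{S}]>0$ by monotonicity.
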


The LLL condition \eqref{llleq} also provides control over the effect of conditioning on all constraints being satisfied. 
The following result, due to Haeupler, Saha, and Srinivasan~\cite{haeupler2011new}, bounds the probability of an arbitrary local event under the conditional distribution.

\begin{theorem}[\text{\cite[Theorem 2.1]{haeupler2011new}}]\label{thm:HSS}
Given a CSP formula $\Phi=(V,\+Q,\+C)$ satisfying \eqref{llleq}, let $\+A$ be an event determined by the assignment on a subset of variables $\vbl(\+A)\subseteq V$. 
Then
\[
   \Pr{\+A\mid \+C}\leq \Pr{\+A}\prod_{\substack{c\in \+C\\ \vbl(c)\cap\vbl(\+A)\neq \emptyset}}(1-x(c))^{-1}.
\]
\end{theorem}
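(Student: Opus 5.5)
We prove the bound of \Cref{thm:HSS} by induction on the number of constraints, following the standard proof of the Lov\'asz Local Lemma. Split $\+C$ into $\+C_1\defeq\{c\in\+C:\vbl(c)\cap\vbl(\+A)\neq\emptyset\}$ and $\+C_2\defeq\+C\setminus\+C_1$. Since $\+A$ is determined by $\vbl(\+A)$ and no constraint of $\+C_2$ touches $\vbl(\+A)$, the event $\+A$ is independent of $\+C_2$. Hence
\[
\Pr{\+A\mid\+C}=\frac{\Pr{\+A\land\+C_1\land\+C_2}}{\Pr{\+C_1\land\+C_2}}\le\frac{\Pr{\+A\land\+C_2}}{\Pr{\+C_1\land\+C_2}}=\frac{\Pr{\+A}\Pr{\+C_2}}{\Pr{\+C_1\land\+C_2}}=\frac{\Pr{\+A}}{\Pr{\+C_1\mid\+C_2}}.
\]
It therefore suffices to show $\Pr{\+C_1\mid\+C_2}\ge\prod_{c\in\+C_1}(1-x(c))$.

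Order $\+C_1=\{c_1,\dots,c_s\}$ and write the chain rule
\[
\Pr{\+C_1\mid\+C_2}=\prod_{i=1}^{s}\Bigl(1-\Pr{\neg c_i\mid \+C_2\cup\{c_1,\dots,c_{i-1}\}}\Bigr).
\]
The key ingredient is the standard LLL lemma: for every $c\in\+C$ and every $\+S\subseteq\+C\setminus\{c\}$,
\[
\Pr{\neg c\mid\+S}\le x(c).
\]
Applying it with $\+S=\+C_2\cup\{c_1,\dots,c_{i-1}\}$ gives $\Pr{\+C_1\mid\+C_2}\ge\prod_{c\in\+C_1}(1-x(c))$, which yields the theorem.

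The main work is proving this lemma, by induction on $|\+S|$. Split $\+S$ into $\+S_1$, the constraints of $\+S$ sharing a variable with $c$, and $\+S_2=\+S\setminus\+S_1$. If $\+S_1=\emptyset$, then $\neg c$ is independent of $\+S$, so $\Pr{\neg c\mid\+S}=\Pr{\neg c}\le x(c)$ directly from \eqref{llleq}. Otherwise the same manipulation as above gives
\[
\Pr{\neg c\mid\+S}\le\frac{\Pr{\neg c}}{\Pr{\+S_1\mid\+S_2}}.
\]
Order $\+S_1=\{d_1,\dots,d_r\}$ and expand the denominator by the chain rule into factors $1-\Pr{\neg d_j\mid\+S_2\cup\{d_1,\dots,d_{j-1}\}}$. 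Each conditioning set is a proper subset of $\+S$ not containing $d_j$, so the induction hypothesis gives $\Pr{\+S_1\mid\+S_2}\ge\prod_{d\in\+S_1}(1-x(d))$. Since $\+S_1$ consists of neighbours of $c$, the product over $\+S_1$ is at least the product over all neighbours of $c$, and \eqref{llleq} then gives $\Pr{\neg c\mid\+S}\le x(c)$.

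A formal point is that every conditioning event must have positive probability. This holds because, by induction, $\Pr{\+S}\ge\prod_{c\in\+S}(1-x(c))>0$ for every subset $\+S$, as in \Cref{thm:locallemma}. The only delicate step is the inductive lemma; the rest is bookkeeping over the split into constraints that do and do not touch the relevant variables.
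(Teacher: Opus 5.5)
Your proof is correct and is the standard argument behind the cited result of Haeupler, Saha, and Srinivasan; the paper cites this theorem without reproving it. You separate the constraints touching $\vbl(\+A)$, use the independence of $\+A$ from the remaining constraints to get $\Pr{\+A\mid\+C}\le\Pr{\+A}/\Pr{\+C_1\mid\+C_2}$, and lower-bound the denominator via the chain rule and the classical LLL bound $\Pr{\neg c\mid\+S}\le x(c)$ for $\+S\subseteq\+C\setminus\{c\}$, with positivity of the conditioning events handled correctly.
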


\subsection{Dependency Graph and \texorpdfstring{$2$}{2}-Trees}

The dependency graph is a fundamental structure in the Lov\'asz Local Lemma,
as it captures the interaction pattern among constraints. 
For a graph $G=(U,E)$, let $\mathrm{dist}_G(u,v)$ denote the shortest-path distance between vertices $u,v\in U$. For $u\in U$ and $S\subseteq U$, define
\[
\mathrm{dist}_G(u,S)\defeq\min_{v\in S}\mathrm{dist}_G(u,v),
\]
with the convention that $\mathrm{dist}_G(u,\emptyset)=\infty$.

\begin{definition}[Dependency graph]\label{def:dependency-graph}
Let $(V,\+Q)$ be a variable space and let $\+C$ be a set of constraints.
\begin{itemize}
\item 
The \emph{dependency graph} $G(\+C)$ is the graph with vertex set $\+C$, where two distinct constraints $c,c'\in\+C$ are adjacent if and only if $\vbl(c)\cap\vbl(c')\neq\emptyset$.
\item 
The \emph{square graph} $G^2(\+C)$ is the graph with vertex set $\+C$, where two distinct constraints $c,c'\in\+C$ are adjacent if and only if $\mathrm{dist}_{G(\+C)}(c,c')\leq2$.
\end{itemize}
\end{definition}

For any $\+D\subseteq\+C$, we define its \emph{inclusive neighborhoods} in the dependency graph $G(\+C)$ and  the square graph $G^2(\+C)$ respectively as
\begin{align*}
\Gamma^{+}_{\+C}(\+D)
&\defeq\{c\in\+C\mid\mathrm{dist}_{G(\+C)}(c,\+D)\leq1\};\\
\Gamma^{\leq2}_{\+C}(\+D)
&\defeq\{c\in\+C\mid\mathrm{dist}_{G(\+C)}(c,\+D)\leq2\}.
\end{align*}

The notion of a $2$-tree was introduced by Alon~\cite{alon1991parallel} in the study of the algorithmic Lov\'asz Local Lemma and has subsequently played an important role in algorithmic, sampling, and counting LLL.

\begin{definition}[$2$-tree]\label{definition:2-tree}
For a graph $G = (V, E)$, a \emph{$2$-tree} in $G$ is a subset of vertices $T\subseteq V$ satisfying:
\begin{itemize}
\item $T$ is an independent set in $G$;
\item $T$ is connected in the square graph $G^2$.
\end{itemize}
\end{definition}

The following bound on connected subgraphs~\cite{borgs2013left} immediately implies an upper bound on the number of $2$-trees, since they are connected in the square graph $G^2$, which has degrees at most $\Delta^2$. 
\begin{lemma}
  \label{lemma:number-of-component}
Let $G=(V,E)$ be a graph with maximum degree $\Delta$, and let $v\in V$.
Then the number of connected induced subgraphs of size $\ell$ containing $v$ is at most   $(\mathrm{e}\Delta)^{\ell - 1}$.
\end{lemma}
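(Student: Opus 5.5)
The plan is to encode each connected induced subgraph containing $v$ by a rooted spanning tree, and then count the rooted trees of size $\ell$ that embed in $G$ with root at $v$ by comparison with the infinite $\Delta$-ary tree.

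\emph{Step 1: reduce to counting subtrees.} Fix a connected vertex set $S\ni v$ with $|S|=\ell$. Choose for it one spanning tree of $G[S]$, for instance a BFS tree from $v$ with a fixed tie-breaking rule. Distinct sets $S$ give trees with distinct vertex sets, so this assignment is injective. It therefore suffices to show that the number of subtrees of $G$ that contain $v$ and have $\ell$ vertices is at most $(\e\Delta)^{\ell-1}$.

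\emph{Step 2: map into the $\Delta$-ary tree.} At every vertex of $G$, fix an ordering of its at most $\Delta$ incident edges. A subtree rooted at $v$ is then determined by a rooted subtree of the infinite tree $\mathbb{T}_\Delta$, in which every node has $\Delta$ labelled children. To obtain it, start at the root and record, for each vertex of the subtree, which labelled edges lead to its children. This map is injective as well. We may overcount slightly, since in $\mathbb{T}_\Delta$ the parent edge is not excluded; this only helps the upper bound. The number of subtrees of $\mathbb{T}_\Delta$ containing the root with $\ell$ nodes equals the Fuss--Catalan number
\[
\frac{1}{(\Delta-1)\ell+1}\binom{\Delta\ell}{\ell}.
\]
This follows from the standard bijection with $\Delta$-ary plane trees, or by Lagrange inversion on $F=x(1+F)^{\Delta}$.

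\emph{Step 3: estimate the count.} Use $\binom{\Delta\ell}{\ell}\le \frac{(\Delta\ell)^\ell}{\ell!}$ and $\ell!\ge (\ell/\e)^{\ell}\cdot \e$, which is a crude Stirling bound valid for $\ell\ge1$. Together these give $\binom{\Delta\ell}{\ell}\le (\e\Delta)^\ell/\e$. Dividing by $(\Delta-1)\ell+1$ then has to absorb the remaining factor of $\Delta$.

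This last estimate is the one delicate point. A cleaner route is the identity
\[
\frac{1}{(\Delta-1)\ell+1}\binom{\Delta\ell}{\ell}=\frac{1}{\ell}\binom{\Delta\ell}{\ell-1}.
\]
Then
\[
\binom{\Delta\ell}{\ell-1}\le\frac{(\Delta\ell)^{\ell-1}}{(\ell-1)!}\le \frac{(\Delta\ell)^{\ell-1}\e^{\ell-1}}{(\ell-1)^{\ell-1}}.
\]
Using $(\ell/(\ell-1))^{\ell-1}\le \e$, this gives $\frac1\ell\binom{\Delta\ell}{\ell-1}\le \frac{\e}{\ell}(\e\Delta)^{\ell-1}$, which is at most $(\e\Delta)^{\ell-1}$ for $\ell\ge 3$. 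The cases $\ell=1,2$ can be checked directly: there is $1$ set of size $1$, and at most $\Delta\le \e\Delta$ sets of size $2$.
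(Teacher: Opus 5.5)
Your proof is correct. The paper gives no argument of its own and simply cites Borgs--Chayes--Kahn--Lov\'asz~\cite{borgs2013left}; your argument (spanning-tree injection into the infinite $\Delta$-ary tree, then the Fuss--Catalan count) is the standard proof of exactly this bound. One small remark: your first route in Step 3 was not actually delicate. For $\Delta\ge1$ we have $(\Delta-1)\ell+1-\Delta=(\Delta-1)(\ell-1)\ge 0$, so $\binom{\Delta\ell}{\ell}\le(\e\Delta)^{\ell}/\e$ already gives $\frac{1}{(\Delta-1)\ell+1}\binom{\Delta\ell}{\ell}\le(\e\Delta)^{\ell-1}$ for all $\ell\ge1$, with no case split.
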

\begin{corollary}
  \label{cor:num-2-tree}
Let $G=(V,E)$ be a graph with maximum degree $\Delta$, and let $v\in V$.
Then the number of $2$-trees of size $\ell$ containing $v$ is at most
  $(\mathrm{e}\Delta^2)^{\ell - 1}$.
\end{corollary}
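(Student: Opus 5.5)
The plan is to reduce the corollary directly to \Cref{lemma:number-of-component}, applied to the square graph $G^2$ rather than to $G$.

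First I would bound the maximum degree of $G^2$. A vertex $u\in V$ is adjacent in $G^2$ to every vertex at distance $1$ or $2$ from it in $G$. There are at most $\Delta$ vertices at distance $1$. Each vertex at distance $2$ is reached through some neighbor $w$ of $u$ together with one of the at most $\Delta-1$ neighbors of $w$ other than $u$. Hence
\[
\deg_{G^2}(u)\le \Delta+\Delta(\Delta-1)=\Delta^2.
\]

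Next I would inject $2$-trees into connected induced subgraphs of $G^2$. By \Cref{definition:2-tree}, a $2$-tree $T$ of size $\ell$ containing $v$ is a vertex set that is connected in $G^2$, so the induced subgraph $G^2[T]$ is a connected induced subgraph of size $\ell$ containing $v$. The map $T\mapsto G^2[T]$ is injective, because the vertex set determines the induced subgraph. The independence requirement in $G$ only shrinks the family, so it can simply be discarded.

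Finally, applying \Cref{lemma:number-of-component} to $G^2$ with maximum degree at most $\Delta^2$ gives at most $(\mathrm{e}\Delta^2)^{\ell-1}$ such subgraphs, and hence at most that many $2$-trees.

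The only step requiring any care is the degree bound for $G^2$. This is also what fixes the base $\mathrm{e}\Delta^2$ rather than something like $\mathrm{e}(\Delta^2+\Delta)$, and the computation above confirms it. Nothing else is an obstacle.
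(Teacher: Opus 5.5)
Your proposal is correct and follows the same route as the paper: bound the maximum degree of $G^2$ by $\Delta+\Delta(\Delta-1)=\Delta^2$, note that $2$-trees are connected in $G^2$ (dropping the independence condition), and apply \Cref{lemma:number-of-component} to $G^2$. Your explicit degree computation and the remark that the lemma's bound only improves when the true maximum degree of $G^2$ is below $\Delta^2$ spell out what the paper states in one line.
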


\section{Two-Tree Expansion}\label{sec:algo}
\subsection{Constraint Marginal Probability}
Let  $\Phi=(V,\+{Q},\+{C})$ be a CSP formula satisfying \Cref{cond:sym-counting-LLL}. By ~\Cref{thm:locallemma}, for every $\mathcal D\subseteq\mathcal C$, we have $Z(\mathcal D)>0$ .
Fix any $c_0\in\+C$.
A central quantity estimated by our algorithm is the following marginal probability of a constraint:
\begin{equation}
\label{eq:def-marginal-violation}
    r_{\mathcal{C},c_0}\defeq 1-\frac{Z(\+C)}{Z(\+C\setminus \{c_0\})}=\Pr{\neg c_0\mid\mathcal{C}\setminus\{c_0\}}=\frac{\Pr{\neg c_0\land (\mathcal{C}\setminus\{c_0\})}}{\Pr{\mathcal{C}\setminus\{c_0\}}}.
\end{equation}

The following simple upper bound holds for $r_{\+{C},c_0}$.

\begin{lemma}\label{lem:marginal-bound}
Suppose that $\Phi=(V,\+Q,\+C)$ satisfies \Cref{cond:sym-counting-LLL}. Then, for every $c_0\in \+C$,
\[
0\leq r_{\+C,c_0}\leq 1.2p.
\]
\end{lemma}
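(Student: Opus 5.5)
The lower bound $r_{\+C,c_0}\ge 0$ is immediate from \eqref{eq:def-marginal-violation}, since $r_{\+C,c_0}$ is a conditional probability (and $Z(\+C\setminus\{c_0\})>0$ by \Cref{thm:locallemma}). For the upper bound, apply \Cref{thm:HSS} to the formula with constraint set $\+C\setminus\{c_0\}$ and the event $\+A=\neg c_0$, where $\vbl(\+A)=\vbl(c_0)$. This gives
\[
r_{\+C,c_0}=\Pr{\neg c_0\mid \+C\setminus\{c_0\}}\le \Pr{\neg c_0}\prod_{\substack{c\in\+C\setminus\{c_0\}\\ \vbl(c)\cap\vbl(c_0)\neq\emptyset}}(1-x(c))^{-1}\le p\,(1-x)^{-D},
\]
once we choose a uniform LLL witness $x(c)\equiv x$. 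At most $D$ constraints share variables with $c_0$, so the product has at most $D$ factors.

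The choice of $x$ must make \eqref{llleq} hold for $\+C\setminus\{c_0\}$, and it suffices to check it for all of $\+C$, since removing constraints only weakens the requirement. Each $c$ has at most $D$ neighbours, so we need $p\le x(1-x)^D$. Take $x=\frac{1}{4(D+1)}$, which is suggested by the quadratic condition. By Bernoulli's inequality, $(1-x)^D\ge 1-Dx\ge 3/4$. \Cref{cond:sym-counting-LLL} gives $p\le \frac{1}{4\e(D+1)^2}$, so it suffices that $\frac{1}{4\e(D+1)^2}\le \frac{3}{16(D+1)}$. This is equivalent to $4\le 3\e(D+1)$, which is true.

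The only delicate step is the numeric constant $1.2$. We have $(1-x)^{-D}\le \exp\bigl(\tfrac{Dx}{1-x}\bigr)$. With $x=\frac{1}{4(D+1)}$, we get $Dx\le 1/4$ and $1-x\ge 7/8$ (using $D\ge1$), so the exponent is at most $2/7$. But $\e^{2/7}\approx 1.33$, which is too large. So I would shrink $x$, using the slack in the quadratic condition.

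Take instead $x=\frac{1}{c(D+1)}$ with a larger $c$. Feasibility requires $\frac{1}{4\e(D+1)^2}\le \frac{1}{c(D+1)}\bigl(1-\frac{1}{c}\bigr)$, i.e.\ $c(1-1/c)\le 4\e(D+1)$. Since $D+1\ge 2$, this holds for any $c\le 8\e$, so $c$ around $10$ works comfortably. The blow-up factor is then $(1-x)^{-D}\le \exp\bigl(\frac{1}{c(1-1/(2c))}\bigr)\approx \e^{0.105}\approx 1.11\le 1.2$.

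I would fix $c=10$ and verify both inequalities explicitly (worst case $D=1$ for feasibility, $D\to\infty$ for the factor). If needed, I would bound $(1-x)^{-D}\le (1-1/(c(D+1)))^{-D}\le \e^{1/(c-1)}$, which is about $1.118$ for $c=10$. This gives $r_{\+C,c_0}\le 1.2p$.
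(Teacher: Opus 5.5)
Your proof is correct and follows the paper's route: apply \Cref{thm:HSS} to $\+C\setminus\{c_0\}$ with the event $\neg c_0$ and a uniform LLL witness, so that at most $D$ factors $(1-x)^{-1}$ appear. The only difference is the choice of witness. The paper takes $x(c)=1.2p$, checks $p\le 1.2p(1-1.2p)^D$ using $1.2pD\le 0.1$, and obtains $r_{\+C,c_0}\le p/0.9$; your choice $x=\frac{1}{10(D+1)}$ gives the factor $\e^{1/9}\approx 1.12$, and both work equally well.
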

\begin{proof}
    By \Cref{cond:sym-counting-LLL}, $4\e p(D+1)^2\leq 1$. In particular, $1.2pD\leq 0.1$, which implies 
    \[
    (1-1.2p)^{D}\geq 1-1.2pD\geq 0.9.
    \]
    Consequently, $p\leq 1.2p(1-1.2p)^D$.
    Thus, the upper bound $r_{\+C,c_0}\leq 1.2p$ follows from \Cref{thm:HSS} by setting $x(c)=1.2p$ for every $c\in \+C$.
\end{proof}

By the principle of inclusion-exclusion, we have
\[
\Pr{\neg c_0\land (\mathcal{C}\setminus\{c_0\})}
=\sum\limits_{\substack{c_0\in\+D\subseteq \+C }}(-1)^{\abs{\+D}-1}\Pr{\bigwedge\limits_{c\in \+D } \neg c}.
\]
Consequently, substituting this expansion into the definition of $r_{\+C,c_0}$ gives
\begin{equation}\label{eq:inclusion-exclusion}
    r_{\+C,c_0}=\frac{1}{\Pr{\mathcal{C}\setminus\{c_0\}}}\sum\limits_{\substack{c_0\in\+D\subseteq \+C }}(-1)^{\abs{\+D}-1}\Pr{\bigwedge\limits_{c\in \+D } \neg c}.
\end{equation}

\subsection{Canonical $2$-Tree Representation}
The inclusion-exclusion expansion \eqref{eq:inclusion-exclusion} involves exponentially many constraint subsets $c_0\in\+D\subseteq\+C$.
To approximate the marginal probability $r_{\+C,c_0}$, we focus on subsets $\+D$ with non-negligible contributions:
\[
p(\+D)=\Pr{\bigwedge\limits_{c\in \+D } \neg c}.
\]
The probability $p(\+D)$ is small whenever $\+D$ contains a large independent set in the dependency graph $G(\+C)$.
This motivates extracting a ``canonical'' $2$-tree $T\subseteq\+D$ from each constraint subset $\+D$.
A  large canonical $2$-tree serves as a certificate that the probability $p(\+D)$ is exponentially small.

Fix a constraint set $\+C$, an arbitrary ordering of its constraints, and a root constraint $c_0\in\+C$.
For any constraint subset $\+D\subseteq\+C$ containing $c_0$, the procedure 
$$\tr(\+D)=\tr_{\+C,c_0}(\+D)$$ 
in \Cref{Alg:gen-tree} constructs a \emph{canonical $2$-tree} that contains $c_0$ and is contained in $\+D$.
The construction is greedy: at each step, it selects the lowest-index constraint at graph distance exactly $2$ from the current $2$-tree, and accepts it if it belongs to $\+D$; otherwise, it marks the constraint as rejected.

\begin{algorithm}[H]
\caption{$\tr_{\+C,c_0}(\+D)$: Construction of a canonical $2$-tree } \label{Alg:gen-tree}
\SetKwInOut{Input}{Input}
\SetKwInOut{Output}{Output}
\SetKwIF{WP}{ElseIf}{Else}{with probability}{do}{else if}{else}{endif}
\Input{A constraint set $\+D\subseteq\+C$ containing $c_0$.} 
\Output{A $2$-tree $T$ in $G(\+C)$ containing $c_0$ and satisfying $T\subseteq\+D$.}
Initialize $R\gets \emptyset$, $T\gets \{c_0\}$\;
\While{$\exists c\in \+C\setminus R$ such that $\mathrm{dist}_{G(\+C)}(c,T)= 2$\label{Line:chosen-condition}}
{
    Let $c$ be the lowest-index constraint satisfying the above condition\label{Line:gen-tree-choose}\;
    \If{$c\in \+D$}{
        $T\gets T\cup \{c\}$\tcp*{accepted set}
    }\Else{
        $R\gets R\cup \{c\}$\tcp*{rejected set}\label{Line:gen-tree-reject}
    }
}
\Return $T$\;
\end{algorithm}

For any constraint set $\+C$ and $c\in\+C$, let
$\mathfrak{T}_{\+C,c}$ denote the family of all $2$-trees in $G(\+C)$
containing~$c$.

The following lemma shows that $\tr(\+D)=\tr_{\+C,c_0}(\+D)$ maps every constraint subset $\+D\subseteq\+C$ containing $c_0$ to a $2$-tree in $\mathfrak{T}_{\+C,c_0}$, and that every $2$-tree in $\mathfrak{T}_{\+C,c_0}$ is a fixed point of this mapping.

\begin{lemma}
    \label{lem:property-2-tree}
    For any $c_0 \in \+D\subseteq \+C$,  $\tr(\+D)\in \mathfrak{T}_{\+C,c_0}$.
    Moreover, for any $T\in \mathfrak{T}_{\+C,c_0}$, $\tr(T)=T$.
\end{lemma}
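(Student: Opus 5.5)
Both claims follow from loop invariants of \Cref{Alg:gen-tree} (with $\+D$ fixed); throughout, $G=G(\+C)$.

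\emph{First claim.} The loop terminates, since each iteration moves one constraint of $\+C\setminus R$ into $T$ or $R$. Indeed, an accepted $c$ had $\mathrm{dist}_G(c,T)=2$ and so was not in $T$; after acceptance its distance to $T$ is $0$, so it is never selected again. Rejected constraints are excluded by the condition on Line~\ref{Line:chosen-condition}. I will prove by induction on iterations that, at all times,
\[
c_0\in T\subseteq\+D, \qquad T \text{ is independent in } G, \qquad T \text{ is connected in } G^2.
\]
Initially $T=\{c_0\}\subseteq\+D$, so the invariant holds. A constraint $c$ is added only if $c\in\+D$ and $\mathrm{dist}_G(c,T)=2$. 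Distance $2$ means $c$ is not adjacent to, nor equal to, any member of $T$, so independence is preserved. It also means $c$ is adjacent in $G^2$ to some member of $T$, so connectivity in $G^2$ is preserved. Hence the output lies in $\mathfrak{T}_{\+C,c_0}$ and in $\+D$.

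\emph{Second claim.} Fix $T^\ast\in\mathfrak{T}_{\+C,c_0}$ and run the procedure with $\+D=T^\ast$. By the first claim the output $T$ satisfies $T\subseteq T^\ast$, so it remains to show $T^\ast\subseteq T$. The key observation is that no element of $T^\ast$ is ever rejected, since rejection requires $c\notin\+D=T^\ast$. Hence at termination $R\cap T^\ast=\emptyset$, and by the loop condition no $c\in\+C\setminus R$ has $\mathrm{dist}_G(c,T)=2$. In particular, no $c\in T^\ast$ has $\mathrm{dist}_G(c,T)=2$.

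Suppose for contradiction that $T^\ast\setminus T\neq\emptyset$. Since $T^\ast$ is connected in $G^2$ and $c_0\in T\subseteq T^\ast$, there is an edge of $G^2$ between some $c\in T^\ast\setminus T$ and some $c'\in T$, so $\mathrm{dist}_G(c,T)\leq 2$. On the other hand, $c\notin T$, and $c$ is non-adjacent to every member of $T$ because $T^\ast$ is independent. Therefore $\mathrm{dist}_G(c,T)=2$, with $c\in T^\ast\subseteq\+C\setminus R$. This contradicts termination, so $T=T^\ast$.

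The only delicate point is in this last step: one must check that the terminating configuration really leaves $c$ eligible. This reduces to the fact that members of $\+D$ are never placed in $R$.
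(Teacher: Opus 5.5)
Your proof is correct and follows essentially the same route as the paper: an invariant argument showing that accepting only distance-$2$ constraints preserves independence and $G^2$-connectivity, then a contradiction for $\tr(T)=T$ via a $G^2$-edge from the output to the missing part of $T$, using that elements of the input set are never rejected. Your explicit termination argument is a small addition that the paper leaves implicit.
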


\begin{proof}
Let $T=\tr(\+D)$. By construction, $c_0\in T\subseteq\+D$. 
We prove that $T$ is a $2$-tree. Initially, $T=\{c_0\}$, and the claim is immediate.
Whenever a constraint $c$ is added to the current set $T$, the algorithm guarantees that $\dist_{G(\+C)}(c,T)=2$.
Hence $c$ is not adjacent to any existing constraint in $T$, and it is adjacent in $G^2(\+C)$ to some constraint in $T$. Therefore, throughout the execution, $T$ remains independent in $G(\+C)$ and connected in $G^2(\+C)$. Thus the final output is a $2$-tree containing $c_0$.

It remains to prove that $\tr(T)=T$ for every $T\in\mathfrak{T}_{\+C,c_0}$. Consider the execution with input set $T$.
Every accepted constraint belongs to $T$. Suppose that the final accepted set is $T'\subsetneq T$. Since $G^2(\+C)[T]$ is connected, there exist $u\in T'$ and $v\in T\setminus T'$ such that $\dist_{G(\+C)}(u,v)=2$.
Because $T$ is independent in $G(\+C)$, we have $\dist_{G(\+C)}(v,T')=2$.
Moreover, $v$ cannot have been rejected, since $v\in T$, the input set.
Therefore, $v$ would still satisfy the while-condition when the algorithm terminates, which is a contradiction. 
Hence the final accepted set is exactly $T$.
\end{proof}

The following notion of free constraints is used to formulate our 2-tree expansion.

\begin{definition}[Free constraints]\label{def:bad-and-free}
For each $T\in\mathfrak{T}_{\+C,c_0}$, let $R=R(T)$ denote the final rejected set constructed during the execution of $\tr(T)$. The set of \emph{free constraints} with respect to $T$ is defined as
\[
    \fr(T)=\fr_{\+C,c_0}(T)\defeq \+C\setminus (T\cup R).
\]

\end{definition}

The terminology of \emph{free constraints} comes from the following inverse question:

  Given a canonical $2$-tree $T=\tr(\+D)$, which constraint sets $\+D$ produce the same canonical $2$-tree?  

The membership of some constraints in $\+D$ is fixed: constraints in $T$ must belong to $\+D$, while constraints in the rejected set $R$ must not belong to $\+D$. 
The remaining constraints in $\fr(T)=\+C\setminus(T\cup R)$ are free, since adding or removing any subset of them does not change the resulting canonical $2$-tree. 

Indeed, the free constraints capture exactly the degrees of freedom in the preimage $\tr^{-1}(T)$, as formalized by the following lemma.

\begin{lemma}[Canonical 2-tree inversion]\label{lem:property-gen-tree}
    Let $\+C$ be a constraint set and let $c_0\in \+C$. For every $T \in \mathfrak{T}_{\+C,c_0}$, 
    $$\{\+D \subseteq \+C \mid c_0 \in \+D, \; \tr(\+D) = T\} = \{S \cup T \mid S \subseteq \fr(T)\}.$$
\end{lemma}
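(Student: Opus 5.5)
The plan is a coupling of executions: I will compare the run of $\tr(\+D)$ with the run of $\tr(T)$ step by step. The key observation is that the decisions of \Cref{Alg:gen-tree} depend on the input set only through membership queries "$c\in\+D$?" for the selected constraints $c$. Everything else (the while-condition, the choice of the lowest-index $c$) depends only on the current pair $(T,R)$ and on $G(\+C)$. So two inputs whose answers agree on every queried constraint produce identical executions. Let $c_1,c_2,\dots,c_\ell$ be the constraints queried in the execution of $\tr(T)$, in order. Each is either accepted, hence in $T\setminus\{c_0\}$, or rejected, hence in $R(T)$, and $T\cup R(T)$ is exactly $\{c_0,c_1,\dots,c_\ell\}$.

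For the inclusion "$\supseteq$", take $\+D=S\cup T$ with $S\subseteq\fr(T)=\+C\setminus(T\cup R)$. I argue by induction on the step $i$ that the runs on $\+D$ and on $T$ have the same state $(T_i,R_i)$ before step $i$. Given equal states, both runs pick the same $c_i$. Also $c_i\in\+D$ iff $c_i\in T$: if $c_i\in T$ then $c_i\in\+D$, and if $c_i\in R(T)$ then $c_i\notin T$ and $c_i\notin S$ because $S\cap R=\emptyset$. Thus the states stay equal. After step $\ell$ the while-condition fails in both runs, so $\tr(\+D)=\tr(T)=T$, using \Cref{lem:property-2-tree}. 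The root $c_0$ lies in $\+D$ since $c_0\in T$.

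For "$\subseteq$", suppose $c_0\in\+D$ and $\tr(\+D)=T$. I set $S=\+D\setminus T$ and need $T\subseteq\+D$ and $S\cap R(T)=\emptyset$. The first holds because $\tr(\+D)\subseteq\+D$. For the second, I again couple the run on $\+D$ with the run on $T$ and show they coincide. Suppose they first diverge at step $i$. The states are equal, so $c_i$ is the same, and divergence means exactly one of $c_i\in\+D$, $c_i\in T$ holds. Since $T\subseteq\+D$, the only possibility is $c_i\in\+D\setminus T$. Then the run on $\+D$ accepts $c_i$. Accepted constraints are never removed, so $c_i\in\tr(\+D)=T$, a contradiction. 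Hence the runs coincide, and the rejected set of the $\+D$-run equals $R(T)$. Every constraint in $R(T)$ was queried and answered "not in $\+D$", so $\+D\cap R(T)=\emptyset$ and $S\subseteq\fr(T)$.

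The main subtlety is making the "execution depends only on membership answers" claim precise, including that the lowest-index choice and the termination test are functions of the state alone. Once this determinism lemma is stated, both inclusions are short inductions.
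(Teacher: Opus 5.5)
Your proposal is correct and follows essentially the same route as the paper: both couple the executions of $\tr(\+D)$ and $\tr(T)$ step by step. For $\supseteq$, $S\cap R(T)=\emptyset$ forces identical rejections. For $\subseteq$, any selected $d\in\+D\setminus T$ would be accepted into the output, contradicting $\tr(\+D)=T$, so $\+D\cap R(T)=\emptyset$.
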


\begin{proof}
By \Cref{lem:property-2-tree}, $\tr(T)=T$ for every $T\in\mathfrak{T}_{\+C,c_0}$.

We first show that every $T\cup S$ with $S\subseteq\fr(T)$ satisfies $\tr(T\cup S)=T$. 
Compare the executions on input sets $T$ and $T\cup S$.
We claim that they have identical accepted and rejected sets throughout the execution. 
Suppose this holds up to some iteration, and let $d$ be the next selected constraint. 
If $d\in T$, both executions accept $d$. 
Otherwise, the execution on input set $T$ rejects $d$, so $d\in R$. Since $S\subseteq\fr(T)=\+C\setminus(T\cup R)$, we have $d\notin T\cup S$, and hence the second execution also rejects $d$.
Therefore the two executions are identical and $\tr(T\cup S)=T$.

Conversely, suppose that $\tr(\+D)=T$. Since every accepted constraint must belong to the input set, we have $T\subseteq\+D$. Compare the execution on input set $\+D$ with the execution on input set $T$. 
At any common state, let $d$ be the next selected constraint. 
If $d\in T$, then both executions accept $d$. 
Otherwise, if $d\in\+D$, the execution on input set $\+D$ would accept $d$, contradicting that its final output is $T$. 
Hence $d\notin\+D$, and both executions reject $d$. 
Thus the two executions are identical, and every constraint rejected in the execution on input set $T$ is also rejected in the execution on input set $\+D$. 
In particular, $\+D\cap R=\emptyset$.
Together with $T\subseteq\+D$, this implies $\+D=T\cup S$ for some $S\subseteq\+C\setminus(T\cup R)=\fr(T)$.
\end{proof}

\subsection{$2$-Tree Expansion}
The key step towards obtaining an efficient algorithm is to establish a 2-tree expansion, which provides the following recursion for the constraint marginals.

\begin{lemma}[Recursion through $2$-tree expansion]\label{lemma:2-tree-expansion-recursion}

For each $T\in \mathfrak{T}_{\+C,c_0}$, enumerate the constraints in
\[
\+C\setminus(\{c_0\}\cup \fr(T))
=
\set{c^T_1,\ldots,c^T_{\ell^T}},
\]
where $\ell^{T}\defeq\abs{\+C\setminus(\{c_0\}\cup \fr(T))}$.
For each $0\le i\le\ell^{T}$, define 
\[
\+C_i^{T} \defeq \fr(T)\cup \set{c^T_1,\dots,c^T_i}.
\]
Then, the marginal probability $r_{\+C,c_0}$ satisfies
\begin{equation}\label{eq:2-tree-expansion-recurrence}
r_{\+C,c_0}=\sum\limits_{T\in \mathfrak{T}_{\+C,c_0} }(-1)^{\abs{T}-1} 
    \Pr{
        \bigwedge_{c\in T}\neg c
        \,\middle|\,
        \fr(T)
    }
\cdot \prod\limits_{i=1}^{\ell^{T}}\frac{1}{1-r_{\+C^{T}_i,c^{T}_i}}.
\end{equation}
\end{lemma}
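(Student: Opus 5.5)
We start from the inclusion--exclusion identity \eqref{eq:inclusion-exclusion} and group the subsets $\+D\ni c_0$ according to their canonical $2$-tree $\tr(\+D)$. By \Cref{lem:property-2-tree}, every such $\+D$ is mapped into $\mathfrak{T}_{\+C,c_0}$. By \Cref{lem:property-gen-tree}, the fibre over each $T\in\mathfrak{T}_{\+C,c_0}$ is exactly $\{T\cup S: S\subseteq\fr(T)\}$, and these unions are disjoint since $T\cap\fr(T)=\emptyset$. Hence
\[
\Pr{\neg c_0\land(\+C\setminus\{c_0\})}=\sum_{T\in\mathfrak{T}_{\+C,c_0}}(-1)^{\abs{T}-1}\sum_{S\subseteq\fr(T)}(-1)^{\abs{S}}\Pr{\bigwedge_{c\in T\cup S}\neg c}.
\]
Applying inclusion--exclusion again to the inner sum (in reverse) gives $\Pr{\left(\bigwedge_{c\in T}\neg c\right)\land\fr(T)}$, where $\fr(T)$ denotes the event that all free constraints are satisfied. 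Writing this as $\Pr{\bigwedge_{c\in T}\neg c\mid\fr(T)}\cdot\Pr{\fr(T)}$ is legitimate because $\Pr{\fr(T)}>0$ by \Cref{thm:locallemma}; every subinstance satisfies \Cref{cond:sym-counting-LLL}.

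Dividing by $\Pr{\+C\setminus\{c_0\}}$, it remains to show
\[
\frac{\Pr{\fr(T)}}{\Pr{\+C\setminus\{c_0\}}}=\prod_{i=1}^{\ell^T}\frac{1}{1-r_{\+C_i^T,c_i^T}}.
\]
This is a telescoping product. Since $c_0\notin\fr(T)$ (it lies in $T$), we have $\+C_0^T=\fr(T)$ and $\+C_{\ell^T}^T=\fr(T)\cup(\+C\setminus(\{c_0\}\cup\fr(T)))=\+C\setminus\{c_0\}$. For each $i$, the definition \eqref{eq:def-marginal-violation} applied to the constraint set $\+C_i^T$ and the constraint $c_i^T\in\+C_i^T$ gives
\[
1-r_{\+C_i^T,c_i^T}=\frac{Z(\+C_i^T)}{Z(\+C_{i-1}^T)}=\frac{\Pr{\+C_i^T}}{\Pr{\+C_{i-1}^T}}.
\]
Multiplying over $i$ yields $\Pr{\+C\setminus\{c_0\}}/\Pr{\fr(T)}$. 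All denominators are positive by \Cref{thm:locallemma}, and each $r_{\+C_i^T,c_i^T}\le 1.2p<1$ by \Cref{lem:marginal-bound}, so each reciprocal is well defined. Substituting back gives \eqref{eq:2-tree-expansion-recurrence}.

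The main point needing care is the second inclusion--exclusion step together with the bookkeeping of signs. The term $(-1)^{\abs{T\cup S}-1}$ splits as $(-1)^{\abs{T}-1}(-1)^{\abs{S}}$ because $S$ and $T$ are disjoint. One must also check that the inner sum equals the probability that all of $T$ is violated and all of $\fr(T)$ is satisfied. This follows from the standard identity $\mathbf 1[\bigwedge_{c\in F}c]=\prod_{c\in F}(1-\mathbf 1[\neg c])$, expanded and multiplied by $\mathbf 1[\bigwedge_{c\in T}\neg c]$. Beyond this, the argument is routine, because \Cref{lem:property-gen-tree} already supplies the exact structure of the fibres.
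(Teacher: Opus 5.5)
Your proposal is correct and follows essentially the same route as the paper: regroup the inclusion--exclusion sum by canonical $2$-tree using \Cref{lem:property-gen-tree}, collapse the inner alternating sum over $S\subseteq\fr(T)$ into $\Pr{(\bigwedge_{c\in T}\neg c)\land\fr(T)}$, and telescope $\Pr{\fr(T)}/\Pr{\+C\setminus\{c_0\}}$ along the chain $\+C_0^T\subseteq\cdots\subseteq\+C_{\ell^T}^T$. Your additional checks (the sign splitting, $\+C_i^T\setminus\{c_i^T\}=\+C_{i-1}^T$, and positivity via \Cref{thm:locallemma}) are details the paper's proof leaves implicit.
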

\begin{proof}
We first regroup the inclusion-exclusion expansion in \eqref{eq:inclusion-exclusion} according to the canonical 2-trees generated by the constraint families. This gives the following 2-tree expansion:
\begin{equation}\label{eq:2-tree-expansion}
    r_{\+C,c_0}=\frac{1}{\Pr{\mathcal{C}\setminus\{c_0\}}}\sum\limits_{T\in \mathfrak{T}_{\+C,c_0}}\sum\limits_{\substack{c_0\in\+D\subseteq \+C\\ \tr(\+D)=T }}(-1)^{\abs{\+D}-1}\Pr{\bigwedge\limits_{c\in \+D } \neg c}.
\end{equation}
In other words, the 2-tree expansion in \eqref{eq:2-tree-expansion} is obtained by partitioning the terms in \eqref{eq:inclusion-exclusion} according to the canonical 2-trees generated by the corresponding constraint families.

By \Cref{lem:property-gen-tree}, every constraint family generating a fixed $T\in\mathfrak{T}_{\+C,c_0}$ can be uniquely represented as $T\cup S$ for some $S\subseteq\fr(T)$. Therefore, \eqref{eq:2-tree-expansion} can be rewritten as
\begin{equation}\label{eq:2-tree-expansion-refined}
\begin{aligned}
    r_{\+C,c_0} =& \;\; \frac{1}{\Pr{\mathcal{C}\setminus\{c_0\}}}\sum\limits_{T\in \mathfrak{T}_{\+C,c_0}}(-1)^{\abs{T}-1}\sum\limits_{S\subseteq \fr(T)}(-1)^{\abs{S}}\Pr{\bigwedge\limits_{c\in S \cup T}\neg c}\\
    =& \;\; \frac{1}{\Pr{\mathcal{C}\setminus\{c_0\}}}\sum\limits_{T\in \mathfrak{T}_{\+C,c_0}}(-1)^{\abs{T}-1}\Pr{\fr(T) \land \tp{  \bigwedge\limits_{c\in  T }\neg c}}\\
    =& \;\; \sum\limits_{T\in \mathfrak{T}_{\+C,c_0}}(-1)^{\abs{T}-1}
        \Pr{
        \bigwedge_{c\in T}\neg c
        \,\middle|\,
        \fr(T)
    }
\cdot \frac{\Pr{\fr(T)}}{\Pr{\mathcal{C}\setminus\{c_0\}}}.
    \end{aligned}
\end{equation}
It remains to express the ratio $\frac{\Pr{\fr(T)}}{\Pr{\+C\setminus\{c_0\}}}$ recursively. 
By the definition of $\+C_i^T$ and a telescoping product,
\[
\frac{\Pr{\fr(T)}}{\Pr{\mathcal{C}\setminus\{c_0\}}} = \frac{Z(\+C_0^T)}{Z(\+C \setminus\{c_0\})} = \prod_{i = 1}^{\ell^T} \frac{Z(\+C_{i-1}^T)}{Z(\+C_i^T)} = \prod_{i = 1}^{\ell^T} \frac{1}{1 - r_{\+C^{T}_i,c^{T}_i}}.
\]
Substituting the above identity into \eqref{eq:2-tree-expansion-refined} gives the claimed $2$-tree recursion in \eqref{eq:2-tree-expansion-recurrence}.
\end{proof}

\subsection{Correlation Decay for the $2$-Tree Expansion}
We establish a correlation decay property for the $2$-tree expansion. 
Define the following function corresponding to the recursion in \eqref{eq:2-tree-expansion-recurrence}:
\begin{align}
    f(\bm{x})\defeq \sum\limits_{T\in \mathfrak{T}_{\+C,c_0}}(-1)^{\abs{T}-1}\cdot
        \Pr{
        \bigwedge_{c\in T}\neg c
        \,\middle|\,
        \fr(T)
    }
\cdot \prod\limits_{i=1}^{\ell^{T}}\frac{1}{1-x_{\+C^{T}_i,c^{T}_i}}.
\end{align}

\begin{lemma}[Correlation decay for the $2$-tree expansion]\label{lem:2-tree-expansion-correlation-decay}
    Suppose $\Phi=(V,\+Q,\+C)$ satisfies \Cref{cond:sym-counting-LLL}. 
    Let $\alpha=2.1(D+1)^2$ and ${\beta=  p(1-2p)^{-(D+1)^2}}$.
    Then, for every $T\in \mathfrak{T}_{\+C,c_0}$ {and any vector $\bm{x}=(x_{\+C^{T}_i,c^{T}_i})_{T\in \mathfrak{T}_{\+C,c_0},1\leq i\leq \ell^{T}}$ }  satisfying 
    ${-2p\leq x_{\+C^{T}_i,c^{T}_i}\leq 2p}$,
    \begin{equation}\label{eq:2-tree-expansion-single-bound}
        \Pr{
        \bigwedge_{c\in T}\neg c
        \,\middle|\,
        \fr(T)
    }
\prod\limits_{i=1}^{\ell^{T}}\frac{1}{1-{x_{\+C^{T}_i,c^{T}_i}}}\leq \beta^{|T|}.
    \end{equation}
Moreover, for any two vectors
    $\bm{x}=(x_{\+C^{T}_i,c^{T}_i})_{T\in \mathfrak{T}_{\+C,c_0},1\leq i\leq \ell^{T}}$ 
    and 
    $\bm{y}=(y_{\+C^{T}_i,c^{T}_i})_{T\in \mathfrak{T}_{\+C,c_0},1\leq i\leq \ell^T}$
    satisfying 
    ${-2p\leq x_{\+C^{T}_i,c^{T}_i},y_{\+C^{T}_i,c^{T}_i}\leq 2p}$
    for every  $T\in \mathfrak{T}_{\+C,c_0}$ and  $1\leq i\leq \ell^T$,
    we have
    \begin{equation}\label{eq:2-tree-expansion-total-bound}
    \abs{f(\bm{x})-f(\bm{y})}\leq \alpha \sum\limits_{T\in \mathfrak{T}_{\+C,c_0}}{(1.2\beta)}^{\abs{T}} \max\limits_{1\leq i\leq \ell^{T}}\abs{x_{\+C^{T}_i,c^{T}_i}-y_{\+C^{T}_i,c^{T}_i}}.
    \end{equation}
\end{lemma}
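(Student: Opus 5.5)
The plan is to bound each tree term separately. The first part comes from factorizing over the independent set $T$ and applying \Cref{thm:HSS}. The second part comes from a mean-value argument on the product $\prod_i (1-x_i)^{-1}$, with the first part controlling the value along the interpolation path.

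\emph{Counting $\ell^T$.} Run $\tr(T)$. Any constraint adjacent to $T$ is at distance $1$ from the current tree from the moment its neighbour in $T$ is added. So it is never selected, and therefore lies in $\fr(T)$. Every rejected constraint was at distance exactly $2$ from the tree when it was selected, so $R(T)\subseteq\Gamma^{\le 2}_{\+C}(T)\setminus\Gamma^+_{\+C}(T)$. Each $c\in T$ has at most $D(D-1)$ constraints at distance exactly $2$, hence $|R|\le |T|D(D-1)$. Since $\+C\setminus(\{c_0\}\cup\fr(T))=(T\setminus\{c_0\})\cup R$, we get
\[
\ell^T\le |T|-1+|T|D(D-1)\le |T|(D^2+1)-1 .
\]

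\emph{First inequality.} $T$ is independent in $G(\+C)$, so the constraints of $T$ have disjoint variable sets and
\[
\Pr{\bigwedge_{c\in T}\neg c}=\prod_{c\in T}\Pr{\neg c}\le p^{|T|}.
\]
Now apply \Cref{thm:HSS} to the subinstance with constraint set $\fr(T)$, taking $x(c)=1.2p$. The LLL hypothesis for this subinstance holds exactly as in the proof of \Cref{lem:marginal-bound}, since passing to a subset of constraints only lowers degrees. The event $\bigwedge_{c\in T}\neg c$ depends on $\vbl(T)$, and at most $|T|D$ constraints of $\fr(T)$ meet $\vbl(T)$. This gives
\[
\Pr{\bigwedge_{c\in T}\neg c\,\middle|\,\fr(T)}\le p^{|T|}(1-1.2p)^{-|T|D}.
\]
Each factor $(1-x_i)^{-1}$ with $|x_i|\le 2p$ is at most $(1-2p)^{-1}$. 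Multiplying these bounds gives at most
\[
p^{|T|}(1-2p)^{-|T|(D+D^2+1)}\le \beta^{|T|},
\]
because $D^2+D+1\le (D+1)^2$.

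\emph{Second inequality.} By the triangle inequality over $T\in\mathfrak{T}_{\+C,c_0}$, it suffices to handle a single tree term. Set $g(\bm z)=\prod_{i=1}^{\ell^T}(1-z_i)^{-1}$ and interpolate $\bm z(s)=\bm y+s(\bm x-\bm y)$. The path stays in the convex box $[-2p,2p]^{\ell^T}$, and
\[
\partial_{z_j}g=g\cdot(1-z_j)^{-1}\le g\cdot(1-2p)^{-1}.
\]
Integrating along the path, the difference of the tree terms is at most
\[
\Pr{\bigwedge_{c\in T}\neg c\,\middle|\,\fr(T)}\cdot\sup_{s} g(\bm z(s))\cdot\frac{\ell^T}{1-2p}\cdot\max_i|x_i-y_i|.
\]
By the first part, the first two factors together are at most $\beta^{|T|}$. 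It then remains to verify the elementary inequality
\[
\frac{\ell^T}{1-2p}\le \frac{|T|(D+1)^2}{1-2p}\le 2.1(D+1)^2\cdot 1.2^{|T|}.
\]
This holds because $\max_{t\ge1} t/1.2^{t}<2.01$ (attained near $t=5$). Also $\tfrac{1}{1-2p}\le 1.01$, since \Cref{cond:sym-counting-LLL} with $D\ge1$ gives $p\le 1/(16\e)$, and $2.01\times1.01\le 2.1$.

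\emph{Main obstacle.} The only delicate step is the combinatorial bookkeeping. One must check that neighbours of $T$ are free, so they enter only through \Cref{thm:HSS} and not through $\ell^T$. One must also check that $|R|\le |T|D(D-1)$, so that the combined exponent stays at most $(D+1)^2$ per tree node. Once this accounting is in place, the remaining estimates are routine.
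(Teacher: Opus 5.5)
Your overall strategy is the paper's. You use \Cref{thm:HSS} with $x(c)=1.2p$ for the conditional probability, count $\ell^T$ through the rejected set, and take a Lipschitz bound on $\prod_i(1-z_i)^{-1}$; your interpolation along $\bm z(s)$, reusing the first bound for $\sup_s g$, is equivalent to the paper's telescoping claim. You then absorb $\ell^T$ via $t\le 2.1\cdot 1.2^{t}$. However, two steps are false as written.

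\textbf{Neighbours of $T$ need not be free.} A constraint can be selected and rejected while it is at distance $2$ from the current tree, and only later become adjacent to a newly accepted vertex. For example, take $\vbl(c_0)=\{u\}$, $\vbl(e)=\{u,v\}$, $\vbl(d)=\{v,w\}$, $\vbl(c)=\{v,x\}$, with index order $c_0<e<d<c$ and $T=\{c_0,c\}$. In the run of $\tr(T)$, the constraint $d$ is selected first, since it is at distance $2$ from $c_0$ and precedes $c$. It is rejected, and only afterwards is $c$ accepted. So $d\in R(T)$ even though $d$ is adjacent to $c\in T$, and the inclusion $R(T)\subseteq\Gamma^{\le2}_{\+C}(T)\setminus\Gamma^{+}_{\+C}(T)$ fails in general. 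The paper only uses that a rejected constraint was at distance exactly $2$ from the tree when it was selected. Your count $|R(T)|\le D(D-1)|T|$ still holds for that reason: each rejected constraint is at distance exactly $2$ from some vertex of the then-current tree, and that vertex lies in $T$. So the step you single out as delicate is not true, but it is also not needed.

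\textbf{The final arithmetic fails for $D=1$.} For $D=1$, \Cref{cond:sym-counting-LLL} allows $p=1/(16\e)\approx 0.023$. Then $1/(1-2p)\approx 1.048$, not at most $1.01$. Combined with your lossy replacement $\ell^T\le |T|(D+1)^2$, the inequality you need breaks. At $D=1$ and $|T|=5$ the left side is $20/(1-1/(8\e))\approx 20.96$, while $2.1\cdot 4\cdot 1.2^5\approx 20.90$.

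The repair is to keep the bound $\ell^T\le (D^2+1)|T|-1$, which you already derived. Then $\ell^T/(1-2p)\le (D^2+1)|T|/(1-2p)\le (D+1)^2|T|$, because $p(D+1)^2\le 1/(4\e)<D$. The inequality $|T|\le 2.1\cdot 1.2^{|T|}$ then finishes. Alternatively, as the paper does, absorb the extra factor $(1-2p)^{-1}$ into $\beta^{|T|}$ using the $-1$ in the bound on $\ell^T$.

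With these two corrections your argument proves the lemma.
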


\begin{proof}
We first bound the contribution associated with each fixed 
$T\in\mathfrak{T}_{\+C,c_0}$.
Since $\Phi=(V,\+Q,\+C)$ satisfies \Cref{cond:sym-counting-LLL}, we can set
$x(c)=1.2p$ in \Cref{thm:HSS}. Therefore,
\begin{equation}
\label{eq:conditional-probability-upper-bound}
    \Pr{
        \bigwedge_{c\in T}\neg c
        \,\middle|\,
        \fr(T)
    }
    \leq
    \Pr{
        \bigwedge_{c\in T}\neg c
    }
    (1-1.2p)^{-D|T|}
    \leq
    p^{|T|}(1-1.2p)^{-D|T|}.
\end{equation}
We next bound the value of $\ell^T$.
Every constraint is added to $R$ in \Cref{Alg:gen-tree}
only when it has distance exactly $2$ from the current $2$-tree.
Hence, the number of rejected constraints in $R$ is at most $D^2|T|$.
Consequently,
    \begin{equation}\label{eq:ell-T-upper-bound}
    \ell^{T}=\abs{\+C\setminus(\{c_0\}\cup \fr(T))}=\abs{R\cup T}-1\leq D^2|T|+|T|-1.
\end{equation}
{By $-2p\leq x_{\+C_i^T,c_i^T}\leq 2p$}, for every $1\le i\le \ell^T$.
It then follows from \eqref{eq:conditional-probability-upper-bound} and \eqref{eq:ell-T-upper-bound} that
\[
\Pr{
    \bigwedge_{c\in T}\neg c
    \,\middle|\,
    \fr(T)
}
\prod_{i=1}^{\ell^T}
\frac{1}{1-{x_{\+C_i^T,c_i^T}}}
\leq
p^{|T|}
(1-1.2p)^{-D|T|}
(1-{2p})^{-\ell^T}
\leq
\beta^{|T|}.
\]
We now prove the bound in \eqref{eq:2-tree-expansion-total-bound}.
The following claim bounds the sensitivity of the product term.
\begin{claim} \label{claim:product-lipshitz}
For any $0<t<1$, $\ell\geq1$, and any two vectors
$(x_i)_{i\in[\ell]},(y_i)_{i\in[\ell]}\in[{-t},t]^\ell$,
\begin{equation}\label{eq:product-lipschitz-polymer}
\abs{
\prod_{i=1}^{\ell}\frac{1}{1-x_i}
-
\prod_{i=1}^{\ell}\frac{1}{1-y_i}}
\leq
\frac{\ell}{(1-t)^{\ell+1}} \cdot 
\max_{1\leq i\leq\ell}|x_i-y_i|.
\end{equation}
\end{claim}

Applying the claim with $t={2p}$, together with
${-2p}\leq x_{\+C_i^T,c_i^T},y_{\+C_i^T,c_i^T}\leq 2p$, gives
\begin{align*}
|f(\bm{x})-f(\bm{y})|
\leq&
\sum_{T\in\mathfrak{T}_{\+C,c_0}}
\Pr{
    \bigwedge_{c\in T}\neg c
    \,\middle|\,
    \fr(T)
}
\cdot
\left|
\prod_{i=1}^{\ell^T}
\frac{1}{1-x_{\+C_i^T,c_i^T}}
-
\prod_{i=1}^{\ell^T}
\frac{1}{1-y_{\+C_i^T,c_i^T}}
\right|
\\
\leq&
\sum_{T\in\mathfrak{T}_{\+C,c_0}}
\Pr{
    \bigwedge_{c\in T}\neg c
    \,\middle|\,
    \fr(T)
}
\frac{\ell^T}{(1-{2p})^{\ell^T+1}}
\cdot
\max_{1\leq i\leq\ell^T}
|x_{\+C_i^T,c_i^T}-y_{\+C_i^T,c_i^T}|.
\end{align*}
Using \eqref{eq:conditional-probability-upper-bound} and
\eqref{eq:ell-T-upper-bound}, we obtain
\[
|f(\bm{x})-f(\bm{y})|
\leq
\alpha
\sum_{T\in\mathfrak{T}_{\+C,c_0}}
{(1.2\beta)}^{|T|}
\max_{1\leq i\leq\ell^T}
\abs{x_{\+C_i^T,c_i^T}-y_{\+C_i^T,c_i^T}}.
\]
Here we used the elementary inequality
$x\leq2.1(1.2)^x$ for $x\geq1$ to absorb the polynomial factor
arising from $\ell^T$ into the exponential term ${(1.2\beta)^{|T|}}$.
This proves \eqref{eq:2-tree-expansion-total-bound}.
\end{proof}

\begin{proof}[Proof of \Cref{claim:product-lipshitz}]
We use the following telescoping decomposition:
\begin{equation}\label{eq:lipshitz-decomposition-1}
\prod_{i=1}^{\ell}\frac{1}{1-x_i}-\prod_{i=1}^{\ell}\frac{1}{1-y_i}=\sum\limits_{j=1}^{\ell}\tp{\prod\limits_{i<j}\frac{1}{1-x_i}}\tp{\frac{1}{1-x_j}-\frac{1}{1-y_j}}\tp{\prod\limits_{i>j}\frac{1}{1-y_i}}.
\end{equation}
Fix $0\leq t<1$ and assume that $(x_i)_{i\in[\ell]},(y_i)_{i\in[\ell]}\in[{-t},t]^\ell$. For every $1\leq j\leq\ell$,
\begin{equation}\label{eq:lipshitz-decomposition-2}
\abs{\frac{1}{1-x_j}-\frac{1}{1-y_j}}=\frac{\abs{x_j-y_j}}{(1-x_j)(1-y_j)}\leq \frac{\abs{x_j-y_j}}{(1-t)^2}.
\end{equation}
Combining \eqref{eq:lipshitz-decomposition-1} and \eqref{eq:lipshitz-decomposition-2}, and using the fact that every remaining reciprocal factor is at most $(1-t)^{-1}$, we obtain
\[
\abs{
\prod_{i=1}^{\ell}\frac{1}{1-x_i}
-
\prod_{i=1}^{\ell}\frac{1}{1-y_i}}\leq \sum\limits_{i=1}^{\ell}\frac{\abs{x_i-y_i}}{(1-t)^{\ell+1}}\leq \frac{\ell}{(1-t)^{\ell+1}}\max\limits_{1\leq i\leq \ell}\abs{x_i-y_i}.
\]
This proves the claim.
\end{proof}

\section{Deterministic Approximate Counting}

\subsection{Deterministic Marginal Estimation}

We first present a deterministic algorithm for estimating the constraint marginal probability $r_{\+C,c_0}$. 
The algorithm is obtained by truncating the recursive computation induced by the $2$-tree expansion \eqref{eq:2-tree-expansion-recurrence}.

Let $L\geq 0$ be an integer truncation budget. 
We define recursively an estimate $\hat{r}^{(L)}_{\+C,c_0}$ of $r_{\+C,c_0}$. 
For the base case $L=0$, we set
\[
    \hat{r}^{(0)}_{\+C,c_0}=0.
\]
For $L>0$, recall the notation $\ell^T$, $\{c_i^T\}_{1\leq i\leq\ell^T}$, and $\{\+C_i^T\}_{0\leq i\leq\ell^T}$ from \Cref{lemma:2-tree-expansion-recursion}. 
We recursively define
\begin{equation}
\label{eq:2-tree-recursion-truncation}
\hat{r}^{(L)}_{\+C,c_0}
=
\sum_{\substack{T\in \mathfrak{T}_{\+C,c_0}\\ |T|\leq L}}
(-1)^{|T|-1}
\Pr{
    \bigwedge_{c\in T}\neg c
    \,\middle|\,
    \fr(T)
}\cdot
\prod_{i=1}^{\ell^T}
\frac{1}{
    1-\hat{r}^{(L-|T|)}_{\+C_i^T,c_i^T}
}
.
\end{equation}

Note that \eqref{eq:2-tree-recursion-truncation} is a truncated version of the $2$-tree expansion recursion in \eqref{eq:2-tree-expansion-recurrence}. 
It maintains a truncation budget $L\geq0$: at each recursive step, it only expands $2$-trees of size at most $L$ and decreases the remaining budget by the size of the expanded $2$-tree. 

We first bound the cost of computing $\hat{r}^{(L)}_{\+C,c_0}$, both in terms of the number of constraint evaluations and the overall running time.

\begin{lemma}[Efficiency of the marginal estimator]\label{lem:ratio-2-tree-efficiency}
    Let $\Phi=(V,\+Q,\+C)$ satisfy \Cref{cond:sym-counting-LLL}. 
    For any constraint $c_0\in \+C$ and any truncation budget $L\geq 0$, the estimate $\hat{r}^{(L)}_{\+C,c_0}$ can be computed using $q^{O(kDL)}$ calls to the evaluation oracle in \Cref{assumption:evaluation-oracle} within time $q^{O(kDL)}$.
\end{lemma}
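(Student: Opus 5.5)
We bound the cost of the recursion tree generated by \eqref{eq:2-tree-recursion-truncation} by induction on the budget $L$. Let $N(L)$ be the maximum number of oracle calls and elementary operations needed to compute $\hat r^{(L)}_{\+C',c}$, where the maximum is over all sub-instances $\+C'\subseteq\+C$ and $c\in\+C'$. Sub-instances are fine here, because they have dependency degree at most $D$ and still satisfy \Cref{cond:sym-counting-LLL}. We will establish a recurrence of the form
\[
N(L)\le \sum_{t=1}^{L} A_t\,\bigl(B_t + (D^2+1)t\cdot N(L-t)\bigr),\qquad N(0)=O(1).
\]
Here $A_t$ bounds the number of $2$-trees of size $t$ rooted at $c$, $B_t$ bounds the cost of computing $T$, $\fr(T)$, the ordering $c^T_1,\dots,c^T_{\ell^T}$, and the conditional probability for one such tree, and $(D^2+1)t$ bounds $\ell^T$ by \eqref{eq:ell-T-upper-bound}.

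\textbf{Enumeration.} By \Cref{cor:num-2-tree} applied to $G(\+C')$ with $\Delta=D$, we get $A_t\le (\e D^2)^{t-1}$. These trees can be enumerated by growing connected sets in $G^2$ from $c$, and the enumeration overhead is polynomial in $D^t$. For each candidate $T$, we run \Cref{Alg:gen-tree} on input $T$. This only inspects constraints within distance $2$ of $T$, so it costs $\poly(D,t)$ per step, and it returns $R(T)$. Hence $\fr(T)=\+C'\setminus(T\cup R)$ is determined implicitly, together with $\ell^T=|R\cup T|-1$.

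\textbf{Conditional probability, the main step.} We need to compute $\Pr{\bigwedge_{c\in T}\neg c \mid \fr(T)}$ exactly with few oracle calls. Note that $\fr(T)$ may be a huge global set, so a direct computation is not local. The plan is to show that the event $\bigwedge_{c\in T}\neg c$ depends only on $\vbl(T)$. Every constraint of $\fr(T)$ that touches $\vbl(T)$ lies in $\Gamma^+(T)\setminus T$, since constraints at distance $2$ from $T$ were all selected and so lie in $R\cup T$. Let $\+N=\fr(T)\cap\Gamma^+(T)$, which has $|\+N|\le D t$. Let $\+F=\fr(T)\setminus\+N$; this set is disjoint from $\vbl(T)$. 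Then
\[
\Pr{\textstyle\bigwedge_{T}\neg c\mid \fr(T)}=\frac{\Pr{\bigwedge_T\neg c\wedge \+N\wedge \+F}}{\Pr{\+N\wedge\+F}}
=\frac{\sum_{\tau}\Pr{\bigwedge_T\neg c\wedge \+N\mid \sigma_{\Lambda}=\tau}\Pr{\sigma_\Lambda=\tau\mid \+F}}{\sum_\tau\Pr{\+N\mid\sigma_\Lambda=\tau}\Pr{\sigma_\Lambda=\tau\mid\+F}},
\]
where $\Lambda=\vbl(\+N)\setminus\vbl(T)$ is the boundary shared with $\+F$. This still involves the global marginal of $\+F$ on $\Lambda$, which is not locally computable. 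So we need a cancellation, and I expect this to be the main obstacle.

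The first thing to try is to check whether $\vbl(\+N)\cap\vbl(\+F)$ is empty. A constraint of $\+N$ is adjacent to $T$. Its neighbors are at distance at most $2$ from $T$, and those at distance exactly $2$ belong to $R\cup T$, not to $\fr(T)$. However, its neighbors at distance $1$ from $T$ may lie in $\+F$? No: such neighbors are themselves in $\Gamma^+(T)$, so they lie in $\+N$ if they are free. Hence every free neighbor of $\+N$ is in $\+N$, so $\vbl(\+N)$ and $\vbl(\+F)$ are disjoint. The probability then factorizes, the $\+F$ terms cancel, and
\[
\Pr{\textstyle\bigwedge_T\neg c\mid\fr(T)}=\Pr{\textstyle\bigwedge_T\neg c\mid \+N}.
\]
This quantity is computed by brute force over $\vbl(T\cup\+N)$, which has at most $k(D+1)t$ variables. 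That gives $q^{O(kDt)}$ assignments, each checked with $O(Dt)$ oracle calls, so $B_t=q^{O(kDt)}$.

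\textbf{Solving the recurrence.} Substituting into the recurrence,
\[
N(L)\le\sum_{t\le L}(\e D^2)^{t}\bigl(q^{O(kDt)}+(D^2+1)t\,N(L-t)\bigr).
\]
By induction with hypothesis $N(L')\le q^{CkDL'}$ and $C$ large, the $t$-th term is at most $q^{O(kDt)}q^{CkD(L-t)}$. Here we use $D^{2t}t\le q^{O(kDt)}$, which holds since $q,k\ge2$ give $q^{kD}\ge 4^D\ge D^2$. Summing over $t\le L$ then yields $N(L)\le q^{CkDL}$. Finally, the arithmetic on the rational values is exact, and each individual operation is charged within the same bound, so this also bounds the running time.
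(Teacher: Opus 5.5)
Your proposal is correct and follows essentially the same route as the paper. You enumerate the $(\e D^2)^{t-1}$ rooted $2$-trees of each size and reduce the conditional probability given $\fr(T)$ to conditioning on the distance-one free constraints; your $\+N$ is exactly the paper's $\fr^{(1)}(T)$, and your disjointness argument is the content of \Cref{lem:free-1-probability}, which you re-derive inline with the slightly weaker but harmless bound $|\+N|\le Dt$. You then evaluate that probability by brute force over $q^{O(kDt)}$ local assignments and solve the same recurrence $\tau(L)\le\sum_t (\e D^2)^{t-1}\bigl(q^{O(kDt)}+(D^2+1)t\,\tau(L-t)\bigr)$ by induction.
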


It can be verified by induction that the recursion tree of $\hat{r}^{(L)}_{\+C,c_0}$ has size $D^{O(L)}$.
The main computational bottleneck lies in evaluating conditional probabilities of the form
$\Pr{
        \bigwedge_{c\in T}\neg c
        \,\middle|\,
        \fr(T)
    }$.
A direct computation is inefficient because the constraint set $\fr(T)$ may be large even when $T$ is small. 
The following lemma shows that this conditional probability can be evaluated efficiently.

\begin{lemma}\label{lem:free-1-probability}
Fix $T\in \mathfrak{T}_{\+C,c_0}$ and define
\[
\fr^{(1)}(T)
\defeq
\{c\in\fr(T):\dist_{G(\+C)}(c,T)=1\}.
\]
Then $|\fr^{(1)}(T)|\leq (D-1)|T|+1$ and $\fr^{(1)}(T)$ can be constructed in time $\poly(k,q,D,|T|)$.
Moreover,
\[
\Pr{
    \bigwedge_{c\in T}\neg c
    \,\middle|\,
    \fr(T)
}
=
\Pr{
    \bigwedge_{c\in T}\neg c
    \,\middle|\,
    \fr^{(1)}(T)
}.
\]
\end{lemma}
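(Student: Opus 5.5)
Three things need proving: the size bound on $\fr^{(1)}(T)$, how to construct it, and the identity for the conditional probability. I would do the identity first, since it drives the rest.

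\emph{The identity.} The key observation is that every constraint in $\fr(T)\setminus\fr^{(1)}(T)$ is at distance at least $3$ from $T$. Indeed, take $c\in\fr(T)$. Then $c\notin T$, so $\dist_{G(\+C)}(c,T)\ge 1$. If $\dist_{G(\+C)}(c,T)=2$, consider the termination of $\tr(T)$. The while-condition on Line~\ref{Line:chosen-condition} fails at the end, and $c\notin R$ because $c$ is free. So $c$ cannot be at distance exactly $2$ from the final $T$, a contradiction. Hence
\[
\fr(T)=\fr^{(1)}(T)\cup\fr^{(\ge3)}(T),
\]
where every $c\in\fr^{(\ge3)}(T)$ has $\dist_{G(\+C)}(c,T)\ge3$. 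Such a $c$ shares no variable with any constraint in $T\cup\fr^{(1)}(T)$: every constraint of $\fr^{(1)}(T)$ is adjacent to $T$, so sharing a variable with one of them would put $c$ within distance $2$ of $T$.

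Now let $\Lambda=\vbl(T\cup\fr^{(1)}(T))$. The event $\bigwedge_{c\in T}\neg c\wedge\fr^{(1)}(T)$ is determined by $\sigma_\Lambda$. The event $\fr^{(\ge3)}(T)$ is determined by $\sigma_{V\setminus\Lambda}$. Under the product measure these are independent, so the factor $\Pr{\fr^{(\ge3)}(T)}$ cancels between numerator and denominator of the conditional probability. This gives the claimed equality, and the denominator is positive by \Cref{thm:locallemma}.

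\emph{The size bound.} Here the plan is to exploit how $\tr(T)$ grows. Order the elements of $T$ as $t_1=c_0,t_2,\dots,t_{|T|}$ in order of acceptance. Each $t_j$ with $j\ge2$ was accepted at distance $2$ from the earlier ones, so it has a common neighbor $u_j$ with some earlier $t_{j'}$. Since $T$ is independent, $u_j\notin T$; and $u_j$ is adjacent to $t_j$. So $u_j$ is either in $\fr^{(1)}(T)$ or lies in $R$, and in either case it is adjacent to $t_{j'}$ as well as $t_j$. The neighborhoods of $t_j$ in $\fr^{(1)}(T)$ are counted by
\[
\Bigl|\bigcup_j \Gamma(t_j)\Bigr|\le D+\sum_{j\ge2}\bigl(|\Gamma(t_j)|-1\bigr)\le D+(D-1)(|T|-1),
\]
which is at most $(D-1)|T|+1$. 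The point is that each new $t_j$ contributes at most $D-1$ new neighbors, because $u_j$ was already counted as a neighbor of $t_{j'}$. This is the step I expect to be delicate. Only membership in $\Gamma^+(T)$ is needed here, not membership in $\fr(T)$, so the count of the union of neighborhoods suffices. If the shared-neighbor argument fails, I would fall back on the weaker bound $D|T|$, which is enough for the downstream complexity.

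\emph{The construction.} To build $\fr^{(1)}(T)$ in time $\poly(k,q,D,|T|)$, I would simulate $\tr(T)$ locally. The simulation only examines constraints within distance $2$ of the current tree, there are at most $D^2|T|$ of these, and each candidate's membership in $T$ is checked directly. From the resulting rejected set $R$, I then take the neighbors of $T$ that are not in $R$ and not in $T$.
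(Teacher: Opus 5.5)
Your proposal is correct and follows essentially the same route as the paper: the identity rests on the observation that every free constraint not adjacent to $T$ is at distance at least $3$, hence variable-disjoint from $T\cup\fr^{(1)}(T)$, so its factor cancels under the product measure; the size bound is the same connectivity count, and your acceptance-order argument (each new $t_j$ shares a neighbor with an earlier tree constraint) is an explicit version of the paper's ``$|T|-1$ incidences are needed to connect $T$ in $G^2$'' step. Your construction step, which simulates $\tr(T)$ to recover the rejected set before filtering the neighbors of $T$, is slightly more careful than the paper's one-line description, which does not mention that rejected constraints can also be adjacent to $T$.
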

\begin{proof}
Every constraint in $\fr^{(1)}(T)$ is adjacent in $G(\+C)$ to at least one constraint in $T$. Hence the number of such constraints is bounded by the number of incidences between $T$ and its neighbors.
Since $T$ is an independent set in $G(\+C)$ and is connected in $G^2(\+C)$, at least $|T|-1$ of these incidences are required to connect the vertices of $T$ in $G^2(\+C)$. Therefore,
\[
|\fr^{(1)}(T)| \leq D|T|-(|T|-1) = (D-1)|T|+1.
\]
The set $\fr^{(1)}(T)$ can be constructed by enumerating the constraints adjacent to $T$ in $G(\+C)$, which takes time $\poly(k,q,D,|T|)$.

It remains to prove the conditional probability identity. By the definition of $\fr(T)$, no constraint in $\fr(T)$ can have distance $2$ from $T$.
Hence, every constraint in $\fr(T)\setminus\fr^{(1)}(T)$ is at distance at least $3$ from $T$. 
Therefore, in the dependency graph $G(\+C)$ there is no edge between $T\cup\fr^{(1)}(T)$ and $\fr(T)\setminus\fr^{(1)}(T)$. 
Consequently, conditioned on $\fr^{(1)}(T)$, the constraints in $\fr(T)\setminus\fr^{(1)}(T)$ are independent of the events involving constraints in $T$. 
Hence,
\[
\Pr{
    \bigwedge_{c\in T}\neg c
    \,\middle|\,
    \fr(T)
}
=
\Pr{
    \bigwedge_{c\in T}\neg c
    \,\middle|\,
    \fr^{(1)}(T) 
}.\qedhere
\]
\end{proof}

We are now ready to prove \Cref{lem:ratio-2-tree-efficiency}.
\begin{proof}[Proof of \Cref{lem:ratio-2-tree-efficiency}]
We only bound the computation cost; the oracle complexity follows identically. 

By \Cref{cor:num-2-tree}, the number of $2$-trees of size $t$ containing
$c_0$ is at most $(\mathrm eD^2)^{t-1}$.
Moreover, all such $2$-trees can be enumerated in time $\poly(k,q,D,t) \cdot (\mathrm eD^2)^{t - 1}$.

Fix a $2$-tree $T$. 
By \Cref{lem:free-1-probability}, the set $\fr^{(1)}(T)$ can be constructed in time $\poly(k,q,D,|T|)$, and 
\[
\Pr{
    \bigwedge_{c\in T}\neg c
    \,\middle|\,
    \fr(T)}
=
\Pr{
    \bigwedge_{c\in T}\neg c
    \,\middle|\,
    \fr^{(1)}(T)}
    \]
can be evaluated by enumerating assignments on the variables appearing in $T\cup\fr^{(1)}(T)$. 
Since $|T\cup\fr^{(1)}(T)|\leq D|T|+1$,
the number of variables involved is at most $k(D|T|+1)$.
Therefore the cost of evaluating this conditional probability is $\poly(k,q,D,|T|) \cdot q^{k(D|T|+1)}$.

For a fixed recursion depth $L$, let $\tau(L)$ denote the maximum cost of computing $\hat r^{(L)}_{\+D,c}$ over all $\+D\subseteq\+C$ and $c\in\+D$.
For every enumerated $T$, the recursive calls decrease the truncation depth by $|T|$. 
Also, the number of recursive subinstances generated by $T$ is at most
$|\Gamma_{\+C}^{\le2}(T)|\leq (D+1)^2|T|$.
Therefore,
    $$\tau(L) \leq \sum_{t = 1}^L \poly(k,q,D,t) \cdot (\e D^2)^{t - 1} \cdot \left( q^{k(Dt + 1)} + (D^2 + 1) t \cdot \tau(L - t) \right).$$
    An induction on $L$ shows that for any $L \ge 1$,
    \[
    \tau(L)
    \leq
    \left(
    \poly(k, q, D) \cdot q^{k(D + 1)}
    \right)^L \le q^{O(kDL)}.\qedhere
    \]
\end{proof}

We next bound the error of the estimate $\hat{r}^{(L)}_{\+C,c_0}$. { Note that we simultaneously show the stability of the intermediate marginal estimates $\hat{r}^{(L)}_{\+C,c_0}$ defined recursively in \eqref{eq:2-tree-recursion-truncation}.  }

\begin{lemma}[Accuracy of the marginal estimator]
\label{lem:ratio-2-tree-correctness}
Let $\Phi=(V,\+Q,\+C)$ satisfy \Cref{cond:sym-counting-LLL}.
For any constraint $c_0\in\+C$ and any truncation budget $L\geq0$, 
\[
    { -2p\leq \hat{r}^{(L)}_{\+C,c_0}\leq 2p,\quad }\abs{\hat{r}^{(L)}_{\+C,c_0}-r_{\+C,c_0}}
    \leq
    4\cdot(0.9)^L .
\]
\end{lemma}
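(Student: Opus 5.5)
Induction on $L$, with both claims proved together and uniformly over all subinstances $(\+D,c)$ with $c\in\+D\subseteq\+C$. Every subinstance satisfies \Cref{cond:sym-counting-LLL}, so \Cref{lem:marginal-bound} and \Cref{lem:2-tree-expansion-correlation-decay} apply to each. The base case $L=0$ is immediate: $\hat r^{(0)}=0\in[-2p,2p]$, and by \Cref{lem:marginal-bound} the error is $r_{\+C,c_0}\le 1.2p\le 4$.

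\textbf{Range bound.} For $L\ge1$, the induction hypothesis gives $\hat r^{(L-|T|)}_{\+C_i^T,c_i^T}\in[-2p,2p]$ for every inner estimate appearing in \eqref{eq:2-tree-recursion-truncation}. So \eqref{eq:2-tree-expansion-single-bound} bounds each term by $\beta^{|T|}$ in absolute value. Summing over $2$-trees with \Cref{cor:num-2-tree} (square-graph degree at most $D^2$, or $(D+1)^2$ to be safe) gives
\[
\abs{\hat r^{(L)}_{\+C,c_0}}\le \sum_{t\ge1}(\e D^2)^{t-1}\beta^t=\beta\sum_{t\ge0}(\e D^2\beta)^t .
\]
Under $4\e p(D+1)^2\le1$ we have $(1-2p)^{-(D+1)^2}\le \e^{2p(D+1)^2/(1-2p)}\le \e^{1/(2\e)\cdot(1+o(1))}\approx 1.21$. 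Hence $\beta\le 1.22p$ and $\e D^2\beta\le 1.22/4<0.31$, so the sum is at most $1.22p/0.69<2p$. This closes the range bound. I would also keep the $|T|=1$ term separate in case the constants get tight.

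\textbf{Error bound.} The key observation is that $r=f(\bm r)$ by \Cref{lemma:2-tree-expansion-recursion}, while $\hat r^{(L)}$ equals $f(\bm x)$ with $x_{\+C_i^T,c_i^T}=\hat r^{(L-|T|)}_{\+C_i^T,c_i^T}$ for $|T|\le L$, except that the terms with $|T|>L$ are omitted. Split
\[
\abs{\hat r^{(L)}-r}\le \sum_{|T|\le L}\Bigl|\text{term}_T(\bm x)-\text{term}_T(\bm r)\Bigr| + \sum_{|T|>L}\abs{\text{term}_T(\bm r)}.
\]
Both $\bm x$ and $\bm r$ lie in $[-2p,2p]$ (the latter by \Cref{lem:marginal-bound}).

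For the tail, \eqref{eq:2-tree-expansion-single-bound} gives at most $\sum_{t>L}(\e D^2)^{t-1}\beta^t\le 2p\,(0.31)^L$.

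For the first sum, apply the per-tree version of \eqref{eq:2-tree-expansion-total-bound}, which its proof establishes termwise, together with the induction hypothesis $\abs{x-r}\le 4(0.9)^{L-t}$. This gives
\[
\alpha\sum_{t=1}^{L}(\e D^2)^{t-1}(1.2\beta)^t\cdot 4(0.9)^{L-t}
=4(0.9)^L\cdot \alpha\cdot 1.2\beta\sum_{t\ge1}\Bigl(\frac{1.2\e D^2\beta}{0.9}\Bigr)^{t-1}\cdot\frac{1}{0.9}.
\]
The main obstacle is the constant check. We need $\alpha\cdot1.2\beta/(0.9(1-1.2\e D^2\beta/0.9))$ plus the tail contribution to be at most $1$.

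Using $\alpha\beta=2.1(D+1)^2\beta\le 2.1\cdot1.22/(4\e)\approx0.236$, the prefactor is $\approx 0.283/0.9\approx0.315$. The ratio $1.2\cdot0.31/0.9\approx0.41$ gives a geometric factor of about $1.7$, for a total of about $0.53$. The tail $2p(0.31)^L$ is at most $0.1\cdot4(0.9)^L$ since $p\le 1/(16\e)$. Together this gives $\le 4(0.9)^L$.

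If the constants turn out marginal, I would tighten the estimate $\beta\le 1.22p$ more carefully, or use the sharper factor $\ell^T/(1-2p)^{\ell^T+1}$ directly rather than the $2.1(1.2)^{|T|}$ absorption.
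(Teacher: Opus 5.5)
Your proposal is correct and follows essentially the same route as the paper: a joint strong induction on $L$ over all subinstances $(\+D,c)$, using \eqref{eq:2-tree-expansion-single-bound} both for the range claim and for the omitted tail $|T|>L$, and the termwise Lipschitz bound from the proof of Lemma~\ref{lem:2-tree-expansion-correlation-decay} together with the inductive error bound $4(0.9)^{L-t}$ for the retained terms. The only difference is cosmetic: you sum the retained terms as a single geometric series, whereas the paper splits off the $t=1$ term. Your constant checks ($\beta\le 1.22p$, $\e D^2\beta<0.31$, $\alpha\beta\approx 0.236$) go through and give a total factor of about $0.64<1$.
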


\begin{proof}
For any $\+D\subseteq\+C$, since removing constraints from $\+C$ does not increase the dependency degree, every subformula $(V,\+Q,\+D)$ also satisfies \Cref{cond:sym-counting-LLL}. 
Hence, for any $c\in\+D$, by \Cref{lem:marginal-bound}, we have $r_{\+D,c}\in[0,1.2p]$ for every $\+D\subseteq\+C$ and $c\in\+D$.

Define
\[
d_L
\defeq
\max_{\substack{\+D\subseteq\+C\\ c\in\+D}}
\abs{\hat{r}^{(L)}_{\+D,c}-r_{\+D,c}} .
\]
{We prove by induction for every $L\geq0$ that $d_L\leq 4 \cdot (0.9)^L$  and $-2p\leq \hat{r}^{(L)}_{\+D,c}\leq 2p$ for every $\+D\subseteq\+C$ and $c\in\+D$, which implies the lemma.}

The base case $L=0$ is immediate, since $\hat{r}^{(0)}_{\+D,c}=0$ for each $\+D\subseteq \+C$ and $r_{\+D,c}\leq1.2p<4$.

For $L\geq 1$, recall from \eqref{eq:2-tree-recursion-truncation} that
\[
\hat{r}^{(L)}_{\+C,c_0}
=
\sum_{\substack{
T\in\mathfrak{T}_{\+C,c_0}\\ |T|\leq L}}
(-1)^{|T|-1}
\Pr{
    \bigwedge_{c\in T}\neg c
    \,\middle|\,
    \fr(T)
}
\prod_{i=1}^{\ell^T}
\frac{1}{
    1-\hat{r}^{(L-|T|)}_{\+C_i^T,c_i^T}
}.
\]
{We first bound the stability of $ \hat{r}^{(L)}_{\+C,c_0}$. Note that by the induction hypothesis and the first bound in \Cref{lem:2-tree-expansion-correlation-decay}, we have for each $T\in\mathfrak{T}_{\+C,c_0}$,
\[
\Pr{
    \bigwedge_{c\in T}\neg c
    \,\middle|\,
    \fr(T)
}
\prod_{i=1}^{\ell^T}
\frac{1}{1-\hat{r}^{(L-|T|)}_{\+C_i^T,c_i^T}}
\leq
\beta^{|T|},
\] 
Therefore, applying \Cref{cor:num-2-tree}, we obtain
\[
\abs{\hat{r}^{(L)}_{\+C,c_0}}\leq \sum\limits_{t=1}^{L}(\e D^2)^{t-1}\beta^t,
\]
where $\beta=  p(1-2p)^{-(D+1)^2}$. 
}

{We then bound $d_L$.} The proof of \Cref{lem:2-tree-expansion-correlation-decay} applies
unchanged to every subfamily of rooted $2$-trees. Therefore, using its first
bound for the omitted terms with $|T|>L$, its second bound for the retained
terms with $|T|\leq L$, and applying \Cref{cor:num-2-tree}, we obtain
\[
d_L
\leq
\sum_{t=L+1}^{\infty}(\e D^2)^{t-1}\beta^t
+
{1.2\alpha\beta d_{L-1}}
+
{1.2\alpha\beta}
\sum_{t=2}^{L}
({1.2\e D^2\beta})^{t-1}d_{L-t},
\]
where $\alpha=2.1(D+1)^2$ and $\beta=  p(1-{2p})^{-(D+1)^2}$. 

We next verify the constants. Under
\Cref{cond:sym-counting-LLL}, we have
\[
{\e D^2\beta<0.32},
\qquad
{\alpha\beta<0.25},
\qquad
{\beta<1.22p }.
\]
Consequently, by the induction hypothesis,
\[
{\abs{\hat{r}^{(L)}_{\+C,c_0}}\leq \sum\limits_{t=1}^{L}(0.32)^{t-1}\beta<\frac{1.22p}{1-0.32}<2p,}
\]
\[
\begin{aligned}
d_L
&\leq
\sum_{t=L+1}^{\infty}0.4^{t-1}\beta
+
1.2\cdot0.9^{L-1}
+
0.3
\sum_{t=2}^{L}
4\cdot0.4^{t-1}0.9^{L-t}
\leq
4\cdot0.9^L .
\end{aligned}
\]
This completes the induction and proves the lemma.
\end{proof}

\subsection{Proof of \Cref{thm:fptas-counting-LLL} (Deterministic Approximate Counting)}
Enumerate the constraints in $\+C$ as $\+C=\{c_1,c_2,\ldots,c_m\}$ and define $\+C_i=\set{c_1,\ldots,c_i}$ for $0\le i\le m$. In particular, $\+C_0=\emptyset$.

We have the telescoping identity
\begin{align}\label{eq:reduction-Z-to-r}
Z_{\Phi}
=\tp{\prod\limits_{v\in V}|Q_v|}\Pr{\+C}
=\tp{\prod\limits_{v\in V}|Q_v|}\prod\limits_{i=1}^{m}\Pr{c_i\mid \+C_{i-1}}
=\tp{\prod\limits_{v\in V}|Q_v|}\cdot \prod\limits_{i=1}^{m}\tp{1-r_{\+C_i,c_i}}.
\end{align}
We then estimate $Z_{\Phi}$ by
\[
\widehat{Z}_{\Phi}=\tp{\prod\limits_{v\in V}|Q_v|}\cdot \prod\limits_{i=1}^{m}\tp{1-\hat{r}^{(L)}_{\+C_i,c_i}},
\]
where the truncation budget is set to
 $L= \lceil\log_{1.1}(40m\varepsilon^{-1})\rceil$.

By \Cref{lem:ratio-2-tree-correctness}, 
\[\abs{\hat{r}^{(L)}_{\+C_i,c_i}-r_{\+C_i,c_i}}\leq 4\cdot 0.9^L\leq 0.1\varepsilon m^{-1}.\]
Moreover, {$r_{\+C_i,c_i},\hat{r}^{(L)}_{\+C_i,c_i}\in [-2p,2p]$}.
Applying the mean value theorem to $f(x)=\log(1-x)$ gives
\[
\abs{\log{\frac{\hat{Z}_{\Phi}}{Z_{\Phi}}}}
\le \sum_{i=1}^m\abs{\log\frac{1-\hat{r}^{(L)}_{\+C_i,c_i}}{1-r_{\+C_i,c_i}}}
\leq \sum\limits_{i=1}^{m}\frac{\abs{\hat{r}^{(L)}_{\+C_i,c_i}-r_{\+C_i,c_i}}}{1-{2p}}\leq 2m\cdot 0.1\varepsilon m^{-1}=0.2\varepsilon,
\]
which implies $(1-\varepsilon)Z_{\Phi}\leq \hat{Z}_{\Phi}\leq (1+\varepsilon)Z_{\Phi}$.

Finally, by \Cref{lem:ratio-2-tree-efficiency}, each marginal estimate can be computed within $q^{O(kDL)}$ time and oracle calls. 
Since the number of constraints satisfies $m\leq n(D+1)$, the total  time and oracle costs are
\[
O(n)+O(m)\cdot q^{O\tp{kD\log\tp{\frac{m}{\varepsilon}}}}\le \tp{\frac{nD}{\varepsilon}}^{O(kD\log q)}.
\]
This completes the proof.

\section{Randomized Approximate Counting}
\subsection{An Alternative 2-Tree Recursion}
To obtain a more efficient randomized approximate counting algorithm, we slightly modify the 2-tree recursion established in \Cref{lemma:2-tree-expansion-recursion}. 
Recall the definition of $\frin(T)$ in \Cref{lem:free-1-probability}. 
Instead of conditioning on the entire set $\fr(T)$, we condition only on
$$\fr(T)\setminus\frin(T)$$ 
and incorporate the constraints in $\frin(T)$ into the recursive expansion.
By \Cref{lem:free-1-probability}, this modification transforms the conditional coefficient appearing in the original recursion into an unconditional joint probability. 
The resulting recursion is therefore more amenable to randomized estimation, since the required quantities can be simulated directly without estimating conditional probabilities.

We first observe the following identity:
\begin{equation}\label{eq:cut-equation}
    \fr(T)\setminus \frin(T)=\+C\setminus \Gamma^{\leq 2}_{\+C}(T),
    \end{equation}
where recall that $\Gamma^{\leq 2}_{\+C}(T)$ denotes the distance-two neighborhood of $T$ in the dependency graph $G(\+C)$.
Indeed, by \Cref{Alg:gen-tree}, every constraint in the final rejected set $R(T)$ has distance at most $2$ from $T$. 
Combining this with the definitions of $\fr(T)$ and $\frin(T)$ gives the claimed identity.

The following lemma establishes the resulting alternative $2$-tree recursion.
\begin{lemma}[Variant of \Cref{lemma:2-tree-expansion-recursion}]
\label{lemma:2-tree-expansion-recursion-alternate}

For each $T\in \mathfrak{T}_{\+C,c_0}$, enumerate the constraints in
\begin{equation}\label{eq:2-tree-expansion-enumeration-alternate}
\+C\setminus\tp{\{c_0\}\cup \left(\fr(T)\setminus \frin(T)\right)}
=
\Gamma^{\leq 2}_{\+C}(T)\setminus \{c_0\}
=
\set{d^T_1,\ldots,d^T_{\iota^T}},
\end{equation}
where 
$\iota^{T}\defeq|\Gamma^{\leq 2}_{\+C}(T)|-1$.
For each $0\le i\le\iota^{T}$, define 
\[
\+D_i^{T} \defeq \tp{\+C\setminus \Gamma^{\leq 2}_{\+C}(T)}\cup \set{d^T_1,\dots,d^T_i}.
\]
Then the marginal probability $r_{\+C,c_0}$ satisfies
\begin{equation}\label{eq:2-tree-expansion-recurrence-alternate}
r_{\+C,c_0}=\sum\limits_{T\in \mathfrak{T}_{\+C,c_0} }(-1)^{\abs{T}-1} \Pr{ \tp{  \bigwedge\limits_{c\in  T }\neg c}\land \frin(T)}\cdot \prod\limits_{i=1}^{\iota^{T}}\frac{1}{1-r_{\+D^{T}_i,d^{T}_i}}.
\end{equation}
\end{lemma}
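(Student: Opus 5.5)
We start from the middle line of \eqref{eq:2-tree-expansion-refined} in the proof of \Cref{lemma:2-tree-expansion-recursion}. That line is derived using only the inclusion--exclusion expansion \eqref{eq:inclusion-exclusion} and the inversion \Cref{lem:property-gen-tree}, and it reads
\[
r_{\+C,c_0}=\frac{1}{\Pr{\+C\setminus\{c_0\}}}\sum_{T\in\mathfrak{T}_{\+C,c_0}}(-1)^{|T|-1}\Pr{\fr(T)\land\tp{\bigwedge_{c\in T}\neg c}}.
\]
So the only task is to rewrite each summand differently from before. Previously we divided by $\Pr{\fr(T)}$. Now we split $\fr(T)$ into $\frin(T)$ and $\fr(T)\setminus\frin(T)$, and use the identity \eqref{eq:cut-equation}, namely $\fr(T)\setminus\frin(T)=\+C\setminus\Gamma^{\le2}_{\+C}(T)=\+D_0^T$.

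The key step is a factorization:
\[
\Pr{\fr(T)\land\tp{\bigwedge_{c\in T}\neg c}}=\Pr{\frin(T)\land\tp{\bigwedge_{c\in T}\neg c}}\cdot\Pr{\+D_0^T}.
\]
This follows from variable-disjointness. Every constraint in $T\cup\frin(T)$ lies at distance at most $1$ from $T$. Every constraint in $\+D_0^T$ lies at distance at least $3$ from $T$, since it is outside $\Gamma^{\le2}_{\+C}(T)$. Hence no constraint of $\+D_0^T$ is adjacent in $G(\+C)$ to a constraint of $T\cup\frin(T)$. If one were, it would be at distance at most $2$ from $T$. Therefore $\vbl(T\cup\frin(T))\cap\vbl(\+D_0^T)=\emptyset$. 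The two events depend on disjoint variable sets, so they are independent under the uniform product measure $\Pr{\cdot}$. This is the same reasoning as in the proof of \Cref{lem:free-1-probability}, and it is the only place where any real argument is needed.

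It remains to express $\Pr{\+D_0^T}/\Pr{\+C\setminus\{c_0\}}$ by telescoping. By \eqref{eq:2-tree-expansion-enumeration-alternate}, adding $d_1^T,\dots,d_{\iota^T}^T$ to $\+D_0^T$ yields exactly $\+C\setminus\{c_0\}$. Here $c_0\in T\subseteq\Gamma^{\le2}_{\+C}(T)$, so $c_0\notin\+D_0^T$, and the $d_i^T$ enumerate $\Gamma^{\le2}_{\+C}(T)\setminus\{c_0\}$. We also check that \eqref{eq:2-tree-expansion-enumeration-alternate} itself holds: $\+C\setminus(\{c_0\}\cup(\fr(T)\setminus\frin(T)))=\Gamma^{\le2}_{\+C}(T)\setminus\{c_0\}$, which follows directly from \eqref{eq:cut-equation}. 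Thus $\+D^T_{\iota^T}=\+C\setminus\{c_0\}$, and
\[
\frac{\Pr{\+D_0^T}}{\Pr{\+C\setminus\{c_0\}}}=\prod_{i=1}^{\iota^T}\frac{Z(\+D_{i-1}^T)}{Z(\+D_i^T)}=\prod_{i=1}^{\iota^T}\frac{1}{1-r_{\+D_i^T,d_i^T}}.
\]
The last equality is the definition \eqref{eq:def-marginal-violation} applied with $\+D_i^T\setminus\{d_i^T\}=\+D_{i-1}^T$. All partition functions involved are positive by \Cref{thm:locallemma}, because every subfamily still satisfies \Cref{cond:sym-counting-LLL}.

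Substituting the factorization and this telescoping product into the displayed expansion gives \eqref{eq:2-tree-expansion-recurrence-alternate}. The main, though mild, obstacle is the independence step: one must confirm that $\frin(T)$ and $T$ together use no variable shared with any constraint at distance at least $3$. The rest is bookkeeping identical to the proof of \Cref{lemma:2-tree-expansion-recursion}.
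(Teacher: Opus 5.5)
Your proposal is correct and follows essentially the same route as the paper: regroup the inclusion--exclusion sum by canonical $2$-trees, split $\fr(T)$ via \eqref{eq:cut-equation}, factor out $\Pr{\+C\setminus\Gamma^{\le2}_{\+C}(T)}$ by variable-disjointness, and telescope to get the reciprocal factors. The paper does the factorization by first conditioning on $\+C\setminus\Gamma^{\le2}_{\+C}(T)$ and then dropping the conditioning, without spelling out why; you state the independence argument explicitly.
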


\begin{proof}
The proof is similar to the proof of \Cref{lemma:2-tree-expansion-recursion}. We include here for completeness. We regroup the inclusion-exclusion expansion in \eqref{eq:inclusion-exclusion} as
\begin{equation}\label{eq:2-tree-expansion-refined-alternate}
\begin{aligned}
    r_{\+C,c_0} =& \;\; \frac{1}{\Pr{\mathcal{C}\setminus\{c_0\}}}\sum\limits_{T\in \mathfrak{T}_{\+C,c_0}}(-1)^{\abs{T}-1}\sum\limits_{S\subseteq \fr(T)}(-1)^{\abs{S}}\Pr{\bigwedge\limits_{c\in S \cup T}\neg c}\\
    =& \;\; \frac{1}{\Pr{\mathcal{C}\setminus\{c_0\}}}\sum\limits_{T\in \mathfrak{T}_{\+C,c_0}}(-1)^{\abs{T}-1}\Pr{  \tp{  \bigwedge\limits_{c\in  T }\neg c}\land \fr(T)}\\
    (\text{by \eqref{eq:cut-equation}})\quad=& \;\; \sum\limits_{T\in \mathfrak{T}_{\+C,c_0}}(-1)^{\abs{T}-1}\Pr{ \tp{\bigwedge\limits_{c\in  T }\neg c}\land  \frin(T) \,\middle|\, \+C\setminus \Gamma^{\leq 2}_{\+C}(T)}\cdot \frac{\Pr{\+C\setminus \Gamma^{\leq 2}_{\+C}(T)}}{\Pr{\mathcal{C}\setminus\{c_0\}}}\\
    =& \;\; \sum\limits_{T\in \mathfrak{T}_{\+C,c_0}}(-1)^{\abs{T}-1}\Pr{ \tp{\bigwedge\limits_{c\in  T }\neg c}\land  \frin(T)}\cdot \frac{\Pr{\+C\setminus\Gamma^{\leq 2}_{\+C}(T)}}{\Pr{\mathcal{C}\setminus\{c_0\}}}.
    \end{aligned}
\end{equation}
It remains to express the ratio $\frac{\oPr[\+C\setminus \Gamma^{\leq 2}_{\+C}(T)]}{\oPr[\+C\setminus\{c_0\}]}$ in a recursive manner.
By the definition of $\+D_i^T$ and a telescoping product,
\[
\frac{\Pr{\+C\setminus\Gamma^{\leq 2}_{\+C}(T)}}{\Pr{\mathcal{C}\setminus\{c_0\}}} = \frac{Z(\+D_0^T)}{Z(\+C \setminus\{c_0\})} = \prod_{i = 1}^{\iota^T} \frac{Z(\+D_{i-1}^T)}{Z(\+D_i^T)} = \prod_{i = 1}^{\iota^T} \frac{1}{1 - r_{\+D^{T}_i,d^{T}_i}}.
\]
Substituting the above identity into \eqref{eq:2-tree-expansion-refined-alternate} gives the claimed $2$-tree recursion in \eqref{eq:2-tree-expansion-recurrence-alternate}.
\end{proof}

\subsection{Randomized Marginal Estimation}
We now construct a random variable $\widehat{R}_{\+C,c_0}$ that serves as an estimator of the marginal probability $r_{\+C,c_0}$. 
The estimator $\widehat{R}_{\+C,c_0}$ is defined by the procedure $\rrat(\+C,c_0)$ in \Cref{Alg:random-marginal-estimate}.

\begin{algorithm}[htbp]
\caption{$\rrat(\+C,c_0)$: Randomized estimator of the marginal probability $r_{\+C,c_0}$.}
\label{Alg:random-marginal-estimate}
\SetKwInOut{Input}{Input}
\SetKwInOut{Output}{Output}
\Input{A constraint set $\+C$ and a constraint $c_0\in\+C$.}
\Output{A random estimator $\widehat{R}_{\+C,c_0}$ of the marginal probability $r_{\+C,c_0}$.}
For each $c\in\+C$, independently generate a local configuration $\sigma_c\sim\mathbb P_{\vbl(c)}$\label{Line:random-marginal-estimate-init}\;
Initialize $\widehat{R}_{\+C,c_0}\gets0$\;
\ForAll{$T\in\mathfrak T_{\+C,c_0}$ such that $\sigma_c$ violates $c$ for every
$c\in T$\label{Line:random-marginal-estimate-choose}}{
Sample $X\sim
    \mathbb P_{\vbl(\frin(T)\cup T)}
    \left(
    \cdot\mid X_{\vbl(T)}=\sigma_{\vbl(T)}
    \right)$\label{Line:random-marginal-estimate-rejection-sampling}\;
Set $Y\gets(-1)^{|T|-1}\cdot \one{X\text{ satisfies every constraint in }\frin(T)}$\label{Line:random-marginal-estimate-assign}\;
    Let $\iota^T,d_i^T,\+D_i^T$ be as defined in \Cref{lemma:2-tree-expansion-recursion-alternate}\;
    \For{$i=1$ to $\iota^T$}{

Let $\+O_i^T$ be an oracle returning the average of $100$ independent outputs of $\rrat(\+D^T_i,d^T_i)$\;
        Update $Y\gets Y\cdot
        \reci(1/21; \+O_i^T)$\;

    }
    Update $\widehat{R}_{\+C,c_0}\gets \widehat{R}_{\+C,c_0}+Y$\label{Line:random-marginal-estimate-udpate}\;
}
\Return $\widehat{R}_{\+C,c_0}$\;
\end{algorithm}

We emphasize that \Cref{Alg:random-marginal-estimate} is a formal definition of the random variable $\widehat{R}_{\+C,c_0}$ and is not intended to provide an efficient sampling procedure by itself. 
In particular, the exhaustive enumeration in Line~\ref{Line:random-marginal-estimate-choose} is only used for the definition of the estimator. 

\begin{remark}[Efficient implementation]\label{remark:efficient-implementation}
For an efficient implementation, the independent local configurations $\sigma_c$ are generated \emph{lazily} upon access.
This does not affect the distribution of the estimator by the principle of deferred decisions.

It remains to describe how to efficiently enumerate the violated $2$-trees in Line~\ref{Line:random-marginal-estimate-choose}. 
\begin{itemize}
\item 
Given the locally generated assignments $\sigma_c$, 
define $\+D=\{c\in\+C:\sigma_c\text{ violates }c\}$
and
\[
H=G^2(\+C)[\+D],
\]
the subgraph of $G^2(\+C)$ induced by the violated constraints. 
\item
If $c_0\notin\+D$, set $K=\emptyset$. 
Otherwise, let $K$ be the connected component of $c_0$ in $H$, which can be found by breadth-first search. 
Every $2$-tree satisfying the condition in Line~\ref{Line:random-marginal-estimate-choose} is contained in $K$: indeed, such a $2$-tree $T$ contains $c_0$, satisfies $T\subseteq\+D$, and is connected in $G^2(\+C)$.

\item 
We then enumerate without repetition all subsets $S\subseteq K$ containing $c_0$ such that $H[S]$ is connected, and retain those that are independent sets in $G(\+C)$. 
The retained subsets are exactly the $2$-trees satisfying the condition in Line~\ref{Line:random-marginal-estimate-choose}.
\end{itemize}
Thus, the above procedure gives a faithful implementation of \Cref{Alg:random-marginal-estimate}. 
Its efficiency will be established in the subsequent efficiency analysis.
\end{remark}

\begin{remark}[Construction of the marginal estimator]
The construction of the marginal estimator in \Cref{Alg:random-marginal-estimate} consists of the following two steps:
\begin{itemize}
    \item It randomly activates each $2$-tree contribution $T\in\mathfrak{T}_{\+C,c_0}$ through the indicator in Line~\ref{Line:random-marginal-estimate-assign}. 
    The contribution of a $2$-tree $T$ is activated with probability
    \[\Pr{\tp{  \bigwedge\limits_{c\in  T }\neg c}\land \frin(T)}.
    \]
    \item Conditioned on an activated $2$-tree $T$, it estimates the quantity
    \[(-1)^{\abs{T}-1}\cdot \prod\limits_{i=1}^{\iota^{T}}\frac{1}{1-r_{\+D^{T}_i,d^{T}_i}}.\] 
    This is achieved by recursively invoking the marginal estimator $\rrat(\+D^T_i,d^T_i)$ together with an auxiliary subroutine $\reci()$ as unbiased estimators for the reciprocal terms.
\end{itemize}
Altogether, the resulting random variable is an estimator of $r_{\+C,c_0}$ by the alternative $2$-tree recursion~\eqref{eq:2-tree-expansion-recurrence-alternate}.
\end{remark}

\subsubsection{The Reciprocal Estimator}
The subroutine $\reci(\theta;\+O)$ constructs an unbiased estimator of
\[
\frac{1}{1-\E{Y}}
\]
using an oracle $\+O$ that generates independent copies of a random variable
$Y$ satisfying $0\leq \E{Y}<1$.
The parameter $\theta$ controls the trade-off between the variance of the
estimator and the expected number of oracle calls. In our application, we set
$\theta=1/21$.

\begin{remark}[Connection to classical Monte Carlo approaches]
The construction of the reciprocal estimator is based on the \emph{Russian roulette} technique, which randomly truncates an infinite series while preserving unbiasedness. 
This technique originated as a variance-reduction method in Monte Carlo particle transport; see, e.g., \cite{carter1975particle}. More recently, randomized truncation has been studied as a general method for constructing unbiased estimators; see \cite{mcleish2011general,rhee2015unbiased,lyne2015russian}.
\end{remark}

\begin{algorithm}[htbp]
\caption{$\reci(\theta;\+O)$: Unbiased reciprocal estimator.}
\label{Alg:reciprocal-estimate}
\SetKwInOut{Input}{Input}
\SetKwInOut{Output}{Output}
\SetKwIF{WP}{ElseIf}{Else}{with probability}{do}{else if}{else}{endif}
\Input{A parameter $0<\theta<1$ and access to an oracle $\+O$ returning independent samples of a random variable $Y$ satisfying $0\leq \E{Y}<1$.}
\Output{An unbiased estimator $W$ of $1/(1-\E{Y})$.}
$W\gets1$, $P\gets1$\;
\For{$i=1,2,\ldots$}{
    \WP{$\theta$}{
        $P\gets P\cdot \+O()$\; 
        $W\gets W+P/\theta^i$\;
    }
    \Else{
        \Return $W$\;
    }
}
\end{algorithm}

The following lemma establishes the correctness and performance guarantees of the reciprocal estimator in \Cref{Alg:reciprocal-estimate}.

\begin{lemma}\label{lem:reciprocal-correctness}
Fix $0<\theta<1$. 
Let $\+O$ return independent samples of a random variable
$Y$ with $\oE[Y]=x$ and $\oE[Y^2]=M$, where $0\leq x<1$ and $M<\theta$.
Then the output $W$ of $\reci(\theta;\+O)$ satisfies
\begin{equation}\label{eq:roulette-inverse-moments}
    \E{W}=\frac{1}{1-x},\qquad
    \E{W^2}=\frac{1+x}{(1-x)(1-M/\theta)}.
\end{equation}
Moreover, the expected number of oracle calls to $\+O$ is $\frac{\theta}{1-\theta}$.
\end{lemma}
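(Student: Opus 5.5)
We will write the output of $\reci(\theta;\+O)$ as an explicit random series and compute its moments term by term. Let $N\ge 0$ be the number of successful coin flips before the first failure, so $\Pr{N\ge i}=\theta^i$ and $N$ is geometric. Let $Y_1,Y_2,\ldots$ be the i.i.d.\ oracle outputs; only $Y_1,\ldots,Y_N$ are actually queried. Then
\[
W=\sum_{i=0}^{N}\theta^{-i}\prod_{j=1}^{i}Y_j=\sum_{i\ge0}\one{N\ge i}\,\theta^{-i}P_i,\qquad P_i\defeq\prod_{j\le i}Y_j .
\]
Since $N$ is independent of the $Y_j$, the expected number of oracle calls is $\E{N}=\sum_{i\ge1}\theta^i=\theta/(1-\theta)$.

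\textbf{First moment.} Taking expectations termwise gives $\E{W}=\sum_{i\ge0}\theta^i\theta^{-i}x^i=1/(1-x)$. To justify the interchange of sum and expectation, we need absolute convergence of $\sum_i\theta^{-i}\E{\one{N\ge i}\abs{P_i}}=\sum_i \E{\abs{Y}}^i$. This holds because $\E{\abs{Y}}\le\sqrt{M}<\sqrt\theta<1$.

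\textbf{Second moment.} Expand $W^2=\sum_{i,j}\one{N\ge \max(i,j)}\theta^{-i-j}P_iP_j$. For $i\le j$ we have $P_iP_j=P_i^2\prod_{i<l\le j}Y_l$, so
\[
\E{\one{N\ge j}P_iP_j}=\theta^{j}M^{i}x^{j-i}.
\]
Each term is therefore $\theta^{-i}M^ix^{j-i}$ for $i\le j$, and symmetrically for $i>j$. Summing over $j\ge i$ gives $(M/\theta)^i/(1-x)$. Summing over $j<i$, using the ordered pair $(j,i)$ and $\max=i$, gives $(M/\theta)^{j}\,x/(1-x)$ per $j$. The total is
\[
\sum_{i\ge0}\Bigl(\tfrac{M}{\theta}\Bigr)^i\frac{1}{1-x}+\sum_{j\ge0}\Bigl(\tfrac{M}{\theta}\Bigr)^j\frac{x}{1-x}=\frac{1+x}{(1-x)(1-M/\theta)},
\]
which matches \eqref{eq:roulette-inverse-moments}.

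\textbf{Main obstacle.} The delicate step is justifying this double interchange. We will bound the absolute series, replacing $x$ by $\E{\abs{Y}}$ and $M$ by $\E{Y^2}$. This gives the same geometric sums with ratios $M/\theta<1$ and $\E{\abs Y}<1$, so the series converges absolutely and Fubini applies. Both moments then follow.
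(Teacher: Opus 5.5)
Your proposal is correct and follows essentially the same route as the paper: the same representation $W=\sum_{i\ge0}\mathbf{1}[N\ge i]\,\theta^{-i}P_i$ with $\mathbb{P}[N\ge i]=\theta^i$ and $N$ independent of the oracle samples, and the same term-by-term moment computation based on $\mathbb{E}[P_iP_j]=M^ix^{j-i}$ for $i\le j$. The only differences are cosmetic (you sum the $i\le j$ and $i>j$ blocks separately rather than writing diagonal plus twice the off-diagonal) and one addition: you justify exchanging sum and expectation by absolute convergence, using $\mathbb{E}|Y|\le\sqrt{M}<1$ and $M/\theta<1$, a step the paper leaves implicit.
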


\begin{proof}
Let $N$ be the number of oracle calls made by $\reci(\theta;\+O)$.
Since each iteration continues independently with probability $\theta$, we have $\pr{N\geq i}=\theta^i$ for each $i\geq 1$.
Therefore,
\[
\oE[N]=\sum_{i\geq1}\pr{N\geq i}
=\sum_{i\geq1}\theta^i
=\frac{\theta}{1-\theta}.
\]

Let $Y_i$ be the $i$-th sample returned by the oracle and $P_i\defeq\prod_{j=1}^{i}Y_j$.
The estimator can be expressed as
\[
W=\sum_{i\geq0}\one{N\geq i} \cdot \frac{P_i}{\theta^i}.
\]
By independence of the oracle samples,
\[
\oE[P_i]=x^i,
\qquad
\oE[P_iP_j]=M^i x^{j-i},\quad 0\leq i\leq j.
\]
Hence,
\[
\oE[W]
=
\sum_{i\geq0}
\tp{\frac{\pr{N\geq i}}{\theta^i}\cdot \oE[P_i]}
=
\sum_{i\geq0}x^i
=
\frac1{1-x}.
\]
For the second moment, we have
\begin{align*}
\oE[W^2]
&=
\sum_{i\geq0}
\frac{\pr{N\geq i}}{\theta^{2i}}\oE[P_i^2]
+
2\sum_{0\leq i<j}
\frac{\pr{N\geq j}}{\theta^{i+j}}\oE[P_iP_j]
\\
&=
\sum_{i\geq0}\left(\frac{M}{\theta}\right)^i
+
2\sum_{0\leq i<j}
\frac{M^i x^{j-i}}{\theta^i}
=
\frac{1}{1-M/\theta}\left(1+2\sum_{d\geq1}x^d\right)
=
\frac{1+x}{(1-x)(1-M/\theta)}.
\end{align*}
This proves the lemma.
\end{proof}

\subsubsection{Correctness and Variance of the Randomized Marginal Estimator}
We next establish the correctness of \Cref{Alg:random-marginal-estimate} and bound the second moment of its output estimator.

\begin{lemma}[Unbiasedness and second-moment bound of the randomized marginal estimator]
\label{lem:random-marginal-estimate-correctness}
 Let $\Phi=(V,\+Q,\+C)$ satisfy \Cref{cond:sym-counting-LLL}.
For any constraint $c_0\in\+C$, the procedure
$\rrat(\+C,c_0)$ in \Cref{Alg:random-marginal-estimate} terminates almost surely and returns an estimator $\widehat{R}_{\+C,c_0}$ satisfying
\[
    \E{\widehat{R}_{\+C,c_0}}=r_{\+C,c_0}, \quad
    \E{\widehat{R}_{\+C,c_0}^2}\leq 40p.
\]
\end{lemma}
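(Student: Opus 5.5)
We will prove unbiasedness and the second-moment bound together, by induction on $|\+C|$; termination comes out of the same induction. Every recursive call $\rrat(\+D_i^T,d_i^T)$ has $d_i^T\neq c_0$. Also $\+D_i^T\subseteq \+C\setminus\{c_0\}$, since $c_0\in\Gamma^{\leq2}_{\+C}(T)$ and $c_0$ is never added back. So every call is on a strictly smaller constraint set.

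Each such subformula still satisfies \Cref{cond:sym-counting-LLL}. The inductive hypothesis therefore gives $\E{\widehat R_{\+D_i^T,d_i^T}}=r_{\+D_i^T,d_i^T}$ and second moment at most $40p$. The base case $\+C=\{c_0\}$ is immediate: only $T=\{c_0\}$ occurs, with $\iota^T=0$, and $\widehat R=\one{\neg c_0}$, which has mean $p=r$ and second moment $p$.

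For an oracle $\+O_i^T$ averaging $100$ independent copies, write $Y$ for its output. Then $\E{Y}=x=r_{\+D_i^T,d_i^T}\in[0,1.2p]$ by \Cref{lem:marginal-bound}. Its second moment is
\[
M\le x^2+\frac{40p}{100}\le 1.44p^2+0.4p.
\]
Under \eqref{eq:sym-local-lemma-condition} we have $p\le 1/(16\e)$, so $M<1/21=\theta$ with room to spare. \Cref{lem:reciprocal-correctness} then shows that each factor $\reci$ has mean $1/(1-x)$, has second moment at most
\[
\frac{1+x}{(1-x)(1-21M)}=1+O(p),
\]
and terminates almost surely.

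\textbf{Unbiasedness.} Fix $T$. The indicator that every $c\in T$ is violated by $\sigma$ and that the resampled $X$ satisfies $\frin(T)$ has expectation $\Pr{(\bigwedge_{c\in T}\neg c)\land\frin(T)}$, because $T$ is independent and the $\sigma_c$ for $c\in T$ are exact samples of a product distribution on $\vbl(T)$. The $\reci$ factors are fresh and mutually independent given the activation. Hence $\E{Y_T}$ equals the $T$-term of \eqref{eq:2-tree-expansion-recurrence-alternate}. Summing over $T$ (an absolutely convergent sum, justified below) and applying \Cref{lemma:2-tree-expansion-recursion-alternate} gives $\E{\widehat R_{\+C,c_0}}=r_{\+C,c_0}$.

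\textbf{Second moment.} This is the main obstacle, since the $Y_T$ for different $T$ are correlated through the shared $\sigma$. We use
\[
\E{\widehat R^2}\le\Bigl(\sum_T \sqrt{\E{Y_T^2}}\Bigr)^2,
\]
which follows from Cauchy--Schwarz/Minkowski. For each $T$,
\[
\E{Y_T^2}\le \Pr{\bigwedge_{c\in T}\neg c}\prod_{i=1}^{\iota^T}\E{W_i^2}\le p^{|T|}\bigl(1+Cp\bigr)^{(D+1)^2|T|},
\]
using $\iota^T\le (D+1)^2|T|$. The right-hand side is at most $\bigl(p(1+Cp)^{(D+1)^2}\bigr)^{|T|}$. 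By \Cref{cor:num-2-tree},
\[
\sum_T\sqrt{\E{Y_T^2}}\le \sum_{t\ge1}(\e D^2)^{t-1}\gamma^{t/2},\qquad \gamma=p(1+Cp)^{(D+1)^2}.
\]
The square-root loss is fatal: $\e D^2\sqrt{p}$ is not small under $pD^2\lesssim1$. So we must instead expand $\E{\widehat R^2}=\sum_{T,T'}\E{Y_TY_{T'}}$ directly. If $\E{Y_TY_{T'}}\neq0$, then both $T$ and $T'$ are violated under the same $\sigma$, so $T\cup T'$ lies in a single violated connected component of $G^2$ containing $c_0$. Then
\[
\abs{\E{Y_TY_{T'}}}\le \Pr{\bigwedge_{c\in T\cup T'}\neg c}\cdot\prod\E{W^2}\text{-type factors}.
\]
Here $T\cup T'$ need not be independent. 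We handle this by bounding
\[
\Pr{\bigwedge_{c\in T\cup T'}\neg c}\le p^{|T|}\,p^{|T'\setminus\Gamma^+_{\+C}(T)|}
\]
and then charging the remaining part of $T'$ to the neighbourhood of $T$. For the $\reci$ factors, the worst case is that they share copies, so we bound $\abs{\E{W_iW'_j}}$ by Cauchy--Schwarz, each being at most $1+O(p)$. We then count pairs $(T,T')$ by first choosing $T$ (at most $(\e D^2)^{t-1}$ ways) and then $T'$. The pair whose only common element is $c_0$ and which are otherwise far apart contributes roughly $\bigl(\sum_T \beta'^{|T|}\bigr)^2=O(p^2)$. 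The diagonal $T=T'$ contributes $\sum_T\E{Y_T^2}\le p(1+O(1))$. Verifying that the constants close to $40p$ is the delicate calculation; I would first try a cruder bound, namely
\[
\abs{\widehat R}\le\sum_{T\text{ violated}}\abs{Y_T}
\]
with the violated trees bounded via the component size of $c_0$, controlling $\E{(\sum_T\abs{Y_T})^2}$ through the expected number of violated $2$-tree pairs. That count is at most $p\sum_{t}(\e D^2p\cdot 1.1)^{t-1}\cdot(\ldots)$, which is geometric because $\e pD^2\le1/4$.
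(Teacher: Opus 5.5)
Your induction, the bound $M\le(1.2p)^2+0.4p<1/21$ for the averaged oracle, the unbiasedness argument, and the diagonal estimate all match the paper. The gap is in the off-diagonal terms $\E{Y_TY_{T'}}$ with $T\neq T'$, which you leave to an unspecified ``charging'' argument. It comes from a misreading of the algorithm. Line~\ref{Line:random-marginal-estimate-init} of \Cref{Alg:random-marginal-estimate} draws an \emph{independent} local configuration $\sigma_c\sim\mathbb P_{\vbl(c)}$ for each constraint $c$; there is no single shared global assignment. Hence
\[
\Pr{\sigma_c\text{ violates }c\text{ for all }c\in T\cup T'}=\prod_{c\in T\cup T'}\Pr{\neg c}\le p^{|T\cup T'|},
\]
whether or not $T\cup T'$ is independent in $G(\+C)$. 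Moreover, the sample $X$ and all recursive calls attached to $T$ use randomness disjoint from those attached to $T'$. So the cross term factorizes exactly, using first moments of the reciprocal estimators:
\[
\abs{\E{Y_TY_{T'}}}\le p^{|T\cup T'|}\prod_{i=1}^{\iota^T}\E{W_i^T}\prod_{j=1}^{\iota^{T'}}\E{W_j^{T'}},\qquad \E{W_i^T}=\frac{1}{1-r_{\+D_i^T,d_i^T}}\le\frac{1}{1-1.2p}.
\]
With $\iota^T\le(D+1)^2|T|$ and $|T|+|T'|\le2|T\cup T'|$, this is at most $(1.26p)^{|T\cup T'|}$. Since $T\cup T'$ is connected in $G^2(\+C)$ and contains $c_0$, at most $(3\e D^2)^{t-1}$ ordered pairs have $|T\cup T'|=t$. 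Then $\sum_{t\ge2}(3\e D^2)^{t-1}(1.26p)^t\le24p$, with common ratio below $0.945$, and adding the diagonal $16p$ gives $40p$.

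Neither of your substitutes closes.

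\emph{The neighbourhood bound.} Your bound $p^{|T|}p^{|T'\setminus\Gamma^+_{\+C}(T)|}$ gives no decay for elements of $T'$ adjacent to $T$. Already for $T=\{c_0,c_1\}$, take pairwise non-adjacent neighbours of $c_1$, one of which is at distance $2$ from $c_0$. This yields about $2^{D-2}$ distinct $2$-trees $T'\ni c_0$ contained in $\Gamma^+_{\+C}(T)$, each bounded only by $p^2$, and $2^{D-2}p^2$ is not small when $p\approx D^{-2}$. Under a genuinely shared assignment this loss would be real, since two adjacent constraints can be violated jointly with probability as large as $p$.

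\emph{Cauchy--Schwarz on the reciprocal factors.} The same problem arises if you bound the reciprocal factors by Cauchy--Schwarz, or pass to $\E{(\sum_T\abs{Y_T})^2}$; the latter forces $\E{\abs{W}}\le\sqrt{\E{W^2}}$ because the $W$'s can be negative. Either way, you replace $\E{W}\le 1+1.2p+O(p^2)$ by the available bound $\sqrt{\E{W^2}}\lesssim1+5.4p$. After raising to the power $(D+1)^2$ per vertex of $T$, the factor grows from about $1.12$ to about $1.64$ near the boundary of \Cref{cond:sym-counting-LLL}. The common ratio of the pair-counting series then exceeds $1$ for large $D$.

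What makes the constants work is the exact independence structure of the algorithm, used together with first moments of the $W$'s.
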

\begin{proof}
We prove the lemma by induction on $m=|\+C|$. The base case $m=1$ is immediate, since $\widehat R_{\+C,c_0}$ is a Bernoulli random variable with parameter $\Pr{\neg c_0}$.

Assume now that $|\+C|>1$ and that the lemma holds for all strict subsets of $\+C$. 
Every recursive call in $\rrat(\+C,c_0)$ is made on a constraint set $\+D_i^T\subsetneq \+C$, and hence terminates almost surely by the induction hypothesis.
For every recursive call
$\rrat(\+D_i^T,d_i^T)$, let $\widehat R^{T}_{i,1},\ldots,\widehat R^{T}_{i,100}$
be the independent outputs used by the oracle $\+O_i^T$ in \Cref{Alg:random-marginal-estimate}, and define
\[
R_i^T=\frac1{100}\sum_{j=1}^{100}\widehat R^{T}_{i,j}.
\]
By the induction hypothesis, $\oE[R_i^T]=r_{\+D_i^T,d_i^T}$,
and
\[
\E{\tp{R_i^T}^2}
=
\tp{\E{\widehat R_{\+D_i^T,d_i^T}}}^2
+\frac{\Var{\widehat R_{\+D_i^T,d_i^T}}}{100}
\leq
(r_{\+D_i^T,d_i^T})^2+
\frac{1}{100}\E{\tp{\widehat R_{\+D_i^T,d_i^T}}^2}.
\]
By \Cref{lem:marginal-bound}, $r_{\+D_i^T,d_i^T}\leq 1.2p$, and by the induction hypothesis,
$\oE[(\widehat R_{\+D_i^T,d_i^T})^2]\leq 40p$.
Hence,
\begin{align}
\E{\tp{R_i^T}^2}
\leq
(1.2p)^2+0.4p
<\frac1{21},\label{eq:R-i-second-moment-bound}
\end{align}
where we used $D\ge1$ and \Cref{cond:sym-counting-LLL}, which imply $p\le 1/16$.
Hence every call to $\reci(1/21;\+O_i^T)$ satisfies the condition of \Cref{lem:reciprocal-correctness}, and the procedure terminates almost surely.

For every $T\in\mathfrak T_{\+C,c_0}$, define the events
\begin{align*}
  \mathcal{E}_1^T
  &=
\{\sigma_c\text{ violates }c\text{ for every }c\in T\},\\
  \mathcal{E}_2^T
  &=
\{X\text{ satisfies every constraint in }\frin(T)\}.
\end{align*}
Let $W_i^T$ denote the output of
$\reci(1/21;\+O_i^T)$.
All recursive calls use fresh independent randomness, so the random variables
$W_i^T$ are independent of each other and of the activation events
$\+E_1^T,\+E_2^T$.
Then the output of $\rrat(\+C,c_0)$ can be written as
\[
\widehat R_{\+C,c_0}
=
\sum_{T\in\mathfrak T_{\+C,c_0}}
(-1)^{|T|-1} \cdot 
\one{\+E_1^T} \cdot 
\one{\+E_2^T} \cdot
\prod_{i=1}^{\iota^T}W_i^T .
\]

Taking expectation and using the independence of the recursive estimators,
\begin{align*}
\E{\widehat R_{\+C,c_0}}
&=
\sum_{T\in\mathfrak T_{\+C,c_0}}
(-1)^{|T|-1} \cdot
\Pr{
\left(\bigwedge_{c\in T}\neg c\right)\land\frin(T)
} \cdot
\prod_{i=1}^{\iota^T}
\E{W_i^T}
\\
&=
\sum_{T\in\mathfrak T_{\+C,c_0}}
(-1)^{|T|-1} \cdot
\Pr{
\left(\bigwedge_{c\in T}\neg c\right)\land\frin(T)
} \cdot
\prod_{i=1}^{\iota^T}
\frac1{1-r_{\+D_i^T,d_i^T}} \\
&= r_{\+C,c_0},
\end{align*}
where the last equality follows from
\Cref{lemma:2-tree-expansion-recursion-alternate}.

We then prove the second moment bound.
Expanding the square and taking the expectation gives
\begin{align*}
\E{\tp{\widehat{R}_{\+C,c_0}}^2}
\leq{}&
\underbrace{
\sum_{T\in\mathfrak{T}_{\+C,c_0}}
\E{
    \one{\+E_1^T} \cdot
    \one{\+E_2^T} \cdot
    \left(
        \prod_{i=1}^{\iota^T} W_i^T
    \right)^2
}
}_{(\spadesuit)}
\\
&\qquad +
\underbrace{
\sum_{\substack{
    T,T'\in\mathfrak{T}_{\+C,c_0}\\
    T\neq T'
}}
\E{
    \one{\+E_1^T} \cdot
    \one{\+E_2^T} \cdot
    \one{\+E_1^{T'}} \cdot
    \one{\+E_2^{T'}} \cdot
    \prod_{i=1}^{\iota^T} W_i^T
    \prod_{j=1}^{\iota^{T'}} W_j^{T'}
}
}_{(\heartsuit)}.
\end{align*}

We first bound the diagonal terms in $(\spadesuit)$.
By
\Cref{lem:reciprocal-correctness},
using $r_{\+D_i^T,d_i^T}\le1.2p$
and \eqref{eq:R-i-second-moment-bound}, we obtain
\[
\E{\tp{W_i^T}^2}
=
\frac{1+r_{\+D_i^T,d_i^T}}
{\tp{1-r_{\+D_i^T,d_i^T}}
\tp{1-21\E{\tp{R_i^T}^2}}}
\le\frac{1 + 1.2p}{(1 - 1.2p)(1 - 21 ((1.2p)^2 + 0.4p))}.
\]
A direct calculation using $4\e p(D+1)^2\le 1$ and $D\ge1$ shows that
\[
\left(
\frac{1+1.2p}
{(1-1.2p)(1-21((1.2p)^2+0.4p))}
\right)^{(D+1)^2}
\le 3.2.
\]
Since $\iota^T\le(D+1)^2|T|$,  we obtain
    \begin{align*}
        \E{  \one{\+E^T_1}\cdot \one{\+E^T_2}\cdot \tp{\prod\limits_{i}W^T_i}^2} &= \Pr{\frin(T)\land \tp{\bigwedge\limits_{c\in T}\neg c}} \cdot \prod_{i = 1}^{\iota^T} \E{\tp{W_i^T}^2} \\
        &\leq p^{\abs{T}} \cdot \tp{\frac{1 + 1.2p}{(1 - 1.2p)(1 - 21 ((1.2p)^2 + 0.4p))}}^{(D + 1)^2 \abs{T}}\\
        &\leq (3.2 p)^{\abs{T}}.
    \end{align*}
Hence, using the standard counting bound on $2$-trees, 
\[
    (\spadesuit) \leq \sum_{t = 1}^{\infty} (\e D^2)^{t - 1} \cdot (3.2p)^t \leq 16p.
\]

For the off-diagonal terms with $T\neq T'$, independence of the recursive calls gives
\begin{align*}
&\E{
    \one{\+E_1^T} \cdot
    \one{\+E_2^T} \cdot
    \one{\+E_1^{T'}} \cdot
    \one{\+E_2^{T'}} \cdot
    \left(\prod_{i=1}^{\iota^T} W_i^T\right)
    \left(\prod_{j=1}^{\iota^{T'}} W_j^{T'}\right)
}
\\
&\qquad \qquad \qquad \qquad \qquad =
\Pr{
    \+E_1^T\cap\+E_2^T
    \cap\+E_1^{T'}\cap\+E_2^{T'}
} \cdot
\left(\prod_{i=1}^{\iota^T}\E{W_i^T}\right)
\left(\prod_{j=1}^{\iota^{T'}}\E{W_j^{T'}}\right).
\end{align*}
The first factor is at most $p^{|T\cup T'|}$.
Moreover,
$\oE[W_i^T]
= 1 / (1-r_{\+D_i^T,d_i^T}) \leq 1 / (1 - 1.2p)$.
By the fact that $\iota^T\leq(D+1)^2|T|$ and $\abs{T} + \abs{T'} \leq 2 \abs{T \cup T'}$,
\[
\tp{\prod_{i = 1}^{\iota^T} \E{W_i^T}}
\tp{\prod_{j = 1}^{\iota^{T'}} \E{W_j^{T'}}}
\leq
(1-1.2p)^{-2(D+1)^2|T\cup T'|}.
\]
Using $4\e p(D+1)^2\leq1$,
we have
$(1-1.2p)^{-2(D+1)^2}
\leq1.26$,
and hence every off-diagonal term is bounded by $(1.26p)^{|T\cup T'|}$.
Note that by the definition of $2$-trees, for any ordered pair $(T,T')\in \mathfrak{T}_{\+C,c_0}\times \mathfrak {T}_{\+C,c_0}$, the set $T\cup T'$ is connected in
$G^2(\+C)$ and contains $c_0$. Thus by \Cref{lemma:number-of-component}, there are
at most $(\e D^2)^{t-1}$ possible unions of size $t$. Once the union $T\cup T'$ is fixed, $c_0$ belongs to both $T$ and $T'$, while each remaining constraint belongs only to $T$, only to $T'$, or to both. Hence the number of ordered pairs $(T,T')$ satisfying that $|T\cup T'|=t$ is at most $(3\mathrm{e} D^2)^{t-1}$. 
Therefore,
\[
(\heartsuit)
\leq
\sum_{t\geq2}
(3\e D^2)^{t-1}(1.26p)^t
\leq24p .
\]

Combining the diagonal and off-diagonal contributions completes the induction:
\[
\E{\tp{\widehat R_{\+C,c_0}}^{2}}
\leq
16p+24p
=
40p. \qedhere
\]
\end{proof}

{
    \color{red}

}

\subsubsection{Efficiency of the Marginal Estimator}

We next analyze the efficiency of \Cref{Alg:random-marginal-estimate} under the
implementation described in \Cref{remark:efficient-implementation}.
The key observation is that, under \Cref{cond:sym-counting-LLL}, the recursive calls generated by \Cref{Alg:random-marginal-estimate} can be dominated by a subcritical branching process, and hence the expected size of the recursion tree is bounded by a finite constant depending only on the local parameters.

\begin{lemma}[Efficiency of the randomized marginal estimator]
\label{lem:random-marginal-estimate-efficiency}
Let $\Phi=(V,\+Q,\+C)$ satisfy \Cref{cond:sym-counting-LLL}.
For any constraint $c_0\in\+C$, 
the procedure $\rrat(\+C,c_0)$ in \Cref{Alg:random-marginal-estimate}, under the implementation in \Cref{remark:efficient-implementation}, 
has expected cost $\poly(k,q,D)$ in both oracle calls (as in \Cref{assumption:evaluation-oracle}) and running time. 
\end{lemma}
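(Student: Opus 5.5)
We will bound the expected cost by dominating the recursion tree of $\rrat(\+C,c_0)$ by a subcritical Galton--Watson branching process whose individuals are invocations of $\rrat$. The cost is then at most the expected total number of invocations times the expected local work per invocation.

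\textbf{Local work.} For a single invocation $\rrat(\+C,c_0)$, excluding recursive calls, the implementation in \Cref{remark:efficient-implementation} lazily samples $\sigma_c$ only for constraints explored by the breadth-first search from $c_0$ in $H=G^2(\+C)[\+D]$.

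The exploration cost is controlled by a percolation argument on $G^2(\+C)$, which has maximum degree at most $D^2+D$. A constraint joins $K$ only if it is violated, which happens independently with probability at most $p$. So the expected number of connected sets of size $t$ containing $c_0$ that are fully violated is at most $(\e(D+1)^2)^{t-1}p^t$. Under \Cref{cond:sym-counting-LLL} this is at most $p\cdot 4^{-(t-1)}$. Hence $\E{|K|}=O(1)$, and the number of explored (queried) constraints is at most $(D+1)^2(|K|+1)$ in expectation, each costing $\poly(k,q)$.

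Enumerating subsets $S\ni c_0$ with $H[S]$ connected costs at most $\poly(k,q,D)$ times the number of such $S$. This number has expectation $\sum_t(\e(D+1)^2)^{t-1}p^t=O(p)$ by the same union bound. For each enumerated $2$-tree $T$, sampling $X$ on $\vbl(\frin(T)\cup T)$ and checking $\frin(T)$ costs $\poly(k,q,D)\cdot|T|$, since $|\frin(T)|\le D|T|$ by \Cref{lem:free-1-probability}. Computing $\Gamma^{\le2}_{\+C}(T)$ and the ordering $d_i^T$ costs $\poly(k,D)|T|$. Weighting by the $t$-dependence still gives a convergent geometric sum, so the expected local work is $\poly(k,q,D)$.

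\textbf{Branching.} Each $2$-tree $T$ whose indicator $\one{\+E_1^T}$ fires spawns $\iota^T\le(D+1)^2|T|$ reciprocal estimators. Each of these makes $N$ oracle calls with $\E{N}=\theta/(1-\theta)=1/20$ by \Cref{lem:reciprocal-correctness}, and each oracle call is $100$ invocations of $\rrat$. The number of children of an invocation therefore has expectation at most
\[
\sum_{T\in\mathfrak T_{\+C,c_0}}\Pr{\+E_1^T}\cdot(D+1)^2|T|\cdot\tfrac{100}{20}
\le 5(D+1)^2\sum_{t\ge1}t\,(\e D^2)^{t-1}p^t .
\]
Using $4\e p(D+1)^2\le1$, this is at most $5(D+1)^2p\sum_t t\,4^{-(t-1)}\le 5\cdot\tfrac{16}{9}\cdot\tfrac{1}{4\e}<1$, a constant strictly below $1$, say $\mu$.

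Crucially, this bound holds for every subinstance, since subinstances still satisfy the condition, and it holds conditionally on the history: fresh randomness is used in each call. So the total number of invocations is stochastically dominated by a Galton--Watson tree with offspring mean $\mu<1$, whose expected size is at most $1/(1-\mu)=O(1)$.

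\textbf{Combining the two.} The expected cost of each invocation is bounded by $\poly(k,q,D)$ independently of its ancestors, so a Wald-type identity gives a total expected cost of $\poly(k,q,D)$.

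The main obstacle is that the local work and the number of children of the same invocation are correlated, since both are large when $K$ is large. A naive product bound is therefore not justified. I would handle this by applying Wald's identity to the sum, over invocations, of local work. The expected local work of each invocation given the past is uniformly $\poly(k,q,D)$, and the expected number of invocations is finite by the branching bound, which only uses offspring means. This avoids needing any joint-moment bound.
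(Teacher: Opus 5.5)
Your proposal is correct and follows the paper's proof essentially step for step. It uses the same subcritical offspring-mean bound (below $0.9$), obtained from the activation probability $p^{|T|}$ of each $2$-tree, $\iota^T\le (D+1)^2|T|$, and the expected $5$ recursive calls per reciprocal estimator. It uses the same union bound over fully violated connected subsets for the per-call work, and the same final combination. Your explicit conditional-expectation (Wald-type) argument for summing per-invocation work over the random recursion tree is a welcome rigorization of the paper's terse ``multiply the expected number of calls by the expected cost of one call'' step, and it is exactly what handles the correlation you point out.
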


\begin{proof}
We only bound the expected computational cost; the bound on the number of evaluation oracle calls follows from the same argument.

Consider the rooted recursion tree generated by $\rrat(\+C,c_0)$, where every node corresponds to one invocation of $\rrat(\+D,d)$ for some $d\in\+D\subseteq\+C$. We bound the conditional expected number of child calls generated by each node.

Fix a node corresponding to $\rrat(\+D,d)$, and condition on the complete history up to the creation of this node. Under this conditioning, the rooted instance $(\+D,d)$ is fixed, while all randomness used within the invocation is fresh and independent of the previous history. Moreover, the dependency degree of $\+D$ is at most $D$, and the violation probability of every constraint remains at most $p$.

Note that by Line \ref{Line:random-marginal-estimate-choose}, each $T\in \mathfrak{T}_{\+C,c_0}$ is chosen with probability at most $p^{|T|}$. Moreover, each invocation of $\reci(1/21; \+O^T_i)$ makes $(1/21) / (1-1/21) = 1/20$ calls to $\+O^T_i$ in expectation and each oracle call to $\+O^T_i$ generates $100$ recursive calls. Hence each invocation generates $5$ recursive calls in expectation.
Therefore, the expected number of offspring recursive calls is at most
 \[
 5 \sum\limits_{t=1}^{\infty} \tp{\e(D+1)^2}^{t-1}\cdot p^{t}\cdot (D+1)^2 \cdot t= 5p(D+1)^2\tp{1-\e p(D+1)^2}^{-2} < 0.9.
 \]

It remains to bound the expected computational cost of one recursive call.
Using the implementation in \Cref{remark:efficient-implementation}, let $\+D=\{c\in\+C:\sigma_c\text{ violates }c\}$.
A connected subset $S$ containing $c_0$ in $G^2(\+C)[\+D]$ is enumerated only if all constraints in $S$ are violated. 
Hence the probability that such a set is examined is at most $p^{|S|}$.
The number of connected subsets of size $t$ containing $c_0$ is at most $(\e(D+1)^2)^{t-1}$.
The computational cost associated with each such subset is polynomial in $k,q,D,t$. 
Therefore, the expected cost of the enumeration step is bounded by
\[
\sum_{t\ge1}
\poly(k,q,D,t)
\tp{\e(D+1)^2}^{t-1}p^t
\le\poly(k,q,D)
    \sum_{t\geq1}\frac{\poly(t)}{4^{t}}
=
\poly(k,q,D).
\]

For every enumerated $2$-tree $T$, constructing $\frin(T)$, sampling the conditional completion, evaluating the indicator, and constructing the child instances require only $\poly(k,q,D,|T|)$ additional work. 
Moreover, each reciprocal estimator has constant expected running time because its expected number of oracle calls is $1/20$.

Finally, since the expected offspring number of the recursion tree is bounded by a constant strictly smaller than $1$, the expected total number of recursive calls is bounded by a constant. 
Multiplying this by the expected cost of one recursive call gives an overall expected running time $\poly(k,q,D)$. 
The same argument bounds the expected number of evaluation oracle calls.
\end{proof}

\subsection{Proof of \Cref{thm:fpras-counting-LLL} (Randomized Approximate Counting)}

We first construct an estimator $\widetilde{Z}_{\Phi}$ of $Z_{\Phi}$ based on the randomized marginal estimator in \Cref{Alg:random-marginal-estimate}. 
The estimator $\widetilde{Z}_{\Phi}$ has the desired accuracy guarantee and bounded expected total cost. 
We then apply a standard truncation argument to obtain an estimator with a worst-case cost bound.

We first recall the telescoping identity in \eqref{eq:reduction-Z-to-r}, 
which relates $Z_{\Phi}$ to the marginal probabilities $r_{\+C_i,c_i}$:
\[
Z_{\Phi}
=\tp{\prod\limits_{v\in V}|Q_v|}\cdot \prod\limits_{i=1}^{m}\tp{1-r_{\+C_i,c_i}},
\]
where $\+C_i=\{c_1,\ldots,c_i\}$ under an arbitrary ordering $\+C=\{c_1,\ldots,c_m\}$.

Set 
\[
s=\left\lceil 1280\max\left\{1,\frac{m}{4\e D^2\varepsilon^2}\right\}\right\rceil.
\]
For each $1\le i\le m$, independently invoke $\rrat(\+C_i,c_i)$ for $s$ times, 
and denote the independent outputs by $\widehat{R}_{\+C_i,c_i}^{(1)},\ldots, \widehat{R}_{\+C_i,c_i}^{(s)}$.
Define their average
\[
\overline{R}_{\+C_i,c_i}=\frac{1}{s}\sum_{j=1}^s\widehat{R}^{(j)}_{\+C_i,c_i}.
\]
The estimator $\widetilde Z_\Phi$ is then defined as
\[
\widetilde{Z}_{\Phi}
=\tp{\prod\limits_{v\in V}|Q_v|}\cdot \prod\limits_{i=1}^{m}\tp{1-\overline{R}_{\+C_i,c_i}}.
\]

We first prove that
\begin{align}\label{eq:estimator-Z-approx}
    \pr{
    (1-\varepsilon)Z_\Phi
    \leq\widetilde{Z}_\Phi
    \leq(1+\varepsilon)Z_\Phi
    }
    \geq\frac78.
\end{align}

By \Cref{lem:random-marginal-estimate-correctness},
we have $\oE[\widehat R_{\+C_i,c_i}^{(j)}]=r_{\+C_i,c_i}$ 
and $\oE[(\widehat R_{\+C_i,c_i}^{(j)})^2]\leq 40p$.
Hence,
$\oE[\overline R_{\+C_i,c_i}]=r_{\+C_i,c_i}$,
and by independence of the $s$ samples,
\[
\Var{\overline R_{\+C_i,c_i}}
=
\frac1s
\Var{\widehat R_{\+C_i,c_i}^{(1)}}
\leq
\frac{40p}{s}.
\]

Moreover, the random variables $\overline R_{\+C_1,c_1},\ldots,\overline R_{\+C_m,c_m}$ are independent. 
Therefore,
\[
\E{\widetilde Z_\Phi}
=
\left(\prod_{v\in V}|Q_v|\right)
\prod_{i=1}^{m}
(1-r_{\+C_i,c_i})
=
Z_\Phi .
\]

By \Cref{lem:marginal-bound}, $0\leq r_{\+C_i,c_i}\leq 1.2p$.
Using $D\geq1$ and \Cref{cond:sym-counting-LLL}, we have $p\leq \frac1{16\e}$, and therefore $r_{\+C_i,c_i}<0.1$.
Hence,
\[
\frac{\E{\widetilde{Z}_\Phi^2}}{Z_\Phi^2}=\prod_{i=1}^m\frac{\E{(1-\overline R_{\+C_i,c_i})^2}}{(1-r_{\+C_i,c_i})^2}=\prod_{i=1}^m\left(1+\frac{\Var{\overline R_{\+C_i,c_i}}}{(1-r_{\+C_i,c_i})^2}\right)\leq \prod_{i=1}^m \left(1+\frac{80p}{s}\right) \leq \exp\left(\frac{80pm}{s}\right).
\]

Under \Cref{cond:sym-counting-LLL}, $p\le\frac{1}{4\e D^2}$. 
By the choice of $s$, we have $\frac{80pm}{s}\leq\frac{\varepsilon^2}{16}$. Therefore,
\[
\frac{\Var{\widetilde Z_\Phi}}{Z_\Phi^2}=\frac{\E{\widetilde Z_\Phi^2}}{Z_\Phi^2}-1\leq \exp\tp{\frac{\eps^2}{16}}-1\leq\frac{\varepsilon^2}{8}.
\]
Applying Chebyshev's inequality gives \eqref{eq:estimator-Z-approx}.

It remains to bound the expected cost of computing $\widetilde Z_\Phi$.
By \Cref{lem:random-marginal-estimate-efficiency}, every invocation of $\rrat(\+C_i,c_i)$ terminates almost surely and has expected computational cost and expected number of evaluation-oracle calls bounded by $\poly(k,D,q)$.
The total number of marginal estimators invoked is
\[
ms
=
O\left(
nD+\frac{n^2}{\varepsilon^2}
\right),
\]
where we used $m\leq n(D+1)$.
Therefore, the expected cost of computing $\widetilde Z_\Phi$ is at most
$\poly(k,D,q)\left(\frac n\varepsilon\right)^2$
both in evaluation-oracle calls and in computation time.

Finally, we convert the estimator with bounded expected cost into one with worst-case bounded cost. 
Let $\gamma=\poly(k,D,q)\left(\frac n\varepsilon\right)^2$ be an upper bound on the expected total cost of computing $\widetilde Z_\Phi$.
We truncate the computation once the accumulated cost exceeds $8\gamma$, returning $0$ upon timeout, and otherwise output the value produced by the original procedure. 
Denote the resulting estimator by $\widehat Z_\Phi$.

The truncated estimator has worst-case cost $\poly(k,D,q)\left(\frac n\varepsilon\right)^2$. 
By Markov's inequality,
\[
\pr{
(1-\varepsilon)Z_\Phi
\leq
\widehat Z_\Phi
\leq
(1+\varepsilon)Z_\Phi
}
\geq
\frac78-\frac18
=
\frac34 .
\]
This proves \Cref{thm:fpras-counting-LLL}.

\section{Conclusions and Open Problems}\label{sec:conclusions}

We give efficient algorithms for approximately counting satisfying assignments of general constraint satisfaction problems under the condition $4\e p(D+1)^2\leq 1$.
This establishes a counting analogue of the Lov\'asz Local Lemma that matches the known hardness lower bounds up to constant factors.

The central structural ingredient of our approach is a new $2$-tree expansion for constraint marginals. 
We expand the marginal violation probability of a constraint through inclusion-exclusion and reorganize the resulting terms according to their canonical $2$-tree representations. 
This recursion exposes the underlying $pD^2$ scale and yields exponential decay of correlations throughout the counting Lov\'asz Local Lemma regime.

Based on this expansion, we construct both deterministic and randomized marginal estimators. 
Specifically, truncating the recursion gives a deterministic estimator whose approximation error decreases exponentially with the total size of the $2$-trees explored in the recursion, 
while a randomized variant based on unbiased recursive estimation yields bounded variance and subcritical expected recursive cost. 
Combining these estimators with constraint-wise self-reducibility gives the approximate counting algorithms stated in
\Cref{thm:fptas-counting-LLL} and \Cref{thm:fpras-counting-LLL}.

Our work raises several open questions about the computational phase transition of counting and sampling in the local lemma regime:
\begin{itemize}
    \item First, what is the precise threshold for approximate counting in
    general CSPs? 
    More specifically, can the constant in the symmetric counting LLL condition be optimized, and can one characterize the tight asymmetric counting threshold analogous to the Shearer region for satisfiability?
    \item 
    Second, does the $pD^2$ regime also characterize the threshold for efficiently sampling from the uniform satisfying assignments? 
    Our $2$-tree expansion currently yields efficient counting algorithms but does not directly provide a sampler under the same condition, due to the alternating signs introduced by inclusion-exclusion. 
    It remains an important open direction to develop sampling algorithms based on the $2$-tree expansion or related ideas.
    \item 
    Finally, can the canonical $2$-tree expansion be extended beyond the variable model to more general local lemma frameworks, such as the abstract or lopsided Lov\'asz Local Lemma? 
    More broadly, it remains open whether marginal expansions based on problem-specific combinatorial structures can provide new approaches to approximate counting.
\end{itemize}

\section{AI Disclosure}

The authors used ChatGPT 5.5/5.6-Sol Ultra as AI assistants during the preparation of this manuscript. 
In the early stage of this work, ChatGPT suggested a branching-process-based marginal estimator using an inclusion-exclusion decomposition of marginal probabilities and classical Monte Carlo techniques such as the Russian roulette estimator.
This initial construction required stronger local lemma conditions and stronger oracle assumptions.

Inspired by this initial construction, the authors developed the main contributions of this manuscript, including the $2$-tree expansion, the analysis of its correlation decay properties, and the deterministic approximate counting algorithm based on truncated 2-tree expansions. 
The authors also redesigned and simplified the randomized approximate counting algorithm, approaching a tight local lemma condition under the weaker evaluation oracle assumption.

All mathematical statements, proofs, algorithms, and references in this manuscript were prepared and verified by the authors. 
The authors take full responsibility for all content.

\ifdoubleblind
\else
\section*{Acknowledgements}
We thank Tianxing Ding and Yixiao Yu for helpful discussions.
\fi

\bibliographystyle{alpha}
\bibliography{references} 

@String{Computing = "Computing" }

@String{Computer = "{IEEE} Computer" }

@String{Springer = "Springer-Verlag" }

@inproceedings{achlioptas2003threshold,
  author    = {Achlioptas, Dimitris and Peres, Yuval},
  title     = {The threshold for random $k$-{SAT} is $2^k \ln 2 - {O}(k)$},
  year      = {2003},
  isbn      = {1581136749},
  publisher = {ACM},
  url       = {https://doi.org/10.1145/780542.780577},
  doi       = {10.1145/780542.780577},
  booktitle = {STOC},
  pages     = {223--231},
  numpages  = {9}
}

@article{achlioptas2016random,
  author    = {Achlioptas, Dimitris and Iliopoulos, Fotis},
  title     = {Random Walks That Find Perfect Objects and the {L}ov\'{a}sz Local Lemma},
  year      = {2016},
  publisher = {ACM},
  volume    = {63},
  number    = {3},
  issn      = {0004-5411},
  url       = {https://doi.org/10.1145/2818352},
  doi       = {10.1145/2818352},
  journal   = {J. ACM},
  articleno = {22},
  numpages  = {29},
  note      = {(Conference version in \emph{FOCS}'14)}
}

@inproceedings{achlioptasAsymptoticOrderRandom2002,
  title     = {The asymptotic order of the random $k$-{SAT} threshold},
  booktitle = {FOCS},
  author = {Achlioptas, Dimitris and Moore, Cristopher},
  year      = 2002,
  pages     = {779--788},
  doi       = {10.1109/SFCS.2002.1182003}
}

@article{alon1991parallel,
  author  = {Alon, Noga},
  journal = {Random Struct. Algorithms},
  number  = {4},
  pages   = {367--378},
  title   = {A parallel algorithmic version of the local lemma},
  volume  = {2},
  year    = {1991},
  doi     = {10.1002/rsa.3240020403},
  note    = {(Conference version in \emph{FOCS}'91)}
}

@article{anand2021perfect,
  author     = {Anand, Konrad and Jerrum, Mark},
  doi        = {10.1137/21M1437433},
  journal    = {SIAM J. Comput.},
  number     = {4},
  pages      = {1280-1295},
  title      = {Perfect Sampling in Infinite Spin Systems Via Strong Spatial Mixing},
  volume     = {51},
  year       = {2022}
}

@article{barvinok2026computing,
  author  = {Barvinok, Alexander},
  title   = {Computing the probability of intersection},
  journal = {Combin. Probab. Comput.},
  year    = {2026},
  volume  = {},
  number  = {},
  pages   = {1--20},
  doi     = {10.1017/S0963548326100492}
}

@article{beck1991algorithmic,
  author  = {Beck, J{\'o}zsef},
  journal = {Random Struct. Algorithms},
  number  = {4},
  pages   = {343--365},
  title   = {An algorithmic approach to the {L}ov{\'a}sz local lemma.},
  doi     = {10.1002/rsa.3240020402},
  volume  = {2},
  year    = {1991}
}

@article{BGGGS19,
  author  = {Ivona Bez{\'{a}}kov{\'{a}} and Andreas Galanis and Leslie A. Goldberg and Heng Guo and Daniel {\v{S}}tefankovi{\v{c}}},
  journal = {{SIAM} J. Comput.},
  number  = {2},
  pages   = {279--349},
  title   = {Approximation via Correlation Decay When Strong Spatial Mixing Fails},
  volume  = {48},
  doi     = {10.1137/16M1083906},
  year    = {2019},
  note    = {(Conference version in \emph{ICALP}'16)}
}

@article{bissacot2011improvement,
  title     = {An Improvement of the {L}ov{\'{a}}sz Local Lemma via Cluster Expansion},
  author    = {Bissacot, Rodrigo and Fern{\'{a}}ndez, Roberto and Procacci, Aldo and Scoppola, Benedetto},
  journal   = {Comb. Probab. Comput.},
  volume    = {20},
  number    = {5},
  pages     = {709--719},
  year      = {2011},
  publisher = {Cambridge University Press},
  doi       = {10.1017/S0963548311000253},
  url       = {https://cambridge.org}
}

@article{borgs2013left,
  author     = {Borgs, Christian and Chayes, Jennifer and Kahn, Jeff and
                {Lov\'{a}sz}, {L\'{a}szl\'{o}}},
  title      = {Left and right convergence of graphs with bounded degree},
  journal    = {Random Struct. Algorithms},
  fjournal   = {Random Struct. \& Algorithms},
  volume     = {42},
  year       = {2013},
  number     = {1},
  pages      = {1--28},
  issn       = {1042-9832},
  mrclass    = {05C80 (60C05)},
  mrnumber   = {2999210},
  mrreviewer = {Christian Lavault},
  doi        = {10.1002/rsa.20414},
  url        = {https://doi.org/10.1002/rsa.20414}
}

@article{borgs2022potts,
  author   = {Borgs, Christian and Chayes, Jennifer and Helmuth, Tyler and Perkins, Will and Tetali, Prasad},
  title    = {Efficient sampling and counting algorithms for the {P}otts model on {$\mathbb{Z}^d$} at all temperatures},
  journal  = {Random Struct. Algorithms},
  volume   = {63},
  number   = {1},
  pages    = {130-170},
  doi      = {https://doi.org/10.1002/rsa.21131},
  url      = {https://onlinelibrary.wiley.com/doi/abs/10.1002/rsa.21131},
  eprint   = {https://onlinelibrary.wiley.com/doi/pdf/10.1002/rsa.21131},
  year     = {2023},
  note     = {(Conference version in \emph{STOC}'20)}
}

@inproceedings{bulatov2017dichotomy,
  author    = {Andrei A. Bulatov},
  title     = {A Dichotomy Theorem for Nonuniform {CSPs}},
  booktitle = {FOCS},
  pages     = {319--330},
  year      = {2017},
  doi       = {10.1109/FOCS.2017.37},
  url       = {https://doi.org},
  publisher = {IEEE}
}

@inproceedings{cannon2020counting,
  author    = {Cannon, Sarah and Perkins, Will},
  title     = {Counting independent sets in unbalanced bipartite graphs},
  booktitle = {SODA},
  pages     = {1456--1466},
  publisher = {SIAM},
  year      = {2020},
  mrclass   = {68W25},
  mrnumber  = {4141270}
}

@techreport{carter1975particle,
  title       = {Particle-transport simulation with the {M}onte {C}arlo method},
  author      = {Carter, L. L. and Cashwell, E. D.},
  year        = {1975},
  institution = {Los Alamos Scientific Lab., N.Mex. (USA)},
  doi         = {10.2172/4167844},
  url         = {https://www.osti.gov/biblio/4167844}
}

@article{chandrasekaran2013deterministic,
  author  = {Chandrasekaran, Karthekeyan and Goyal, Navin and Haeupler, Bernhard},
  title   = {Deterministic Algorithms for the {L}ovász Local Lemma},
  journal = {SIAM J. Comput.},
  volume  = {42},
  number  = {6},
  pages   = {2132-2155},
  year    = {2013},
  doi     = {10.1137/100799642},
  url     = { 
             https://doi.org/10.1137/100799642
             },
  eprint  = { 
             https://doi.org/10.1137/100799642
             },
  note    = {(Conference version in \emph{SODA}'10)}
}

@article{chen2024fast,
  author  = {Chen, Zongchen and Galanis, Andreas and Goldberg, Leslie Ann and Guo, Heng and Herrera-Poyatos, Andr{\'e}s and Mani, Nitya and Moitra, Ankur},
  title   = {Fast Sampling of Satisfying Assignments from Random {$k$-SAT} with Applications to Connectivity},
  journal = {SIAM J. Discrete Math.},
  volume  = {38},
  number  = {4},
  pages   = {2750-2811},
  year    = {2024},
  doi     = {10.1137/23M1595722},
  url     = { 
             
             https://doi.org/10.1137/23M1595722
             
             
             
             },
  eprint  = { 
             
             https://doi.org/10.1137/23M1595722
             
             
             
             }
}

@inproceedings{chen2025counting,
  author    = {Chen, Zongchen and Lonkar, Aditya and Wang, Chunyang and Yang, Kuan and Yin, Yitong},
  title     = {Counting Random {$k$-SAT} near the Satisfiability Threshold},
  year      = {2025},
  isbn      = {9798400715105},
  publisher = {ACM},
  url       = {https://doi.org/10.1145/3717823.3718163},
  doi       = {10.1145/3717823.3718163},
  booktitle = {STOC},
  pages     = {867--878},
  numpages  = {12}
}

@article{chen2026subquadratic,
  title   = {Subquadratic Counting via Perfect Marginal Sampling},
  author  = {Chen, Xiaoyu and Chen, Zongchen and Liu, Kuikui and Zhang, Xinyuan},
  journal = {arXiv preprint arXiv:2604.02235},
  year    = {2026},
  note    = {{To appear in FOCS'26}}
}

@inproceedings{coja2014asymptotic,
  author    = {{Coja-Oghlan}, Amin},
  title     = {The asymptotic $k$-{SAT} threshold},
  year      = {2014},
  isbn      = {9781450327107},
  publisher = {ACM},
  url       = {https://doi.org/10.1145/2591796.2591822},
  doi       = {10.1145/2591796.2591822},
  booktitle = {STOC},
  pages     = {804--813},
  numpages  = {10}
}

@inproceedings{cook1971complexity,
  author    = {Cook, Stephen A.},
  title     = {The complexity of theorem-proving procedures},
  year      = {1971},
  isbn      = {9781450374644},
  publisher = {ACM},
  url       = {https://doi.org/10.1145/800157.805047},
  doi       = {10.1145/800157.805047},
  booktitle = {STOC},
  pages     = {151--158},
  numpages  = {8}
}

@article{CS00,
  author  = {Czumaj, Artur and Scheideler, Christian},
  title   = {Coloring nonuniform hypergraphs: a new algorithmic approach to
             the general {L}ov\'{a}sz local lemma},
  journal = {Random Struct. Algorithms},
  volume  = {17},
  year    = {2000},
  number  = {3--4},
  pages   = {213--237},
  doi     = {10.1002/1098-2418(200010/12)17:3/4<213::AID-RSA3>3.0.CO;2-Y},
  note    = {(Conference version in \emph{SODA}'00)}
}

@article{ding2022satisfiability,
  author    = {Jian Ding and Allan Sly and Nike Sun},
  title     = {Proof of the satisfiability conjecture for large $k$},
  volume    = {196},
  journal   = {Ann. Math.},
  number    = {1},
  publisher = {Department of Mathematics of Princeton University},
  pages     = {1--388},
  year      = {2022},
  doi       = {10.4007/annals.2022.196.1.1},
  url       = {https://doi.org/10.4007/annals.2022.196.1.1}
}

@article{feder1998computational,
  author  = {Tom{\'{a}}s Feder and Moshe Y. Vardi},
  title   = {The Computational Structure of Monotone Monadic {SNP} and Constraint Satisfaction: {A} Study through Datalog and Group Theory},
  journal = {{SIAM} J. Comput.},
  volume  = {28},
  number  = {1},
  pages   = {57--104},
  year    = {1998},
  doi     = {10.1137/S0097539794266766},
  note    = {(Conference version in \emph{STOC}'93)}
}

@inproceedings{feng2021sampling,
  title     = {Sampling constraint satisfaction solutions in the local lemma regime},
  author    = {Feng, Weiming and He, Kun and Yin, Yitong},
  booktitle = {STOC},
  pages     = {1565--1578},
  publisher = {ACM},
  doi       = {10.1145/3406325.3451101},
  year      = {2021}
}

@inproceedings{feng2022improved,
  author    = {Weiming Feng and Heng Guo and Jiaheng Wang},
  booktitle = {RANDOM},
  pages     = {25:1--25:17},
  publisher = {Schloss Dagstuhl -- Leibniz-Zentrum f\"ur Informatik},
  title     = {Improved Bounds for Randomly colouring Simple Hypergraphs},
  doi       = {10.4230/LIPIcs.APPROX/RANDOM.2022.25},
  year      = {2022}
}

@article{feng2025toward,
  author  = {Feng, Weiming and Guo, Heng and Wang, Chunyang and Wang, Jiaheng and Yin, Yitong},
  title   = {Toward Derandomizing {M}arkov Chain {M}onte {C}arlo},
  journal = {SIAM J. Comput.},
  volume  = {54},
  number  = {3},
  pages   = {775--813},
  year    = {2025},
  doi     = {10.1137/24M1663806},
  url     = {https://epubs.siam.org/doi/abs/10.1137/24M1663806},
  note    = {(Conference version in \emph{FOCS}'23)}
}

@article{FGYZ20,
  author   = {Feng, Weiming and Guo, Heng and Yin, Yitong and Zhang, Chihao},
  fjournal = {Journal of the ACM},
  journal  = {J. ACM},
  number   = {6},
  pages    = {Art. 40, 42},
  title    = {Fast sampling and counting {$k$}-{SAT} solutions in the local lemma regime},
  volume   = {68},
  doi      = {10.1145/3469832},
  note     = {(Conference version in \emph{STOC}'20)},
  year     = {2021}
}

@article{friedgut1999sharp,
  issn      = {08940347, 10886834},
  url       = {http://www.jstor.org/stable/2646096},
  author    = {Ehud Friedgut and Jean Bourgain},
  journal   = {J. Am. Math. Soc.},
  number    = {4},
  pages     = {1017--1054},
  publisher = {American Mathematical Society},
  title     = {Sharp Thresholds of Graph Properties, and the $k$-{SAT} Problem},
  urldate   = {2024-10-13},
  volume    = {12},
  year      = {1999}
}

@article{galanis2016inapproximability,
  author  = {Galanis, Andreas and {\v{S}}tefankovi{\v{c}}, Daniel and Vigoda, Eric},
  journal = {Comb. Probab. Comput.},
  number  = {4},
  pages   = {500--559},
  title   = {Inapproximability of the partition function for the antiferromagnetic {I}sing and hard-core models},
  volume  = {25},
  year    = {2016}
}

@article{galanis2023inapproximability,
  author     = {Galanis, Andreas and Guo, Heng and Wang, Jiaheng},
  title      = {Inapproximability of Counting Hypergraph colorings},
  year       = {2023},
  issue_date = {December 2022},
  publisher  = {ACM},
  address    = {New York, NY, USA},
  volume     = {14},
  number     = {3–4},
  issn       = {1942-3454},
  url        = {https://doi.org/10.1145/3558554},
  doi        = {10.1145/3558554},
  journal    = {ACM Trans. Comput. Theory},
  articleno  = {10},
  pages      = {1--33}
}

@article{GGGY21,
  title     = {Counting solutions to random {SAT} formulas},
  author    = {Galanis, Andreas and Goldberg, Leslie Ann and Guo, Heng and Yang, Kuan},
  journal   = {SIAM J. Comput.},
  volume    = {50},
  number    = {6},
  pages     = {1701--1738},
  year      = {2021},
  publisher = {SIAM},
  note      = {(Conference version in \emph{ICALP}'20)}
}

@article{GJL19,
  author   = {Guo, Heng and Jerrum, Mark and Liu, Jingcheng},
  title    = {Uniform sampling through the {L}ov\'{a}sz local lemma},
  journal  = {J. ACM},
  fjournal = {Journal of the ACM},
  volume   = {66},
  year     = {2019},
  number   = {3},
  pages    = {Art. 18, 31},
  issn     = {0004-5411},
  mrclass  = {68W20 (05D40 60C05)},
  mrnumber = {3941342},
  doi      = {10.1145/3310131},
  note     = {(Conference version in \emph{STOC}'17)}
}

@article{guo2019counting,
  title     = {Counting hypergraph colorings in the local lemma regime},
  author    = {Guo, Heng and Liao, Chao and Lu, Pinyan and Zhang, Chihao},
  journal   = {SIAM J. Comput.},
  volume    = {48},
  number    = {4},
  pages     = {1397--1424},
  year      = {2019},
  doi       = {10.1137/18M1202955},
  note      = {(Conference version in \emph{STOC}'18)},
  publisher = {SIAM}
}

@article{haeupler2011new,
  author    = {Bernhard Haeupler and
               Barna Saha and
               Aravind Srinivasan},
  title     = {New Constructive Aspects of the {L}ov{\'{a}}sz Local Lemma},
  journal   = {J. {ACM}},
  volume    = {58},
  number    = {6},
  pages     = {28:1--28:28},
  year      = {2011},
  doi       = {10.1145/2049697.2049702},
  bibsource = {dblp computer science bibliography, https://dblp.org},
  note      = {(Conference version in \emph{FOCS}'10)}
}

@article{harris2019moser,
  author    = {Harris, David G. and Srinivasan, Aravind},
  title     = {The {Moser--Tardos} Framework with Partial Resampling},
  year      = {2019},
  publisher = {ACM},
  volume    = {66},
  number    = {5},
  issn      = {0004-5411},
  url       = {https://doi.org/10.1145/3342222},
  doi       = {10.1145/3342222},
  journal   = {J. ACM},
  articleno = {36},
  numpages  = {45},
  note      = {(Conference version in \emph{FOCS}'13)}
}

@article{harvey2020algorithmic,
  author  = {Harvey, Nicholas J. A. and Vondr{\'a}k, Jan},
  title   = {An Algorithmic Proof of the {L}ovász Local Lemma via Resampling Oracles},
  journal = {SIAM J. Comput.},
  volume  = {49},
  number  = {2},
  pages   = {394-428},
  year    = {2020},
  doi     = {10.1137/18M1167176},
  url     = { 
             
             https://doi.org/10.1137/18M1167176
             
             
             
             },
  eprint  = { 
             
             https://doi.org/10.1137/18M1167176
             
             
             
             },
  note    = {(Conference version in \emph{FOCS}'15)}
}

@inproceedings{he2022counting,
  author    = {Kun He and Chunyang Wang and Yitong Yin},
  title     = {Deterministic counting {L}ovász local lemma beyond linear programming},
  author+an = {2=thesisauthor},
  booktitle = {SODA},
  publisher = {SIAM},
  chapter   = {},
  pages     = {3388--3425},
  year      = {2023},
  doi       = {10.1137/1.9781611977554.ch130}
}

@inproceedings{he2022sampling,
  author    = {He, Kun and Wang, Chunyang and Yin, Yitong},
  booktitle = {FOCS},
  author+an = {2=thesisauthor},
  publisher = {IEEE},
  title     = {Sampling {L}ovász local lemma for general constraint satisfaction solutions in near-linear time},
  year      = {2022},
  volume    = {},
  number    = {},
  pages     = {147--158},
  doi       = {10.1109/FOCS54457.2022.00021}
}

@inproceedings{he2023improved,
  author    = {Kun He and Kewen Wu and Kuan Yang},
  title     = {Improved Bounds for Sampling Solutions of Random {CNF} Formulas},
  booktitle = {SODA},
  publisher = {SIAM},
  chapter   = {},
  pages     = {3330--3361},
  year      = {2023},
  doi       = {10.1137/1.9781611977554.ch128}
}

@article{helmuth2020algorithmic,
  title     = {Algorithmic {P}irogov-{S}inai theory},
  author    = {Helmuth, Tyler and Perkins, Will and Regts, Guus},
  journal   = {Probability Theory and Related Fields},
  volume    = {176},
  number    = {3--4},
  pages     = {851--895},
  year      = {2020},
  publisher = {Springer},
  note      = {(Conference version in \emph{STOC}'19)}
}

@article{HSW21,
  archiveprefix = {arXiv},
  author        = {Kun He and Xiaoming Sun and Kewen Wu},
  eprint        = {2107.03932},
  journal = {arXiv preprint arXiv:2107.03932},
  title         = {Perfect sampling for (atomic) {L}ov\'{a}sz local lemma},
  eprintclass   = {cs.DS},
  year          = {2021}
}

@article{HSZ19,
  author  = {Jonathan Hermon and Allan Sly and Yumeng Zhang},
  journal = {Random Struct. Algorithms},
  number  = {4},
  pages   = {730--767},
  title   = {Rapid mixing of hypergraph independent sets},
  volume  = {54},
  year    = {2019}
}

@article{jenssen2020algorithms,
  author   = {Jenssen, Matthew and Keevash, Peter and Perkins, Will},
  title    = {Algorithms for \#{BIS}-hard problems on expander graphs},
  journal  = {SIAM J. Comput.},
  fjournal = {SIAM Journal on Computing},
  volume   = {49},
  year     = {2020},
  number   = {4},
  pages    = {681--710},
  issn     = {0097-5397},
  mrclass  = {68Q25 (68Q87 82B20)},
  mrnumber = {4118344},
  doi      = {10.1137/19M1286669},
  url      = {https://doi.org/10.1137/19M1286669},
  note     = {(Conference version in \emph{SODA}'19)}
}

@article{kirousis1998approximate,
  author   = {Kirousis, Lefteris M. and Kranakis, Evangelos and Krizanc, Danny and Stamatiou, Yannis C.},
  title    = {Approximating the unsatisfiability threshold of random formulas},
  journal  = {Random Struct. Algorithms},
  volume   = {12},
  number   = {3},
  pages    = {253-269},
  doi      = {https://doi.org/10.1002/(SICI)1098-2418(199805)12:3<253::AID-RSA3>3.0.CO;2-U},
  year     = {1998}
}

@inproceedings{Kolipaka2011MoserAT,
  author    = {Kashyap Babu Rao Kolipaka and Mario Szegedy},
  booktitle = {STOC},
  pages     = {235--244},
  title     = {Moser and {T}ardos meet {L}ov{\'a}sz},
  doi       = {10.1145/1993636.1993669},
  year      = {2011}
}

@article{kolmogorov2018commutativity,
  author  = {Kolmogorov, Vladimir},
  title   = {Commutativity in the Algorithmic {L}ovász Local Lemma},
  journal = {SIAM J. Comput.},
  volume  = {47},
  number  = {6},
  pages   = {2029--2056},
  year    = {2018},
  doi     = {10.1137/16M1093306},
  url     = { 
             
             https://doi.org/10.1137/16M1093306
             
             
             
             },
  eprint  = { 
             
             https://doi.org/10.1137/16M1093306
             
             
             
             },
  note    = {(Conference version in \emph{FOCS}'16)}
}

@article{Kotecky1986cluster,
  author  = {Koteck{\'y}, R. and Preiss, D.},
  title   = {Cluster expansion for abstract polymer models},
  journal = {Commun. Math. Phys.},
  volume  = {103},
  number  = {3},
  pages   = {491--498},
  year    = {1986}
}

@inproceedings{li2013correlation,
  author    = {Li, Liang and Lu, Pinyan and Yin, Yitong},
  title     = {Correlation Decay up to Uniqueness in Spin Systems},
  year      = {2013},
  isbn      = {9781611972511},
  publisher = {SIAM},
  booktitle = {SODA},
  pages     = {67--84},
  numpages  = {18},
  location  = {New Orleans, Louisiana}
}

@inproceedings{liu2026local,
  author    = {Hongyang Liu and Chunyang Wang and Yitong Yin},
  title     = {Local {G}ibbs Sampling beyond Local Uniformity},
  booktitle = {SODA},
  pages     = {996--1025},
  year      = {2026},
  doi       = {10.1137/1.9781611978971.41},
  publisher = {SIAM},
  url       = {https://epubs.siam.org/doi/10.1137/1.9781611978971.41}
}

@article{LocalLemma,
  author  = {Erd\H{o}s, Paul and Lov\'asz, L\'aszl\'o},
  journal = {Infinite and finite sets, volume 10 of Colloquia Mathematica Societatis J\'anos Bolyai},
  pages   = {609-628},
  title   = {Problems and results on 3-chromatic Hypergraphs and some related questions},
  url     = {https://www.semanticscholar.org/paper/Problems-and-Results-on-3-chromatic-Hypergraphs-and-Lov\%C3\%A1sz/65e463b2e1ec161e3263029f1030f318ef06669d},
  year    = {1975}
}

@article{lyne2015russian,
  author  = {Lyne, Anne-Marie and Girolami, Mark and Atchad{\'e}, Yves and Strathmann, Heiko and Simpson, Daniel},
  title   = {On {R}ussian Roulette Estimates for {B}ayesian Inference with Doubly-Intractable Likelihoods},
  journal = {Stat. Sci.},
  volume  = {30},
  number  = {4},
  pages   = {443--467},
  year    = {2015},
  doi     = {10.1214/15-STS523},
  url     = {https://projecteuclid.org/journals/statistical-science/volume-30/issue-4/On-Russian-Roulette-Estimates-for-Bayesian-Inference-with-Doubly-Intractable/10.1214/15-STS523.full}
}

@article{mann2025approximate,
  title         = {Approximate Counting in Local Lemma Regimes},
  author        = {Ryan L. Mann and Gabriel Waite},
  year          = {2025},
  eprint        = {2512.10134},
  archiveprefix = {arXiv},
  journal       = {arXiv preprint arXiv:2512.10134},
  primaryclass  = {cs.DS},
  url           = {https://arxiv.org/abs/2512.10134}
}

@article{mcleish2011general,
  author  = {Don McLeish},
  title   = {A General Method for Debiasing a {Monte Carlo} Estimator},
  journal = {Monte Carlo Methods Appl.},
  volume  = {17},
  number  = {4},
  pages   = {301--315},
  year    = {2011},
  doi     = {10.1515/mcma.2011.013}
}

@article{Moi19,
  author  = {Ankur Moitra},
  journal = {J. {ACM}},
  number  = {2},
  pages   = {10:1--10:25},
  title   = {Approximate Counting, the {L}ov{\'{a}}sz Local Lemma, and Inference in Graphical Models},
  volume  = {66},
  year    = {2019},
  doi     = {10.1145/3268930},
  note    = {(Conference version in \emph{STOC}'17)}
}

@inproceedings{molloy1998further,
  author    = {Molloy, Michael and Reed, Bruce},
  booktitle = {STOC},
  pages     = {524--529},
  title     = {Further algorithmic aspects of the local lemma},
  doi       = {10.1145/276698.276866},
  year      = {1998}
}

@inproceedings{moser2009constructive,
  author    = {Moser, Robin A.},
  booktitle = {STOC},
  publisher = {ACM},
  pages     = {343--350},
  title     = {A constructive proof of the {L}ov{\'a}sz local lemma},
  doi       = {10.1145/1536414.1536462},
  year      = {2009}
}

@article{moser2010constructive,
  author        = {Moser, Robin A. and Tardos, G{\'a}bor},
  journal       = {J. {ACM}},
  number        = {2},
  pages         = {11:1-11:15},
  publisher     = {ACM},
  title         = {A constructive proof of the general {L}ov{\'a}sz Local Lemma},
  volume        = {57},
  doi           = {10.1145/1667053.1667060},
  year          = {2010}
}

@inproceedings{qiu2022perfect,
  author    = {Guoliang Qiu and
               Yanheng Wang and
               Chihao Zhang},
  title     = {A Perfect Sampler for Hypergraph Independent Sets},
  booktitle = {ICALP},
  pages     = {103:1--103:16},
  publisher = {Schloss Dagstuhl -- Leibniz-Zentrum f\"ur Informatik},
  year      = {2022}
}

@article{rhee2015unbiased,
  title = {Unbiased {{Estimation}} with {{Square Root Convergence}} for {{SDE Models}}},
  author = {Rhee, Chang-Han and Glynn, Peter W.},
  year = 2015,
  journal = {Operations Research},
  volume = {63},
  number = {5},
  pages = {1026--1043},
  issn = {0030-364X, 1526-5463},
  doi = {10.1287/opre.2015.1404},
  langid = {english}
}

@inproceedings{schaefer1978complexity,
  author    = {Schaefer, Thomas J.},
  title     = {The complexity of satisfiability problems},
  booktitle = {STOC},
  year      = {1978},
  pages     = {216--226},
  publisher = {ACM}
}

@article{scott2005repulsive,
  title     = {The repulsive lattice gas, the independent-set polynomial, and the {L}ov{\'a}sz local lemma},
  author    = {Scott, Alexander D. and Sokal, Alan D.},
  journal   = {J. Stat. Phys.},
  volume    = {118},
  number    = {5--6},
  pages     = {1151--1261},
  year      = {2005},
  publisher = {Springer},
  doi       = {10.1007/s10955-004-2055-4},
  url       = {https://link.springer.com/article/10.1007/s10955-004-2055-4}
}

@article{shearer85,
  author   = {Shearer, James B.},
  title    = {On a problem of {S}pencer},
  journal  = {Combinatorica},
  fjournal = {Combinatorica. An International Journal of the J\'{a}nos Bolyai
              Mathematical Society},
  volume   = {5},
  year     = {1985},
  number   = {3},
  pages    = {241--245},
  issn     = {0209-9683},
  doi      = {10.1007/BF02579368}
}

@article{sinclair2014approximation,
  author   = {{Sinclair}, Alistair and {Srivastava}, Piyush and {Thurley}, Marc},
  title    = {Approximation Algorithms for Two-State Anti-Ferromagnetic Spin Systems on Bounded Degree Graphs},
  journal  = {J. Stat. Phys.},
  year     = 2014,
  volume   = {155},
  number   = {4},
  pages    = {666--686},
  doi      = {10.1007/s10955-014-0947-5},
  adsurl   = {https://ui.adsabs.harvard.edu/abs/2014JSP...155..666S},
  note     = {(Conference version in \emph{SODA}'12)}
}

@inproceedings{sly2010computation,
  author    = {Sly, Allan},
  booktitle = {FOCS},
  title     = {Computational Transition at the Uniqueness Threshold},
  year      = {2010},
  publisher = {IEEE},
  volume    = {},
  number    = {},
  pages     = {287--296},
  doi       = {10.1109/FOCS.2010.34}
}

@article{sly2014computational,
  title     = {Counting in two-spin models on $d$-regular graphs},
  author    = {Sly, Allan and Sun, Nike},
  journal   = {Ann. Probab.},
  volume    = {42},
  number    = {6},
  pages     = {2383--2416},
  year      = {2014},
  publisher = {Institute of Mathematical Statistics},
  doi       = {10.1214/13-AOP888},
  url       = {https://doi.org},
  note      = {(Conference version in \emph{FOCS}'12)}
}

@inproceedings{Sri08,
  author    = {Srinivasan, Aravind},
  title     = {Improved algorithmic versions of the {L}ov\'{a}sz local lemma},
  booktitle = {SODA},
  pages     = {611--620},
  publisher = {SIAM},
  year      = {2008},
  mrclass   = {68W05 (05D40 68R05)},
  mrnumber  = {2487630}
}

@article{stefankovic2009adpative,
  author   = {{\v{S}}tefankovi\v{c}, Daniel and Vempala, Santosh and Vigoda, Eric},
  title    = {Adaptive simulated annealing: a near-optimal connection
              between sampling and counting},
  journal  = {J. ACM},
  fjournal = {Journal of the ACM},
  volume   = {56},
  year     = {2009},
  number   = {3},
  pages    = {Art. 18, 36},
  issn     = {0004-5411},
  mrclass  = {68T05 (60C05 62D05)},
  mrnumber = {2536133},
  doi      = {10.1145/1516512.1516520},
  url      = {https://doi.org/10.1145/1516512.1516520}
}

@article{vishesh21sampling,
  author      = {Vishesh Jain and Huy Tuan Pham and Thuy-Duong Vuong},
  eprint      = {2102.08342},
  journal     = {arXiv preprint arXiv:2102.08342},
  title       = {On the sampling {L}ov{\'{a}}sz Local Lemma for atomic constraint satisfaction problems},
  eprintclass = {cs.DS},
  year        = {2021}
}

@inproceedings{vishesh21towards,
  author    = {Vishesh Jain and
               Huy Tuan Pham and
               Thuy-Duong Vuong},
  title     = {Towards the sampling {L}ov{\'{a}}sz Local Lemma},
  booktitle = {FOCS},
  pages     = {173--183},
  publisher = {IEEE},
  year      = {2021},
  doi       = {10.1109/FOCS52979.2021.00025}
}

@inproceedings{wang2024sampling,
  author    = { Wang, Chunyang and Yin, Yitong },
  author+an = {1=thesisauthor},
  booktitle = { FOCS},
  title     = { A Sampling {L}ovász Local Lemma for Large Domain Sizes },
  year      = {2024},
  volume    = {},
  issn      = {},
  pages     = {129--150},
  doi       = {10.1109/FOCS61266.2024.00019},
  url       = {https://doi.ieeecomputersociety.org/10.1109/FOCS61266.2024.00019},
  publisher = {IEEE}
}

@inproceedings{weitz06counting,
  author    = {Dror Weitz},
  booktitle = {{STOC}},
  pages     = {140--149},
  publisher = {ACM},
  title     = {Counting independent sets up to the tree threshold},
  year      = {2006}
}

@article{zhuk2020dichotomy,
  author    = {Dmitriy Zhuk},
  title     = {A Proof of the {CSP} Dichotomy Conjecture},
  journal   = {J. {ACM}},
  volume    = {67},
  number    = {5},
  pages     = {30:1--30:78},
  year      = {2020},
  publisher = {ACM},
  note      = {(Conference version in \emph{FOCS}'17)}
}

\end{document}